\theoremstyle{plain}
\newtheorem{theorem}{Theorem}
\newtheorem{corollary}{Corollary}
\newtheorem{lemma}{Lemma}
\newtheorem{proposition}{Proposition}
\theoremstyle{definition}
\newtheorem{definition}{Definition}
\theoremstyle{remark}
\newcommand{\Oh}{\mathcal{O}}
\def\AA{\mathcal{A}}
\def\CC{\mathcal{C}}
\def\CCtw{\mathcal{C}^{\textrm{tw}}}
\def\FF{\mathcal{F}}
\def\WW{\mathcal{W}}
\def\NN{\mathbb{N}}
\def\RR{\mathbb{R}}
\def\ZZ{\mathbb{Z}}
\def\EF{Ehrenfeucht-Fraïssé }
\def\MSO{\textrm{MSO\xspace{}}}
\def\MSOk{\rm MSO[{\it k}]}
\def\MSOkeq{\equiv^{MSO}_k}
\def\equivMSOkt{\equiv^{MSO}_{k}}
\def\equivMSOktplus{\equiv^{MSO}_{k}}
\newcommand{\conv}{\mathop{\mathrm{conv}}}
\newcommand{\suppo}{\mathop{\mathrm{supp}}} 
\newcommand{\vertexset}{\mathop{\mathrm{vert}}}
\newcommand{\xc}{\mathop{\mathrm{xc}}} 
\newcommand{\xcdec}{\xc_{dec}} 
\begin{document}

\title{Extension Complexity, MSO Logic, and Treewidth\thanks{A preliminary version of this paper appeared at the
		15th Scandinavian Symposium and Workshops on Algorithm Theory (SWAT 2016). Author emails are \url{{kolman,hansraj}@kam.mff.cuni.cz} and \url{koutecky@iuuk.mff.cuni.cz}}.}
\received{2019-06-18}
\revised{2020-07-01}
\accepted{2020-09-15}

\author{Petr Kolman\affiliationmark{1} \and
Martin Kouteck\' y\affiliationmark{2}\thanks{The research of M. Koutecký was partially supported by Charles University project UNCE/SCI/004, by the project 19-27871X of GA ČR, and by Israel Science Foundation grant 308/18.} \and
Hans Raj Tiwary\affiliationmark{1}}
\affiliation{
	Dept. of Applied Mathematics, Faculty of Mathematics and Physics, Charles University, Prague, Czech Republic \\
	Computer Science Institute, Faculty of Mathematics and Physics, Charles University, Prague, Czech Republic}



\maketitle

\begin{abstract}
We consider the convex hull $P_{\varphi}(G)$ of
all satisfying assignments of a given MSO formula $\varphi$ on a given
graph $G$. We show that there exists an extended formulation of the polytope $P_{\varphi}(G)$ that can be described by $f(|\varphi|,\tau)\cdot n$ inequalities, where $n$ is the number of vertices in $G$,
$\tau$ is the treewidth of $G$ and $f$ is a computable function depending only on $\varphi$ and $\tau.$

In other words, we prove that the extension complexity of $P_{\varphi}(G)$ is
linear in the size of the graph $G$, with a constant depending on the treewidth
of $G$ and the formula $\varphi$.  This provides a very general yet very simple
meta-theorem about the extension complexity of polytopes related to a wide
class of problems and graphs.
As a corollary of our main result, we obtain an analogous result 
on the wider class of graphs of bounded cliquewidth.

Furthermore, we study our main geometric tool which we term the glued product of polytopes.
While the glued product of polytopes has been known since the '90s, we are the first to show that it preserves decomposability and boundedness of treewidth of the constraint matrix.
This implies that our extension of $P_\varphi(G)$ is decomposable and has a constraint matrix of bounded treewidth; so far only few classes of polytopes are known to be decomposable.
These properties make our extension useful in the construction of algorithms.
\end{abstract}

\keywords{Extension Complexity, FPT, Courcelle's Theorem, MSO Logic}


\section{Introduction}

In the '70s and '80s, it was repeatedly observed that various NP-hard
problems are solvable in polynomial time on graphs resembling trees. 
The graph property of {\em resembling a tree} was eventually formalized
as having bounded treewidth, and in the beginning of the '90s,
the class of problems efficiently solvable on graphs of bounded
treewidth was shown to contain the class of problems definable by
the Monadic Second Order Logic (MSO) (Courcelle~\cite{Courcelle:90}, 
Arnborg et al. \cite{ALS:91}, Courcelle and Mosbah \cite{CM:93}).
Using similar techniques, analogous results for weaker
logics were then proven for wider graph classes such as graphs of
bounded cliquewidth and rankwidth \cite{CMR:98}. 
Results of this kind
are usually referred to as \textit{Courcelle's theorem} for a specific
class of structures.

In this paper we study the class of problems definable by the MSO logic from
the perspective of extension complexity.
While small extended formulations are known for various special classes of 
polytopes, we are not aware of any other result in the theory of extended 
formulations that works on a wide class of polytopes the way Courcelle's 
theorem works for a wide class of problems and graphs.



\subsection{Our Contribution.}
Our contribution is two-fold.
First, we prove that satisfying assignments of an MSO formula $\varphi$ on a graph of
bounded treewidth can be {\em expressed} by a ``small'' linear program.
More
precisely, there exists a computable function $f$ such that the convex hull, which we denote
$P_{\varphi}(G)$, of satisfying assignments of $\varphi$ on a graph $G$ on $n$
vertices with treewidth $\tau$ can be obtained as the projection of a polytope described by $f(|\varphi|, \tau)\cdot n$
linear inequalities.
All our results can be extended to general finite structures where the
restriction on treewidth applies to the treewidth of their Gaifman graph~\cite{Libkin:04}.

Our proof essentially works by ``merging the common wisdom'' from the areas of 
extended formulations and fixed parameter tractability. It is known that 
dynamic programming can usually be turned into a compact extended formulation 
\cite{Kaibel11,Martin:1990}, and that Courcelle's theorem can be seen as an
instance of dynamic 
programming \cite{Kreutzer:08}, and therefore it should be
expected that the polytope of 
satisfying assignments of an MSO formula of a bounded treewidth graph be small.

However, there are a few roadblocks in trying to merge these two folklore 
wisdoms. For one, while Courcelle's theorem being an instance of dynamic 
programming in some sense may be obvious to an FPT theorist, it is far from 
clear to anyone else what that sentence may even mean. On the other hand, 
being able to turn a dynamic program into a compact polytope may be a 
theoretical possibility for an expert on extended formulations, but it is by 
no means an easy statement for an outsider to comprehend. What complicates 
the matters even further is that the result of Martin et~al.~\cite{Martin:1990} 
is not a result that can be used in a black box fashion. That is, a certain 
condition must be satisfied to get a compact extended formulation out of a 
dynamic program. This is far from a trivial task, especially for a theorem 
like Courcelle's theorem.

Our second contribution regards the main geometric tool used to prove the first result, which we term the glued product of polytopes.
This tool has been known since the work of Margot~\cite{Margot_thesis}, but we additionally show that in a particular special case it behaves well with respect to the extension complexity of polytopes, their decomposability, and treewidth.
Only few classes of polytopes have been known to be decomposable so far, as we shall discuss later.
This better understanding of the glued product then allows us to prove more about the structure of $P_\varphi(G)$, which has been useful in the construction of parameterized algorithms~\cite{KnopKMT:17,GajarskyHKO:17}, as we discuss next.

It should be noted that, similarly as in the case of Courcelle's theorem, the dependency on $\tau$ and $|\varphi|$ in the size of our extended formulation can be as bad as an exponential tower~\cite{FrickGrohe}.
However, in the case of Courcelle's theorem, it was shown that for many important graph problems (such as \textsc{Vertex Cover}, \textsc{Dominating Set} or \textsc{Coloring}) this blow-up does not occur, mainly because the number of quantifier alternations is small~\cite{LRRS:14}; using the same arguments, 
analogous observations can be made in our case as well.

\subsection{Applications}
Our results have already found at least two applications.
Knop et al.~\cite{KnopKMT:17} study the complexity of strenghtenings of the MSO logic on graphs of bounded treewidth.
They build upon our result that $P_\varphi(G)$ not only has a compact extended formulation, but that this formulation is additionally defined by a matrix with small treewidth.
This allows them to show new analogues of Courcelle's theorem for stronger logics elegantly and without explicitely dealing with the often cumbersome details.
Moreover, Koutecký shows in his PhD thesis~\cite{Koutecky:2017} constructions of compact formulations of polytopes corresponding to problems definable in these strenghtenings of MSO. 
Gajarský et al.~\cite{GajarskyHKO:17} study a broad optimization problem generalizing MSO-partitioning, which asks to partition a graph $G$ into $r$ sets satisfying a given MSO formula.
They build upon our result that $P_\varphi(G)$ has a \emph{decomposable} compact formulation, and, again, that it is definable by a matrix with small treewidth.
This allows efficient solution of a wide class of problems related to vulnerability~\cite{AssadiENYZ:14,OmranSZ:2013}, congestion~\cite{Rosenthal:1973} and others; we are not aware of any similar result.

\subsection{Organization}

The rest of the article is organized as follows. In Section \ref{sec:related_work} 
we review some previous work related to Courcelle's theorem and 
extended formulations. In Section \ref{sec:prelim} we describe the relevant 
notions related to polytopes, extended formulations, graphs, treewidth
and MSO logic. In Section~\ref{sec:glued_product} we study
several properties of the glued product of polytopes.
In Section \ref{sec:xc_mso} we prove the 
existence of compact extended formulations for $P_\varphi(G)$ parameterized by 
the length of the given MSO formula and the treewidth of the given graph. In 
Section \ref{sec:ef_construction} we describe how to efficiently construct 
such a polytope given a tree decomposition of a graph and apply our
findings from Section~\ref{sec:glued_product}. 
Finally, in Section~\ref{sec:extensions}, we prove additional properties of $P_\varphi(G)$, show applicability of our proof to graphs of bounded cliquewidth, and obtain an optimization version of Courcelle's theorem in a particularly simple way.

\section{Related Work}\label{sec:related_work}

\subsection{MSO Logic vs. Treewidth}

Because of the wide relevance of the treewidth parameter in many
areas (cf. survey by Bodlaender~\cite{Bodlaender:06}) and the large
expressivity of the MSO and its extensions (cf. the survey of
Langer et al.~\cite{LRRS:14}), considerable attention was given to
Courcelle's theorem by theorists from various fields, reinterpreting
it into their own setting. These reinterpretations helped uncover several
interesting connections.

The classical way of proving Courcelle's theorem is constructing a tree
automaton $A$ in time only dependent on $\varphi$ and the treewidth $\tau$,
such that $A$ accepts a tree decomposition of a graph of treewidth $\tau$ if
and only if the corresponding graph satisfies $\varphi$; this is the automata
theory perspective~\cite{Courcelle:90}. Another perspective comes
from finite model theory
where one can prove that a certain equivalence on the set of graphs of
treewidth at most $\tau$ has only finitely many (depending on $\varphi$ and
$\tau$) equivalence classes and that it {\em behaves well} \cite{GPW:07}.
Another approach proves that a quite different equivalence on so-called
extended model checking games has finitely many equivalence
classes~\cite{KLR:11} as well; this is the game-theoretic perspective. It can
be observed that the finiteness in either perspective stems from the same
roots.

Another related result is an {\em expressivity} result: Gottlob et
al.~\cite{GPW:07} prove that on bounded treewidth graphs, a certain subset of
the database query language Datalog has the same expressive power as the MSO.
This provides an interesting connection between automata theory and 
database theory.

We note that several implementations of Courcelle's theorem have been developed in the recent years which are usable on quite large instances and even beat ILP solvers on specific problems such as \textsc{Connected Dominating Set}~\cite{BannachB19,LangerRRS12,LRRS:14}.

\subsection{Extended Formulations}

Sellmann, Mercier, and Leventhal \cite{SML:07} claimed to show compact extended 
formulation for binary Constraint Satisfaction Problems (CSP) for graphs of bounded treewidth, but their proof is not 
correct~\cite{Sellmann:08}.
The first two authors of this paper gave extended 
formulations for CSP that has polynomial size 
for instances whose constraint graph has bounded treewidth~\cite{KK:15} using 
a different technique. Bienstock and Munoz~\cite{BM:15} prove similar results
for 
the approximate and exact version of the problem. In the exact case, Bienstock 
and Munoz's bounds are slightly worse than those of Kolman and Kouteck\'y~\cite{KK:15}. 
It is worth noting that the satisfiability of CSPs of constant domain 
size can be expressed in MSO logic.
Laurent~\cite{Laurent:09} provides extended formulations for the
independent set and max cut polytopes of size $\Oh(2^{\tau}n)$ for
$n$-vertex graphs of treewidth $\tau$
and, independently, Buchanan and Butenko \cite{BB:14} provide
an extended formulation for the independent set polytope of the same
size.
Gajarský et al.~\cite{GajarskyHT18} have shown that the stable set polytope has a compact extended formulation for graphs of bounded expansion, and that it has no extended formulation of size $f(k) \cdot \textrm{poly}(n)$ for any function $f$ where $k$ is the size of the stable set.
Faenza et al.~\cite{FaenzaMP2018} study the limits of using treewidth for obtaining extended formulations.

A lot of recent work on extended formulations has focussed on establishing 
lower bounds in various settings: exact, approximate, linear vs. 
semidefinite, etc. (See for example \cite{FMPTW15,AT2013,BraunFPS15,LeeRS15}). A wide variety of tools have been developed 
and used for these results including connections to nonnegative matrix 
factorizations \cite{Yannakakis91}, communication complexity~\cite{FaenzaFGT15}, information theory 
\cite{BraunJLP17}, and quantum communication \cite{FMPTW15} among others.

For proving upper bounds on extended formulations, several authors have 
proposed various tools as well. Kaibel and Loos \cite{KaibelL10} describe a setting of 
branched polyhedral systems which was later used by Kaibel and Pashkovich 
\cite{KaibelP11} to provide a way to construct polytopes using reflection relations. 

A specific composition rule, which we term glued product (cf.
Section~\ref{sec:glued_product}), was studied by Margot in his PhD thesis \cite{Margot_thesis}. Margot
showed that a property called the projected face property suffices to glue two
polytopes efficiently. Conforti and Pashkovich~\cite{CP12} describe and strengthen
Margot's result to make the projected face property to be a necessary and
sufficient condition to describe the glued product in a particularly efficient
way. 

Martin et al.~\cite{Martin:1990} have shown that under certain conditions, an
efficient dynamic programming based algorithm can be turned into a compact
extended formulation. Kaibel \cite{Kaibel11} summarizes this and various other methods.
Other results showing some kind of general way of constructing a compact extended formulations are due to Fiorini et al.~\cite{FioriniHW2017}, who define a hierarchy of increasingly accurate polytopes approximating some target polytope, and Aprile and Faenza~\cite{AprileF2019}, who show an output-efficient way to construct an extended formulation from a communication protocol.

\section{Preliminaries} \label{sec:prelim}

\subsection{Polytopes, Extended Formulations and Extension Complexity}\label{subsec:polytopes}

For background on polytopes we refer the 
reader to Gr\"unbaum \cite{gruenbaum} and Ziegler \cite{ziegler}. 
To simplify reading of the paper for audience not working 
often in the area of polyhedral combinatorics, we provide here a brief
glossary of common polyhedral notions that are used in this article.

 A {\em hyperplane} in $\mathbb{R}^n$ is a closed convex set of the form
$\{x|a^\intercal x = b\}$ where $a \in \RR^n, b\in\RR$.
 A {\em halfspace} in $\mathbb{R}^n$ is a closed convex set of the form
$\{x|a^\intercal x\leqslant b\}$ where $a \in \RR^n, 
b\in\RR$. The inequality $a^\intercal x\leqslant b$ is said to define
the corresponding halfspace.  
A {\em polytope} $P\subseteq\RR^n$ is a bounded
subset defined by intersection of finite number of halfspaces. 
A polytope $P$ is a \emph{$0/1$ polytope} if $P$ is the convex hull of a finite set of $0/1$-vectors.
A result of
Minkowsky-Weyl states that equivalently, every polytope is the convex hull of a
finite number of points.
Let $h$ be a halfspace defined by an
inequality $a^\intercal x\leqslant b$; the inequality is said to be {\em valid} for a
polytope $P$ if $P=P\cap h.$
Let $a^\intercal x \leqslant b$ be a  valid inequality for polytope $P$; 
then, $P\cap \{x|a^\intercal x=b\}$ is said to be
a {\em face} of $P$.  
For a vector $x \in \RR^{n}$, we use $x[i]$ to denote its $i$-th coordinate.

Note that, taking $a$ to be the zero vector and $b=0$ results in the face being
$P$ itself. Also, taking $a$ to be the zero vector and $b=1$ results in the
empty set. These two faces are often called the trivial faces and they are polytopes
``living in'' dimensions $n$ and $-1$, respectively. Every face -- that is not trivial -- is itself a polytope of dimension  $d$ where $0\leqslant d\leqslant
n-1$.

It is not uncommon to refer to three separate (but related) objects as a face:
the actual face as defined above, the valid inequality defining it, and the
equation corresponding to the valid inequality. While this is clearly a misuse
of notation, the context usually makes it clear as to exactly which object is
being referred to.

The zero dimensional faces of a polytope
are called its {\em vertices}, and the $(n-1)$-dimensional faces are called its
{\em facets}. 

Let $P$ be a polytope in $\RR^d$. A polytope $Q$ in $\RR^{d+r}$ is called an 
\textit{extended formulation} or an \textit{extension} of $P$ if $P$ is a
projection of $Q$ 
onto the first $d$ coordinates.
Note that for any linear map $\pi:\RR^{d+r}\to
\RR^d$ such that $P=\pi(Q)$, there exists a polytope $Q'$ with the same number of facets as $Q$, such that $P$ is 
obtained by dropping all but the first $d$ coordinates on $Q'$.

The \textit{size} of a polytope is defined to be the number of its
facet-defining 
inequalities. Finally, the \emph{extension complexity} of a polytope $P$,
denoted by $\xc(P)$, is the size of its smallest extended formulation. We 
refer the readers to the surveys 
\cite{CCZ:13,VanderbeckWolsey2010,Kaibel11,Wolsey11} 
for details and background of the subject and
we only state two basic propositions about extended formulations here.

\begin{proposition}\label{prop:xc_vertices} 
	Let $P$ be a polytope with a vertex set $V=\left\{v_1,\ldots,v_n\right\}$. 
Then $\xc(P)\leqslant n.$
\end{proposition}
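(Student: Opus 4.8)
The plan is to realize $P$ as a linear image of the standard simplex on $n$ vertices. First I would introduce $\Delta := \{\lambda \in \RR^n : \lambda[i] \geqslant 0 \text{ for all } i,\ \sum_{i=1}^n \lambda[i] = 1\}$, the $(n-1)$-dimensional standard simplex. This polytope has exactly $n$ facets: its facet-defining inequalities are the $n$ nonnegativity constraints $\lambda[i] \geqslant 0$, whereas $\sum_{i=1}^n \lambda[i] = 1$ is an implicit equality of its affine hull and contributes no facet. Hence the size of $\Delta$, in the sense defined above, is $n$.

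Next I would define the linear map $\pi : \RR^n \to \RR^d$ (where $P \subseteq \RR^d$) by $\pi(\lambda) = \sum_{i=1}^n \lambda[i]\, v_i$, and argue that $\pi(\Delta) = P$. The inclusion $\pi(\Delta) \subseteq P$ holds because each $\pi(\lambda)$ with $\lambda \in \Delta$ is a convex combination of $v_1,\ldots,v_n$, and $P$, being convex and containing all the $v_i$, contains all such combinations. The reverse inclusion $P \subseteq \pi(\Delta)$ is precisely the Minkowski--Weyl fact that every point of a polytope is a convex combination of its vertices: if $x \in P$ and $x = \sum_{i=1}^n \lambda[i] v_i$ with $\lambda \in \Delta$, then $x = \pi(\lambda)$.

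Finally, $\Delta$ together with $\pi$ exhibits $P$ as a linear image of a polytope with $n$ facets. Invoking the observation recorded immediately before the proposition, from this linear projection one obtains a polytope $Q'$ with the same number of facets as $\Delta$ such that $P$ is the projection of $Q'$ onto its first $d$ coordinates; thus $Q'$ is an extended formulation of $P$ with $n$ facets, and therefore $\xc(P) \leqslant n$.

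I do not expect a genuine obstacle here; the only point needing a moment's care is the facet count of $\Delta$ — one must remember that the normalizing equality $\sum_i \lambda[i] = 1$ is not counted as a facet-defining inequality — together with the trivial remark that degenerate cases (for instance $P$ a single point, or $P$ not full-dimensional) only make $\xc(P)$ smaller and so are already subsumed.
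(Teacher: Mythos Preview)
Your proposal is correct and takes essentially the same approach as the paper: the paper also writes $P$ as the image of the standard simplex via $\lambda \mapsto \sum_i \lambda_i v_i$, presenting the extended formulation directly as $Q = \{(x,\lambda) \mid x = \sum_i \lambda_i v_i,\ \sum_i \lambda_i = 1,\ \lambda \geqslant 0\}$ rather than first passing through the linear map $\pi$ and then invoking the conversion to a coordinate projection. The content is identical.
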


\begin{proof}
Let $P=\conv\left(\left\{v_1,\ldots,v_n\right\}\right)$ be a 
polytope. Then, $P$ is the projection of 
$$Q=\left\{\left(x,\lambda\right) 
\left| x=\sum\limits_{i=1}^n \lambda_i v_i; \sum\limits_{i=1}^n \lambda_i = 1
; \lambda_i \geqslant 0 \mbox{ for } i\in\{1,\ldots,n\} \right.\right\}.$$ 
It is clear that $Q$ has at most $n$ facets and therefore $\xc(P)\leqslant n$.
\end{proof}

\begin{proposition}\label{prop:xc_slice}
	Let $P$ be a polytope obtained by intersecting a set $H$ of hyperplanes with  a
	polytope~$Q$. Then $\xc(P)\leqslant \xc(Q)$.
\end{proposition}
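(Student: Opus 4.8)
The idea is that an extension of $Q$ can be sliced by the same hyperplanes, lifted to the higher-dimensional ambient space, and that such slicing never creates new facets; the resulting polytope is then the sought extension of $P$. Concretely, let $Q'\subseteq\RR^{d+r}$ be a smallest extension of $Q$, so that $Q'$ has exactly $\xc(Q)$ facets; by the remark in Section~\ref{subsec:polytopes} we may assume $Q=\pi(Q')$ where $\pi\colon\RR^{d+r}\to\RR^{d}$ is the projection onto the first $d$ coordinates (any linear $\pi$ would do equally well). For a hyperplane $h=\{x\in\RR^{d}\mid a^\intercal x=b\}$ of $H$ set $\widehat h:=\pi^{-1}(h)=\{(x,y)\in\RR^{d+r}\mid a^\intercal x=b\}$; this is again a hyperplane of $\RR^{d+r}$, except in the degenerate case $a=0$ where $\widehat h$ is either all of $\RR^{d+r}$ (and may be discarded) or empty (and then $P=\emptyset$ and $\xc(P)=0$, so there is nothing to prove). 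Define
\[
  P':=Q'\cap\bigcap_{h\in H}\widehat h.
\]

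First I would check that $P'$ is an extension of $P$, i.e. that $\pi(P')=P$. This follows at once from the identity $\pi\bigl(S\cap\pi^{-1}(T)\bigr)=\pi(S)\cap T$, valid for any map $\pi$ and any sets $S,T$: taking $S=Q'$ and $T=\bigcap_{h\in H}h$ yields $\pi(P')=\pi(Q')\cap\bigcap_{h\in H}h=Q\cap\bigcap_{h\in H}h=P$. Consequently $\xc(P)$ is at most the number of facet-defining inequalities of $P'$.

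Second I would bound the number of facets of $P'$ by $\xc(Q)$. Inside its own affine hull $\mathrm{aff}(Q')$ the polytope $Q'$ is full-dimensional and is described by exactly the $\xc(Q)$ inequalities corresponding to its facets. Since $P'$ lies in the affine subspace $\mathrm{aff}(P')\subseteq\mathrm{aff}(Q')\cap\bigcap_{h\in H}\widehat h$, inside $\mathrm{aff}(P')$ the equations defining $\mathrm{aff}(Q')$ and those defining every $\widehat h$ are satisfied automatically, so there $P'$ is cut out by just those $\xc(Q)$ facet inequalities of $Q'$. As a polytope that is the intersection of $m$ halfspaces within its affine hull has at most $m$ facets, $P'$ has at most $\xc(Q)$ facets, and therefore $\xc(P)\leqslant\xc(Q)$. (Equivalently, one may note directly that intersecting a polytope with one hyperplane cannot increase the number of facets --- each facet of the intersection is cut off by a facet of the original polytope --- and iterate over the finitely many hyperplanes of $H$.)

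The argument is short, and the only delicate parts are these two bookkeeping steps. Verifying $\pi(P')=P$ rests solely on the preimage identity above and so is robust. The facet count is intuitively clear but does require the remark about intersecting with an affine subspace, and it is precisely this step that produces the inequality $\xc(P')\leqslant\xc(Q)$, hence the whole statement; I would regard it as the main, if modest, obstacle.
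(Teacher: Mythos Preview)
Your proof is correct and follows essentially the same approach as the paper's: take a smallest extension of $Q$, intersect it with the (lifted) hyperplanes, and observe that this yields an extension of $P$ with no more facets than before. The paper states this in two sentences, while you spell out both the projection identity $\pi(Q'\cap\pi^{-1}(H))=\pi(Q')\cap H$ and the facet-counting argument in full; the underlying idea is identical.
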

\begin{proof}
Note that any extended formulation of $Q$, when intersected with $H$, 
gives an extended formulation of $P$. Intersecting a polytope with 
hyperplanes does not increase the number of facet-defining inequalities (and 
only possibly reduces it). 
\end{proof}

\subsection{Graphs and Treewidth}

For notions related to the treewidth of a graph and nice tree
decomposition, in most cases we stick to the standard terminology as given in
the book by Kloks~\cite{Kloks:94}; the only deviation is in the leaf nodes of
the nice tree decomposition where we assume that the bags are empty.
For a vertex $v\in V$ of a graph $G=(V,E)$, we denote by $\delta_G(v)$ the
set of neighbors of $v$ in $G$, that is, $\delta_G(v)=\left\{u\in V\ | \ \{u,v\}\in E\right\}$. If the graph $G$ is clear from the context, we omit the subscript and simply write $\delta(v)$.

A \emph{tree decomposition} of a graph $G=(V,E)$ 
is a pair $(T, B)$, where $T$ is a rooted tree and $B$ is a mapping
$B: V(T) \rightarrow 2^V$ satisfying
\begin{itemize}
	\setlength\itemsep{0em}
	\item for any $uv \in E$, there exists $a \in V(T)$ such that
	$u, v \in B(a)$,
	\item if $v \in B(a)$ and $v \in B(b)$, then $v \in B(c)$ for all
	$c$ on the path from $a$ to $b$ in $T$.
\end{itemize}
We use the convention that the vertices of the tree are called nodes and the sets
$B(a)$ are called \emph{bags}.
Occasionally, we will view the mapping $B$ as the set $B = \{B(u) \mid u \in V(T)\}$.

The {\em treewidth $tw((T, B))$ of a tree decomposition} $(T, B)$ is
the size of the largest bag of $(T, B)$ minus one.
The {\em treewidth $tw(G)$ of a graph} $G$ is the
minimum treewidth over all possible tree decompositions of $G$.


A \emph{nice tree decomposition} 
is a tree decomposition with one
special node $r$ called the \emph{root} in which each node is
one of the following types:
\begin{itemize}
	\setlength\itemsep{0em}
	\item \emph{Leaf node}: a leaf $a$ of $T$ with $B(a) = \emptyset$.
	\item \emph{Introduce node}: an internal node $a$ of $T$ with one
	child $b$ for which $B(a) = B(b) \cup \{v\}$ for some $v \not\in
	B(a)$.
	\item \emph{Forget node}: an internal node $a$ of $T$ with one child
	$b$ for which $B(a) = B(b) \setminus \{v\}$ for some $v \in B(b)$.
	\item \emph{Join node}: an internal node $a$ with two children $b$
	and $c$ with $B(a) = B(b) = B(c)$.
\end{itemize}
For a vertex $v\in V$, we denote by $top(v)$ the topmost
node of the nice tree decomposition $(T, B)$ that contains $v$ in its bag. 
For any graph $G$ of treewith $\tau$ on $n$ vertices, a nice tree decomposition of $G$ of width $\tau$ 
with at most $8n$ nodes can be computed
in time $f(\tau) \cdot n$, for some computable function $f$~\cite{Bodlaender:93,Kloks:94}.

Given a graph $G=(V,E)$ and a subset of vertices $\{v_1, \dots, v_d\}\subseteq V$,
we denote by $G[v_1, \dots, v_d]$ the subgraph of $G$ induced by the
vertices $v_1, \dots, v_d$. 
Given a tree decomposition $(T, B)$ and a node $a \in V(T)$, 
we denote by $T_a$ the subtree of $T$ rooted in
$a$, and by $G_a$ the subgraph of $G$ induced by all vertices in
bags of $T_a$, that is, $G_a = G[\bigcup_{b \in V(T_a)} B(b)]$.
Throughout this paper we assume that for every graph, its vertex set is 
a subset of $\NN$. We define the following operator $\eta$: 
for any set $U = \{v_1, v_2, \dots, v_l\} \subseteq \NN$, 
$\eta(U) = (v_{i_1}, v_{i_2}, \dots, v_{i_l})$ such that $v_{i_1} < v_{i_2} \dots < v_{i_l}$; in other words, $\eta$ takes the set $U$ to the ordered tuple of its elements.

\vspace{2ex}
\begin{figure}[h]
\centering
    \includegraphics[scale=0.5]{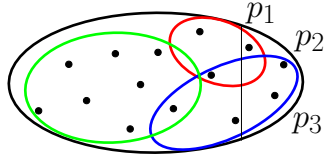}
    \caption{A $[3]$-labelled $3$-boundaried graph with $\vec{p}=(p_1,p_2,p_3)$.}
\end{figure}
\medskip

For an integer $m\geq 0$, an \textit{$[m]$-labelled graph} is a pair $(G,
\vec{V})$ where $G=(V,E)$ is a graph and $\vec{V} = (V_1, \dots, V_m)$ is an
$m$-tuple of subsets of vertices of $G$ called an {\em $m$-labelling of $G$}.
For a subset of vertices $W \subseteq V$, we denote by $\vec{V}[W]$ the restriction of $\vec{V}$ to $W$, i.e., $\vec{V}[W] = (V_1 \cap W, \dots, V_m \cap W)$.
For integers $m\geq 0$  and $\tau\geq 0$, an \textit{$[m]$-labelled
$\tau$-boundaried graph} is a triple $(G, \vec{V}, \vec{p})$ where $(G,
\vec{V})$ is an $[m]$-labelled graph and $\vec{p} = (p_1, \dots, p_\tau)$ is a
$\tau$-tuple of vertices of $G$ called a \textit{boundary} of $G$.  If the
tuples $\vec{V}$ and $\vec{p}$ are clear from the context or if their content
is not important, we simply denote an $[m]$-labelled $\tau$-boundaried graph 
by $G^{[m],\tau}$. For a tuple $\vec{p} = (p_1, \dots, p_\tau)$, we denote
by $p$ the corresponding set, that is, $p = \{p_1, \dots, p_\tau\}$.
We say that $\vec{p}'$ is a subtuple of $\vec{p}$ if $p' \subseteq p$.

Two $[m]$-labelled $\tau$-boundaried
{\em graphs} $(G_1, \vec{V}, \vec{p})$ and $(G_2, \vec{U}, \vec{q})$ are
\textit{compatible} if the function $h: \vec{p} \rightarrow \vec{q}$,
defined by $h(p_i) = q_i$ for each $i$, is an isomorphism of the
induced subgraphs $G_1[p_1, \dots, p_\tau]$ and $G_2[q_1, \dots,
  q_\tau]$, and if for each $i$ and $j$, $p_i \in V_j \Leftrightarrow
q_i \in U_j$.

\vspace{2ex}
\begin{figure}[h]
\centering
  \includegraphics[scale=0.5]{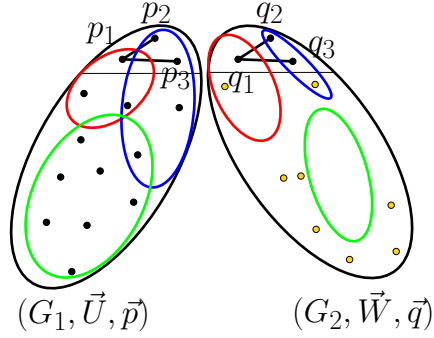}
  \caption{Compatibility of two $[3]$-labelled $3$-boundaried graphs.}
\end{figure}

Given two compatible $[m]$-labelled $\tau$-boundaried graphs $G_1^{[m],\tau} =
(G_1, \vec{U}, \vec{p})$ and $G_2^{[m],\tau} = (G_2, \vec{W}, \vec{q})$, the
\textit{join} of $G_1^{[m],\tau}$ and $G_2^{[m],\tau}$, denoted by
$G_1^{[m],\tau} \oplus G_2^{[m],\tau}$, is the $[m]$-labelled $\tau$-boundaried
graph $G^{[m],\tau} = (G, \vec{V}, \vec{p})$ where
\begin{itemize}
\setlength\itemsep{0em}
\item $G$ is the graph obtained by taking the disjoint union of $G_1$ and
$G_2$, and for each $i$, identifying the vertex $p_i$ with the vertex $q_i$ 
and keeping the label $p_i$ for it;
\item $\vec{V} = (V_1, \dots, V_m)$ with $V_j = U_j \cup W_j$ 
	and every $q_i$ replaced by $p_i$, for each $j$ and $i$;
\item $\vec{p} = (p_1, \dots, p_\tau)$ with $p_i$ being the node in $V(G)$ 
obtained by the identification of $p_i\in V(G_1)$ and $q_i\in V(G_2)$, for each
$i$.
\end{itemize} 
\medskip
\begin{figure}
\centering
\includegraphics[scale=0.5]{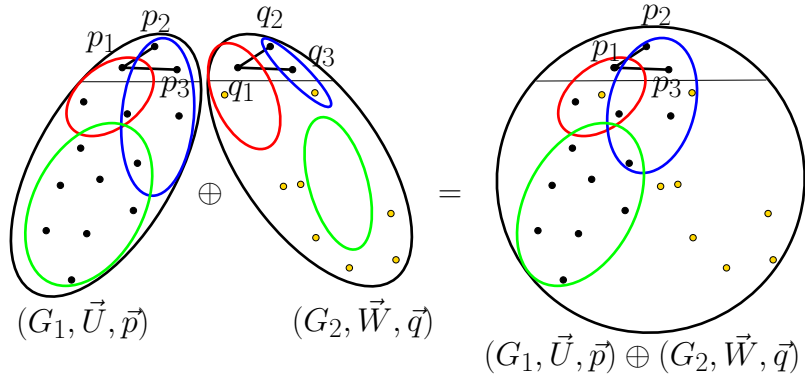}
\caption{The \textit{join} of two $[m]$-labelled $\tau$-boundaried graphs.}
\end{figure}
Because of the choice of referring to the boundary vertices by their names in
$G_1^{[m],\tau}$, it does not always hold that
$G_1^{[m],\tau} \oplus G_2^{[m],\tau} = G_2^{[m],\tau} \oplus G_1^{[m],\tau}$;
however, the two structures are isomorphic and equivalent for our purposes (see
below).

\subsection{Monadic Second Order Logic and Types of Graphs} \label{subsec:MSO_types}

In most cases, we stick to standard notation as given by
Libkin~\cite{Libkin:04}.
A \textit{vocabulary} $\sigma$ is a finite collection of \textit{constant}
symbols $c_1, c_2, \ldots$ and \textit{relation} symbols $P_1, P_2,
\ldots$. Each relation symbol $P_i$ has an associated arity $r_i$. A
\textit{$\sigma$-structure} is a tuple $\AA = (A, \{c_i^{\AA}\}, \{P_i^{\AA}\})$
that consists
of a universe $A$ together with an interpretation of the constant
and relation symbols: each constant symbol $c_i$ from $\sigma$ is associated with 
an element $c_i^{\AA} \in A$ and each relation symbol $P_i$ from $\sigma$ 
is associated with an $r_i$-ary relation $P_i^{\AA} \subseteq A^{r_i}$;
note that, with the exception of $\tau$-boundaried graphs, the set of constants of structures studied here is always empty.

To give an example, a graph $G=(V,E)$ can be viewed as a $\sigma_{1}$-structure 
$(V,\emptyset,\{E\})$ where $E$ is a symmetric binary relation on $V\times V$
and the vocabulary $\sigma_{1}$ contains a single relation symbol.
Alternatively, for another
vocabulary $\sigma_2$ containing three relation symbols,
one of arity two and two of arity one, one can view a
graph $G=(V,E)$ also as a $\sigma_{2}$-structure $I(G)=(V_I,\emptyset,\{E_I,L_V,L_E\})$, 
with $V_I=V\cup E$, $E_I=\{\{v,e\}\ | \ v\in e, e\in E\}$, $L_V=V$
and $L_E=E$; we will call $I(G)$ the {\em incidence graph} of $G$.
In our approach we will make use of the well known fact that the
treewidths of $G$ and $I(G)$, viewed as a $\sigma_1$- and $\sigma_2$-structures
as explained above, differ by at most one~\cite{KV:98}.
(As mentioned earlier, we define the treewidth of a structure $\AA$ as the treewidth of its \emph{Gaifman graph $G(\AA) = (A,E)$} with $E = \{\{u,v\} \mid u,v \in p \in P_i^{\AA} \}$, i.e., two vertices are connected by an edge if the corresponding elements from the universe appear together in some relation.)

The main subject of this paper are formulae for graphs in monadic second-order logic (MSO) which is an extension of first-order logic that allows quantification over subsets of elements of the universe.
We denote by MSO$_1$ the MSO logic over the signature $\sigma_1$ and
by MSO$_2$ the MSO logic over the signature $\sigma_2$.
(See the Appendix for a full formal definitions of formulae in MSO$_1$ and MSO$_2$.)
For example, the $3$-colorability property can be expressed in MSO$_1$ as follows:
\begin{eqnarray*}
	\varphi_{\textrm{3-col}} \equiv \exists X_1,X_2,X_3 &&\left[\,
	\forall x \, (x\in X_1\vee x\in X_2\vee x\in X_3) \wedge \right.
	\\ &&\bigwedge\nolimits_{i=1,2,3} \left. \!\!
	\forall x,y
	\left(x\not\in X_i\vee y\not\in X_i\vee \neg\mathrm{E}(x,y)\right)
	\,\right] \enspace .
\end{eqnarray*}
When a structure $\AA$ satisfies a formula $\varphi$ we write $\AA \models \varphi$. 
For example, a graph $G$ is $3$-colorable if and only if $G \models \varphi_{\textrm{3-col}}$.

An important kind of structures that are necessary in the proofs in this paper are the
$[m]$-labelled $\tau$-boundaried graphs. An $[m]$-labelled 
$\tau$-boundaried graph $G=(V,E)$ with boundary $p_1,\ldots,p_{\tau}$
labelled with $V_1,\ldots,V_m$ is viewed as a structure
$(V_I,\{p_1,\ldots,p_{\tau}\},\{E_I,L_V,L_E,$ $V_1,\ldots,V_m\})$;
for notational simplicity, we stick to the notation $G^{[m],\tau}$
or $(G,\vec V, \vec p)$.
The corresponding vocabulary is denoted by $\sigma_{m,\tau}$.  

A variable $X$ is {\em free} in $\varphi$ if it does not appear in any
quantification in $\varphi$. 
If $\vec{X}$ is the tuple of all free variables in $\varphi$,
we write $\varphi(\vec{X})$.
A variable $X$ is \textit{bound} in $\varphi$ if it is not free.
To simplify the presentation, without loss of generality (cf.~\cite{KLLL:15}) we assume 
that the free variables of the input formulae are only set variables
(no free element variables). 

By $qr(\varphi)$ we denote the \textit{quantifier rank} of
$\varphi$ which is the maximum depth of nesting of quantifiers in $\varphi$.
Two structures $\AA_1, \AA_2$ over the same vocabulary $\sigma$ are \emph{$\MSOk$-elementarily equivalent} if they satisfy the same $\MSO$ formulae over $\sigma$ with quantifier depth at most $k$; this is denoted $\AA_1 \MSOkeq \AA_2$. 	
The main tool in the model theoretic approach to Courcelle's theorem,
that will also play a crucial role in our approach, can be stated as
the following theorem which follows from \cite[Proposition~7.5 and Theorem~7.7]{Libkin:04}.
\begin{theorem}[\cite{Libkin:04}] 
\label{thm:finite_index} 
For any fixed vocabulary $\sigma$ and $k \in \NN$,
the equivalence relation $\equivMSOkt$ has a finite number of equivalence classes.
  \end{theorem}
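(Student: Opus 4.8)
The plan is to deduce the finiteness of the number of $\equivMSOkt$-classes from the purely syntactic fact that, over the \emph{fixed finite} vocabulary $\sigma_{\tau,m}$, there are only finitely many $\mathrm{MSO}_1$ sentences of quantifier rank at most $k$ up to logical equivalence. Granting this fact, I would fix one representative $\theta_1,\dots,\theta_N$ from each logical-equivalence class of such sentences and, for an $[m]$-colored $\tau$-boundaried graph $G^{[m],\tau}$ regarded as a $\sigma_{\tau,m}$-structure, consider $t(G^{[m],\tau}) = \{\, i \le N : G^{[m],\tau} \models \theta_i \,\}$. Since every sentence of quantifier rank $\le k$ is logically equivalent to one of the $\theta_i$, two such structures satisfy exactly the same formulas of $\mathrm{MSO}[k,\tau,m]$ precisely when they have the same value of $t$; hence the $\equivMSOkt$-classes are exactly the nonempty fibres of $t$, and there are at most $2^N$ of them.

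For the syntactic fact I would induct on $k$, carried out uniformly over all finite sets of free set variables at once: for each finite set $\mathcal{X}$ of set variables I claim there are only finitely many $\mathrm{MSO}_1$ formulas over $\sigma_{\tau,m}$ with free variables among $\mathcal{X}$ and quantifier rank $\le k$, up to logical equivalence. When $k=0$ this is clear, since only finitely many atomic formulas can be built from $\mathcal{X}$, the $\tau$ constants and the relation symbols of $\sigma_{\tau,m}$ (all of bounded arity), so there are only finitely many quantifier-free formulas even up to propositional equivalence. For the step from $k$ to $k+1$ I would first observe, by an easy structural induction, that every formula of quantifier rank $\le k+1$ is logically equivalent to a Boolean combination of atomic formulas and of formulas of the form $\exists Y\,\psi$ where, after renaming $Y$ to a fixed fresh variable, $qr(\psi)\le k$ and the free variables of $\psi$ lie in $\mathcal{X}\cup\{Y\}$ (the atomic and Boolean cases are immediate, the existential case drops the quantifier rank by one, and $\forall$ is $\neg\exists\neg$). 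Applying the induction hypothesis to the variable set $\mathcal{X}\cup\{Y\}$ gives finitely many such $\psi$ up to equivalence, hence finitely many formulas $\exists Y\,\psi$; together with the finitely many atomic formulas this is a finite set of building blocks, so there are only finitely many Boolean combinations of them up to equivalence. Taking $\mathcal{X}=\emptyset$ yields the statement for sentences.

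The one genuinely non-routine ingredient is the normal-form step just described --- reducing an arbitrary quantifier-rank-$(k{+}1)$ formula, up to logical equivalence, to a Boolean combination of single existential extensions of quantifier-rank-$k$ formulas; equivalently, one can run the whole argument through the Ehrenfeucht--Fra\"iss\'e game for $\mathrm{MSO}$ and the associated Hintikka formulas, which is the route of \cite[Proposition~7.5 and Theorem~7.7]{Libkin:04} and which has the bonus that each class comes with a single defining sentence of quantifier rank $\le k$. I would also point out where the hypotheses enter: the conclusion fails without all three of $\tau$, $m$ and $k$ fixed, because it is precisely the finiteness of $\sigma_{\tau,m}$ --- the $m$ unary colour predicates, the $\tau$ boundary constants, and the three incidence-graph symbols --- together with the quantifier-rank bound $k$ that bounds the number of inequivalent sentences, and hence the number of $\equivMSOkt$-classes.
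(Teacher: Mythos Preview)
Your argument is correct and is the standard syntactic proof of this fact. Note that the paper does not actually give a proof of Theorem~\ref{thm:finite_index}; it merely cites it as following from \cite[Proposition~7.5 and Theorem~7.7]{Libkin:04}, which proceed via the MSO Ehrenfeucht--Fra\"iss\'e game and Hintikka formulas. Your route and Libkin's are essentially the same argument in different clothing: your map $t(G^{[m],\tau})$ is precisely the rank-$k$ MSO type, and the Hintikka formula is the single sentence of quantifier rank $\le k$ that isolates each nonempty fibre of $t$; the game presentation has the small bonus you already note, namely that each class is axiomatised by one sentence, which is handy downstream (e.g.\ in Lemma~\ref{lemma:compute_type_witnesses}).
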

A specialization to $[m]$-labelled $\tau$-boundaried graphs is the following.
We denote by $\MSO[k,\tau,m]$ the set of all $\MSO$ formulae $\varphi$ 
over the vocabulary $\sigma_{m, \tau}$ with $qr(\varphi) \leq k$.
Two $[m]$-labelled $\tau$-boundaried graphs $G_1^{[m], \tau}$ and
$G_2^{[m], \tau}$ are
$\MSOk$-elementarily equivalent if they satisfy the same
$\MSO[k,\tau,m]$ formulae, and we have
\begin{corollary}
	\label{cor:finite_index} 
	For any fixed $\tau, k,
	m \in \NN$, the equivalence relation $\equivMSOkt$ over $[m]$-labelled $\tau$-boundaried graphs has a finite
	number of equivalence classes.
\end{corollary}
Let us continue the example of $\varphi_{\textrm{3-col}}$.
We have $m=3$ free variables and quantifier rank $k=qr(\varphi_{\textrm{3-col}}) = 5$.
The rough intuition of what the equivalence classes of $\MSOkeq$ look like is the following: we have an equivalence class for each possible labeling of the boundary.
This can be simplified with the knowledge that in $\varphi_{\textrm{3-col}}$, the satisfying assignments are always actual colorings (each vertex belongs to exactly one of the sets $V_1, V_2, V_3$), so the equivalence classes would essentially be graphs with identical colorings of the boundary (i.e., all graphs with a given coloring of the boundary fall into the same equivalence class).  As we will shortly see, the handy property is that, given two $3$-colored $\tau$-boundaried graphs $G_1$ and $G_2$ whose colorings agree on their boundaries, we are guaranteed that their join is also $3$-colorable.

Let us denote by $\CC$ the equivalence classes of the relation $\equivMSOkt$, fixing an ordering such that $\alpha_1$ is the
class containing the empty graph. Note that the size of $\CC$ depends only on
$k$, $m$ and $\tau$, that is, $|\CC| = f(k, m, \tau)$ for some computable
function $f$.
Let us denote by $\CCtw$ the equivalence classes of $\equivMSOkt$ on $[m]$-labelled $\tau$-boundaried graphs of treewidth at most $\tau$; notice that $\CCtw \subseteq \CC$.
For a given MSO formula $\varphi$ with $m$ free variables, we define an {\em indicator
function} $\rho_\varphi:\{1,\ldots,|\CC|\} \to \{0,1\}$ as follows: for every $i$,
if there exists a graph $G^{[m],\tau} \in \alpha_i$ such that 
$G^{[m],\tau} \models \varphi$, we set $\rho_\varphi(i)=1$, and 
we set $\rho_\varphi(i)=0$ otherwise; note that if there exists a graph
$G^{[m],\tau} \in \alpha_i$ such that $G^{[m],\tau} \models \varphi$,
then $G'^{[m],\tau} \models \varphi$ for every $G'^{[m],\tau} \in \alpha_i$.

For every $[m]$-labelled $\tau$-boundaried graph $G^{[m],\tau}$, 
its \textit{type}, with respect
to the relation $\equivMSOkt$, is the class to which $G^{[m],\tau}$
belongs.
We say that {\em types} $\alpha_i$ and $\alpha_j$ are
\textit{compatible} if there exist two $[m]$-labelled $\tau$-boundaried 
graphs of types $\alpha_i$ and
$\alpha_j$ that are compatible; note that this is well defined as all 
$[m]$-labelled $\tau$-boundaried graphs of
a given type are compatible. For every $i\geq 1$, we will encode the type
$\alpha_i$ naturally as a binary vector $\{0,1\}^{|\CC|}$ with exactly one $1$,
namely with $1$ on the position $i$.

An important property of the types and the join operation is that the type
of a join of two $[m]$-labelled $\tau$-boundaried graphs depends on their types
only.
\begin{lemma}[{\cite[Lemma~7.11]{Libkin:04} and \cite[Lemma~3.5]{GPW:07}}]
\label{lem:only_types_matter}
  Let $G_a^{[m],\tau}$, $G_{a'}^{[m],\tau}$, $G_{b}^{[m],\tau}$ and
  $G_{b'}^{[m],\tau}$ be $[m]$-labelled $\tau$-boundaried
  graphs such that $G_a^{[m],\tau} \equivMSOkt G_{a'}^{[m],\tau}$ 
  and $G_{b}^{[m],\tau} \equivMSOkt G_{b'}^{[m],\tau}$. 
  Then $(G_a^{[m],\tau} \oplus G_b^{[m],\tau}) \equivMSOkt
  (G_{a'}^{[m],\tau} \oplus G_{b'}^{[m],\tau})$. 
\end{lemma}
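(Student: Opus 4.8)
The plan is to argue by \EF games. Recall the standard characterisation (see \cite{Libkin:04}): for $[m]$-colored $\tau$-boundaried graphs $H_1,H_2$, regarded as $\sigma_{\tau,m}$-structures, $H_1\equivMSOkt H_2$ holds iff Duplicator wins the $k$-round MSO game on $H_1,H_2$, in which each round Spoiler picks one structure and a subset of its universe, Duplicator replies with a subset of the universe of the other structure, and Duplicator wins if after $k$ rounds the matching of the constants $p_1,\dots,p_\tau$ of $H_1$ with those of $H_2$ together with the matching of the $k$ chosen sets is a partial isomorphism of the expanded structures. Writing $G:=G_a^{[m],\tau}\oplus G_b^{[m],\tau}$ and $G':=G_{a'}^{[m],\tau}\oplus G_{b'}^{[m],\tau}$, it therefore suffices to build, from the two given winning strategies for the games on $(G_a^{[m],\tau},G_{a'}^{[m],\tau})$ and on $(G_b^{[m],\tau},G_{b'}^{[m],\tau})$, a winning strategy for the game on $(G,G')$.

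First I would identify the common interface. Let $\partial$ denote the boundary substructure: the $\tau$ boundary vertices, the at most $\binom{\tau}{2}$ edges among them, and the incidence, label and colour facts on those elements. By definition of $\oplus$, the universe of $G$ is the union of the universes of $G_a^{[m],\tau}$ and $G_b^{[m],\tau}$, their intersection is exactly $\partial$, and each relation of $G$ is the union of the corresponding relations of the two parts; in other words $\oplus$ is amalgamation over $\partial$, and likewise $G'$ is the amalgam of $G_{a'}^{[m],\tau}$ and $G_{b'}^{[m],\tau}$ over the corresponding $\partial'$. Moreover all four graphs carry the same interface, up to the canonical renaming $q_i\mapsto p_i$: boundary vertices are interpretations of constant symbols, hence matched by $\equivMSOkt$ and (by the definition of compatibility) across each $\oplus$; their colours and labels are quantifier-free data and are matched as well; and, for $k\ge1$, the presence of the edge between $p_i$ and $p_j$ is expressed by the quantifier-rank-one sentence ``some edge is incident both to $p_i$ and to $p_j$'', so the edge pattern is matched too (the case $k=0$ is immediate). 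Thus $\partial=\partial'$ via canonical bijections, and in each of the four graphs the set of boundary edges is precisely the set of singletons satisfying a fixed quantifier-free formula in the parameters $p_1,\dots,p_\tau$.

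The composite strategy I would use is the ``play in both coordinates'' strategy. When Spoiler picks a set $U$ in $G$, split it into its part $U^a$ lying in $G_a^{[m],\tau}$ and its part $U^b$ lying in $G_b^{[m],\tau}$ (these agree on $\partial$, being restrictions of one set); feed $U^a$ to the strategy for $(G_a^{[m],\tau},G_{a'}^{[m],\tau})$ to get $W^a$, feed $U^b$ to the strategy for $(G_b^{[m],\tau},G_{b'}^{[m],\tau})$ to get $W^b$, and let Duplicator answer $W:=W^a\cup W^b$ --- which is a legitimate set of $G'$ provided $W^a$ and $W^b$ agree on $\partial'$. Spoiler's moves in $G'$ are handled symmetrically by splitting the move and merging the two answers. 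Keeping the invariant that the accumulated position projects to winning positions of the two component games, a short induction over the $k$ rounds shows the final position is a partial isomorphism of $(G,G')$: every atomic fact about the constants and the pebbled sets is witnessed entirely within the $G_a$-part, entirely within the $G_b$-part, or within the shared $\partial$, and in each case the invariant supplies the matching fact. Hence Duplicator wins, and $G\equivMSOkt G'$.

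The one genuinely delicate point --- and the step I expect to be the main obstacle --- is the proviso in the previous paragraph: one must be able to choose the two component winning strategies so that they \emph{agree on the interface}, i.e.\ always answer a boundary vertex or boundary edge by its canonical counterpart. For boundary vertices this is automatic, as a winning position keeps the constants matched. For the (at most $\binom{\tau}{2}$) boundary edges it needs an argument, since membership of a fixed edge in a pebbled set is not literally a quantifier-free fact; the point is that each boundary edge is pinned down inside its graph as the unique singleton satisfying a quantifier-free formula in $p_1,\dots,p_\tau$, so that one may first pass to the game on the structures enriched by these finitely many distinguished singletons and only then apply the composition above. A more abstract way to see the whole statement is as an instance of the compositionality of quantifier-rank-$k$ MSO types under amalgamation over a fixed finite substructure, the Feferman--Vaught theorem for disjoint unions being the interface-free special case; cf.\ \cite{Libkin:04,GPW:07}.
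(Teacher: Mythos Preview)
The paper does not prove this lemma at all; it is stated with a citation to \cite{Libkin:04} and \cite{GPW:07} and used as a black box, so there is no in-paper proof to compare against. Your \EF-game composition argument is precisely the standard proof one finds in those references (it is the Feferman--Vaught compositionality argument specialised to amalgamation over a fixed boundary), so in substance your proposal \emph{is} the cited proof.

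On correctness: the outline is sound, and you correctly isolate the only non-automatic step, namely interface coherence. Agreement of the two component answers on the boundary \emph{vertices} is indeed forced, since these are interpretations of constant symbols and membership of a constant in a pebbled set is a quantifier-free fact preserved in any winning position. Your handling of the boundary \emph{edges} in the $\sigma_2$ incidence representation --- expanding by the finitely many boundary-edge singletons, each definable by a quantifier-free formula in the parameters $p_1,\dots,p_\tau$, before running the composite game --- is exactly the right fix and costs no quantifier rank. One small point worth making explicit in a final write-up: the definition of $\oplus$ in the paper identifies only the boundary \emph{vertices}, so under the $\sigma_2$ encoding one should check that the two copies of each boundary edge are also identified (or else the amalgam is not over a common substructure and the splitting $U\mapsto(U^a,U^b)$ is ambiguous on those elements). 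This is harmless once noted, but it is the concrete place where your ``enrich by definable constants'' step does real work.
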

The importance of the lemma rests in the fact that for determination of
the type of a join of two $[m]$-labelled $\tau$-boundaried graphs, 
it suffices to know only a {\em small} amount of information about the two
graphs,
namely their types.
The following two lemmas deal in a similar way with the type of a graph
in other situations. 
\begin{lemma}[{\cite[implicit]{GPW:07}}] \label{lem:introduce_matter}
  Let $(G_a,\vec{X},\vec{p})$, $(G_{b},\vec{Y},\vec{q})$ be
  $[m]$-labelled $\tau$-boundaried graphs and
  let $(G_{a'},\vec{X'},\vec{p'})$, $(G_{b'},\vec{Y'},\vec{q'})$ be
  $[m]$-labelled $(\tau+1)$-boundaried graphs with 
  $G_{a}=(V,E)$, $G_{a'}=(V',E')$, 
  $G_{b}=(W,F)$, $G_{b'}=(W',F')$ 
  such that 
\begin{enumerate}
\setlength\itemsep{0em}
\item $(G_a,\vec{X},\vec{p}) \equivMSOkt (G_{b},\vec{Y},\vec{q})$;
\item $V'=V\cup\{v\}$ for some $v\not\in V$, $\delta(v)\subseteq p$, $\vec{p}$ is a subtuple of $\vec{p'}$
  and $(G_{a'}[V],\vec{X'}[V],\vec{p'}[V])=(G_a,\vec{X},\vec{p})$;
\item $W'=W\cup\{w\}$ for some $w\not\in W$, $\delta(w)\subseteq q$, $\vec{q}$ is a subtuple of $\vec{q'}$
  and $(G_{b'}[W],\vec{Y'}[W],\vec{q'}[W])=(G_{b},\vec{Y},\vec{q})$; 
\item $(G_{a'},\vec{X'},\vec{p'})$ and $(G_{b'},\vec{Y'},\vec{q'})$ are
compatible.
\end{enumerate}
  Then $(G_{a'},\vec{X'},\vec{p'}) \equivMSOktplus (G_{b'},\vec{Y'},\vec{q'})$.
\end{lemma}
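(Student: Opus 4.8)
The plan is to prove this by an \EF game argument: Duplicator will win the $k$-round \EF game (with set moves, since the formulae use only set variables of quantifier rank at most $k$) on the $\sigma_{\tau+1,m}$-structures $\mathfrak A := (G_{a'},\vec X',\vec p')$ and $\mathfrak B := (G_{b'},\vec Y',\vec q')$ by adapting a winning strategy $\sigma_0$ for Duplicator on the $k$-round game on $\mathfrak A_0 := (G_a,\vec X,\vec p)$ and $\mathfrak B_0 := (G_b,\vec Y,\vec q)$; such $\sigma_0$ exists by hypothesis~(1). First I would pin down how $\mathfrak A$ extends $\mathfrak A_0$: by~(2), in the incidence encoding the universe of $\mathfrak A$ is that of $\mathfrak A_0$ together with the single new vertex $v$ and the new incidence elements $\{\,\{v,u\}\mid u\in\delta(v)\,\}$ (and $\delta(v)\subseteq p$, so every new edge joins $v$ to an old boundary vertex), all relations restrict correctly to the old universe, and the only new constant is $v$ (inserted into $\vec p'$ at the position where, by~(4), $w$ is inserted into $\vec q'$). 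Compatibility~(4) gives the isomorphism $h'\colon G_{a'}[p']\to G_{b'}[q']$ with $h'(v)=w$ and $h'(p_i)=q_i$; from it I would read off a canonical bijection $\pi$ from the new elements of $\mathfrak A$ to the new elements of $\mathfrak B$, namely $\pi(v)=w$ and $\pi(\{v,p_i\})=\{w,q_i\}$, which is well defined because $p_i\in\delta(v)\iff q_i\in\delta(w)$.

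With this in hand, Duplicator's strategy on $\mathfrak A,\mathfrak B$ is: maintain a parallel play of the game on $\mathfrak A_0,\mathfrak B_0$ governed by $\sigma_0$; when Spoiler selects a set $S$ in $\mathfrak A$ (a move in $\mathfrak B$ is handled symmetrically via $\pi^{-1}$), split $S=S^{\mathrm{old}}\sqcup S^{\mathrm{new}}$ with $S^{\mathrm{old}}$ inside the universe of $\mathfrak A_0$, feed $S^{\mathrm{old}}$ to the parallel game as Spoiler's move, let $\sigma_0$ answer with $T^{\mathrm{old}}$, and reply $T^{\mathrm{old}}\sqcup\pi(S^{\mathrm{new}})$ in the real game. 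After $k$ rounds the ``old parts'' of the selected sets form a Duplicator-winning position of the game on $\mathfrak A_0,\mathfrak B_0$, and $v\in S_i\iff w\in T_i$ and $\{v,p_j\}\in S_i\iff\{w,q_j\}\in T_i$ for every round~$i$, by construction of $\pi$ and of Duplicator's replies.

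It then remains to verify the winning condition of the game on $\mathfrak A,\mathfrak B$, i.e.\ that the chosen sets together with the constants induce the same atomic facts on both sides. Atomic facts that mention only old elements and the restrictions $S_i^{\mathrm{old}},T_i^{\mathrm{old}}$ transfer because $\sigma_0$ is winning. The remaining atomic facts all mention $v$, $w$, a new incidence element, or the constant $v$/$w$, and I would dispose of them by a short finite case check: $v\in L_V\iff w\in L_V$ and each new edge and its $\pi$-image lie in $L_E$; the incidences of $v$ are exactly the pairs $(v,p_i)$ and $(p_i,\{v,p_i\})$ with $p_i\in\delta(v)$, mirrored through $h'$ by those of $w$; crucially, $\delta(v)\subseteq p$ (and $\delta(w)\subseteq q$) forces $v$ and the new edges to be incident to \emph{no} old non-boundary element, so no ``cross'' atomic fact outside the control of $\sigma_0$ ever arises; the colour memberships $v\in V'_j\iff w\in W'_j$ hold by~(4) while new edges lie in no colour class; and membership in the played sets agrees by the last sentence of the previous paragraph. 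Hence Duplicator wins and $\mathfrak A\equivMSOkt\mathfrak B$.

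The only genuine work is that last bookkeeping step, and it is also the only place where the hypotheses really bite: allowing $v$ a neighbour outside $p$ would introduce an edge invisible to the constants and hence uncontrolled by $\sigma_0$, breaking the transfer; dropping hypothesis~(1) would remove $\sigma_0$ altogether. An alternative, essentially equivalent, presentation would observe that $(G_{a'},\vec X',\vec p')$ arises from $(G_a,\vec X,\vec p)$ by a fixed quantifier-free transduction determined only by the compatibility class of the introduction, and invoke preservation of $\equivMSOkt$ under such transductions; this merely relocates the same case analysis into a general lemma.
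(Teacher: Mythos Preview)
The paper does not itself prove this lemma; it is stated with the attribution ``[implicit]'' to \cite{GPW:07} and then used as a black box. So there is no in-paper argument to compare against.

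Your \EF game argument is nonetheless the standard route to such a compositionality statement and is correct. Splitting each Spoiler set move into an ``old'' part (delegated to the inherited winning strategy $\sigma_0$ guaranteed by hypothesis~(1)) and a ``new'' part (handled by the finite bijection $\pi$ furnished by compatibility~(4)) is exactly the right decomposition, and you correctly identify $\delta(v)\subseteq p$ (and symmetrically $\delta(w)\subseteq q$) as the hypothesis that confines every cross-fact between new and old elements to the boundary constants, where it is visible to both $\sigma_0$ and $\pi$. The case analysis you outline for the final verification is complete for the incidence-graph encoding the paper uses. The alternative presentation via a fixed quantifier-free transduction that you sketch at the end is genuinely equivalent; it simply packages the same case analysis into a reusable preservation lemma, which is closer in spirit to how such results are often stated in the model-theoretic literature, but neither approach yields anything the other does not.
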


\begin{lemma}[{\cite[implicit]{GPW:07}}] \label{lem:forget_matter}
  Let $(G_a,\vec{X},\vec{p})$, $(G_{b},\vec{Y},\vec{q})$ be
  $[m]$-labelled $\tau$-boundaried graphs and
  let $(G_{a'},\vec{X'},\vec{p'})$, $(G_{b'},\vec{Y'},\vec{q'})$ be
  $[m]$-labelled $(\tau+1)$-boundaried graphs with 
  $G_{a}=(V,E)$, $G_{a'}=(V',E')$, 
  $G_{b}=(W,F)$, $G_{b'}=(W',F')$ 
  such that 
\begin{enumerate}
\setlength\itemsep{0em}
\item $(G_{a'},\vec{X'},\vec{p'}) \equivMSOktplus (G_{b'},\vec{Y'},\vec{q'})$;
\item $V\subseteq V'$, $|V'|=|V|+1$, $\vec{p}$ is a subtuple of $\vec{p'}$
  and $(G_{a'}[V],\vec{X'}[V],\vec{p'}[V])=(G_a,\vec{X},\vec{p})$;
\item $W\subseteq W'$, $|W'|=|W|+1$, $\vec{q}$ is a subtuple of $\vec{q'}$
  and $(G_{b'}[W],\vec{Y'}[W],\vec{q'}[W])=(G_{b},\vec{Y},\vec{q})$.
\end{enumerate}
  Then $(G_a,\vec{X},\vec{p}) \equivMSOkt (G_{b},\vec{Y},\vec{q})$.
\end{lemma}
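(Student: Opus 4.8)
\medskip
\noindent\textbf{Proof plan.}\quad The plan is to reduce the $\tau$-boundaried equivalence we want to the $(\tau+1)$-boundaried equivalence in hypothesis~(1) by \emph{relativizing} formulas so as to ignore the vertex that is being dropped. It is enough to produce, for every sentence $\psi\in\MSO[k,\tau,m]$, a sentence $\psi'\in\MSO[k,\tau+1,m]$ with $qr(\psi')=qr(\psi)$ such that
\begin{align*}
  (G_a,\vec X,\vec p)\models\psi &\iff (G_{a'},\vec X',\vec p')\models\psi', \\
  (G_b,\vec Y,\vec q)\models\psi &\iff (G_{b'},\vec Y',\vec q')\models\psi'.
\end{align*}
Granting this, hypothesis~(1) applied to $\psi'$ (whose quantifier rank is $\le k$) gives $(G_{a'},\vec X',\vec p')\models\psi'\iff(G_{b'},\vec Y',\vec q')\models\psi'$, and chaining the three equivalences yields $(G_a,\vec X,\vec p)\models\psi\iff(G_b,\vec Y,\vec q)\models\psi$ for every $\psi\in\MSO[k,\tau,m]$, i.e.\ $(G_a,\vec X,\vec p)\equivMSOkt(G_b,\vec Y,\vec q)$.

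To build $\psi'$, first note that by~(2) the set $V'\setminus V$ is a single vertex $v$, and since $\vec p$ is a subtuple of $\vec p'$ whose restriction to $V$ is $\vec p$, this $v$ is a boundary vertex of $G_{a'}$; likewise $W'\setminus W=\{w\}$ with $w$ a boundary vertex of $G_{b'}$. I will assume without loss of generality that $v=p'_{\tau+1}$ and $w=q'_{\tau+1}$, i.e.\ that the dropped vertex is the last boundary vertex on both sides; this is the situation in every use of the lemma, and in general one first relabels the $(\tau+1)$ boundary constants so that the dropped index agrees on the two sides. Now obtain $\psi'$ from $\psi$ by leaving all constants $c_1,\dots,c_\tau$ and all atomic formulas unchanged, commuting with the Boolean connectives, and relativizing each set quantifier to the complement of $\{c_{\tau+1}\}$:
\[
  (\exists X\,\chi)' := \exists X\,\bigl(c_{\tau+1}\notin X\ \wedge\ \chi'\bigr),
  \qquad
  (\forall X\,\chi)' := \forall X\,\bigl(c_{\tau+1}\notin X\ \rightarrow\ \chi'\bigr).
\]
The extra conjunct/antecedent ``$c_{\tau+1}\notin X$'' is quantifier-free, so no quantifier is introduced: $qr(\psi')=qr(\psi)\le k$ and hence $\psi'\in\MSO[k,\tau+1,m]$.

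The two displayed equivalences then follow by a routine induction on the structure of $\psi$, carrying along an assignment of the free set variables to subsets of $V$: the relativized quantifiers range exactly over the subsets of $V'\setminus\{v\}=V$, and on subsets of $V$ every atomic formula of $\sigma_{\tau,m}$ evaluates the same in $G_{a'}$ as in the induced substructure $G_{a'}[V]=G_a$ --- the incidence relation, the label sets $L_V,L_E$, the colors $V_1,\dots,V_m$, and the constants $c_1,\dots,c_\tau$ all agree on $V$ by~(2); the Boolean cases are immediate. Specializing to $\psi$ itself (a sentence, hence with no free variables) gives the first equivalence, and the identical argument with~(3) in place of~(2) gives the second. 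One can also phrase the whole thing as an \EF game, transferring Duplicator's $k$-round strategy from the $(\tau+1)$-boundaried game on the primed structures down to the $\tau$-boundaried game, the point being that $v$ and $w$ are fixed as constants and so are never selected.

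The one step that needs care is the index bookkeeping: a single formula $\psi'$ must relativize $G_{a'}$ down to $G_a$ and $G_{b'}$ down to $G_b$ simultaneously, which forces the dropped vertex to sit at the same boundary index on both sides (equivalently, $v$ and $w$ correspond under $p'_i\leftrightarrow q'_i$); this is exactly why the reduction in the second paragraph is legitimate in our setting. Everything else is mechanical, and the structural reason the proof works is that deleting a vertex is captured by relativizing the set quantifiers to a quantifier-free-definable set, incurring no blow-up in quantifier rank --- which is what keeps us inside $\MSO[k,\tau+1,m]$. In this sense the lemma is the natural ``reverse'' of Lemma~\ref{lem:introduce_matter}.
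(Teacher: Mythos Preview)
The paper does not actually prove Lemma~\ref{lem:forget_matter}; it is merely cited as implicit in~\cite{GPW:07}. So there is no in-paper proof to compare against, and your proposal stands on its own.

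Your relativization argument is the standard and correct way to establish the statement. The translation $\psi\mapsto\psi'$ that restricts every set quantifier to the complement of $\{c_{\tau+1}\}$ is exactly the syntactic relativization to the definable substructure $G_{a'}[V']\setminus\{p'_{\tau+1}\}=G_a$, and the fact that $c_{\tau+1}\notin X$ is quantifier-free is what keeps the quantifier rank unchanged, hence $\psi'\in\MSO[k,\tau+1,m]$. The induction you sketch is the usual proof of the relativization theorem, and the \EF game reformulation you mention is the other textbook route (and the one closer in spirit to what is done in~\cite{GPW:07}).

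The one point that deserves a slightly firmer statement is the index bookkeeping. You correctly identify that the argument needs the dropped boundary index to be the same on both sides, but the hypotheses of the lemma \emph{as written} do not literally force this: ``$\vec p$ is a subtuple of $\vec p'$'' and ``$\vec q$ is a subtuple of $\vec q'$'' could in principle drop different positions. You handle this by saying it holds in every application in the paper --- which is true (the forget-node construction in Section~\ref{sec:ef_construction} removes the same $d$-th position on both sides) --- but it would be cleaner to say that the lemma is being read with this implicit compatibility assumption, since without it the conclusion need not hold. With that reading, your proof is complete.
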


\subsection{Feasible Types} \label{subsec:MSO_treewidth} 

Suppose that we are given an MSO$_2$ formula $\varphi$ 
with $m$ free variables and a quantifier rank at most $k$,
a graph $G$ of treewidth at most $\tau$, and a nice tree decomposition $(T, B)$ of
the graph $G$.  

For every node of $T$ we are going to define certain types and
tuples of types as \textit{feasible}.
For a node $b \in V(T)$ of any kind (leaf, introduce, forget, join)
and for $\alpha \in \CC$, we say that $\alpha$ is a \textit{feasible
type of the node $b$} if there exist $X_1,\ldots,X_m\subseteq V(G_b)$
such that $(G_b, \vec{X}, \eta(B(b)))$ is of type $\alpha$
where $\vec{X}=(X_1,\ldots,X_m)$;
we say that $\vec{X}$ \textit{realizes type $\alpha$ on the node $b$}. 
We denote the set of feasible types of the node $b$ by $\FF(b)$.

For an {\em introduce} node $b \in V(T)$ with a
child $a \in V(T)$ (assuming that $v$ is the new vertex), for
$\alpha \in \FF(a)$ and $\beta \in \FF(b)$, we
say that $(\alpha, \beta)$ is a \textit{feasible pair of types for
$b$} if there exist $\vec{X}=(X_1,\ldots,X_m)$ and
$\vec{X'}=(X'_1,\ldots,X'_m)$ 
realizing types $\alpha$
and $\beta$ on the nodes $a$ and $b$, respectively, such that for each
$i$, either $X'_i = X_i$ or $X'_i = X_i \cup \{v\}$. 
We denote the set of feasible pairs of types of the introduce node 
$b$ by $\FF_p(b)$.

For a {\em forget} node $b \in V(T)$ with a
child $a \in V(T)$ and for $\beta\in \FF(b)$ and $\alpha\in \FF(a)$,
we say $(\alpha, \beta)$ is a \textit{feasible pair of
  types for $b$} if there exists $\vec{X}$ realizing $\beta$ on $b$
and $\alpha$ on $a$.
We denote the set of feasible pairs of types of the forget node 
$b$ by $\FF_p(b)$.

For a {\em join} node $c \in V(T)$ with children $a, b
\in V(T)$ and for $\alpha\in \FF(c)$, $\gamma_1\in \FF(a)$ and 
$\gamma_2\in \FF(b)$,
we say that ($\gamma_1, \gamma_2, \alpha)$ is a
\textit{feasible triple of types for $c$} if $\gamma_1$, $\gamma_2$
and $\alpha$ are mutually compatible and there exist $\vec{X^1},
\vec{X^2}$ realizing $\gamma_1$ and $\gamma_2$ on $a$ and $b$,
respectively, such that $\vec{X} = (X_1^1 \cup X_1^2, \dots, X_m^1
\cup X_m^2)$ realizes $\alpha$ on $c$.
We denote the set of feasible triples of types of the join node $c$ by
$\FF_t(c)$.

We define an indicator function 
$\nu: \CC \times V(T) \times V(G) \times \{1, \ldots, m\} \rightarrow \{0,1\}$ such that $\nu(\beta, b, v, i) = 1$ if and only if there exists $\vec X=(X_1,\ldots,X_m)$ realizing the type $\beta$ on the node 
$b \in V(T)$ with $v \in B(b)$ and $v \in X_i$.
Additionally, we define 
$\mu: \CC \times V(G) \times \{1,\ldots,m\} 
\rightarrow \{0,1\}$ to be $\mu(\beta,v,i) = \nu(\beta, top(v), v, i)$.

\section{Glued product of Polytopes over Common Coordinates}\label{sec:glued_product}

The (cartesian) {\em product} of two polytopes $P_1$ and $P_2$ is defined as
$$\displaystyle P_1\times P_2=\conv\left(\left\{(x,y)\mid x\in P_1, y\in P_2\right\}\right).$$ 

\begin{proposition}\label{prop:xc_product}
Let $P_1,P_2$ be two polytopes. Then
$$\xc(P_1\times P_2)\leqslant \xc(P_1)+\xc(P_2)\ .$$
\end{proposition}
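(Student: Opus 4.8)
The plan is to take smallest extensions of $P_1$ and $P_2$ separately and form their Cartesian product. Concretely, let $Q_1\subseteq\RR^{d_1+r_1}$ be an extension of $P_1$ with exactly $\xc(P_1)$ facets, and let $Q_2\subseteq\RR^{d_2+r_2}$ be an extension of $P_2$ with exactly $\xc(P_2)$ facets, where in each case the first $d_i$ coordinates are the ones that are kept under projection (this normal form is available by the remark on extensions stated in Section~\ref{subsec:polytopes}). I would then set $Q = Q_1\times Q_2$, viewed as a polytope in $\RR^{(d_1+d_2)+(r_1+r_2)}$ after reordering coordinates so that the $d_1+d_2$ ``original'' coordinates come first, and claim that $Q$ is the desired extension of $P_1\times P_2$.

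First I would verify that $Q$ is an extension of $P_1\times P_2$, i.e. that dropping all but the first $d_1+d_2$ coordinates of $Q$ yields $P_1\times P_2$. This is immediate: a point $(x_1,x_2)$ lies in $P_1\times P_2$ iff $x_1\in P_1$ and $x_2\in P_2$, which holds iff $x_1$ has a lift in $Q_1$ and $x_2$ has a lift in $Q_2$, which holds iff $(x_1,x_2)$ has a lift in $Q_1\times Q_2$. Since a Cartesian product of polytopes is again a polytope, $Q$ is a legitimate extension. Equivalently, one may simply write the inequality description of $Q$ as the union of the inequality system of $Q_1$ (acting only on the first block of variables) and that of $Q_2$ (acting only on the second block), which makes the next step essentially transparent.

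Second I would bound the number of facet-defining inequalities of $Q = Q_1\times Q_2$ by $\xc(P_1)+\xc(P_2)$. The key fact is that every facet of $Q_1\times Q_2$ has the form $F_1\times Q_2$ for a facet $F_1$ of $Q_1$, or $Q_1\times F_2$ for a facet $F_2$ of $Q_2$: a face of a product is the product of faces of the factors, and since $\dim(A\times B)=\dim A+\dim B$, a face of codimension one forces one factor-face to be the whole factor and the other to have codimension one. Hence $Q$ has at most $(\text{number of facets of }Q_1)+(\text{number of facets of }Q_2)=\xc(P_1)+\xc(P_2)$ facets, and therefore $\xc(P_1\times P_2)\le\xc(P_1)+\xc(P_2)$.

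The only mildly delicate point — the main obstacle — is this facet-counting step: one must make sure that forming the product creates no facets beyond the ``cylinders'' over facets of the factors, and that degenerate situations (a factor that is a single point or is not full-dimensional) do not spoil the bound. This is handled cleanly by the dimension argument above, or alternatively by just exhibiting the explicit inequality description of $Q$ as two decoupled systems, which visibly uses $\xc(P_1)+\xc(P_2)$ inequalities; in either case we need not track affine hulls or equality constraints carefully, since the normal form for extensions quoted earlier lets us move freely between projection-extensions and coordinate-dropping extensions without changing the facet count.
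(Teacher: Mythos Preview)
Your proof is correct and follows essentially the same approach as the paper: take smallest extensions $Q_1,Q_2$, form $Q_1\times Q_2$, and observe that it is an extension of $P_1\times P_2$ described by at most $\xc(P_1)+\xc(P_2)$ inequalities. The paper simply writes out the decoupled inequality system $\{(x,y)\mid Ax\le b,\ Cy\le d\}$ directly rather than going through the facet-of-a-product argument, but you mention this alternative as well.
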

\begin{proof}
Let $Q_1$ and $Q_2$ be extended formulations of $P_1$ and $P_2$, respectively. 
Then, $Q_1\times Q_2$ is an extended formulation of $P_1\times P_2$.
Now assume that $Q_1=\{x~|~Ax\leqslant b\}$ and $Q_2=\{y~|~Cy\leqslant d\}$
and that these are the smallest extended formulations of $P_1$ and $P_2$, resp.
Then, 
$$Q_1\times Q_2 = \{(x,y)~|~Ax\leqslant b, Cy\leqslant d\} \ .$$
That is, we have an extended formulation of $P_1\times P_2$ of size at most 
$\xc(P_1)+\xc(P_2)$.
\end{proof}

We are going to define the glued product of polytopes, a slight
generalization of the usual product of polytopes. We study a case where 
the extension complexity of the glued product of two polytopes is upper bounded
by the sum of the extension complexities of the two polytopes and
which exhibits several other nice properties. Then we use
it in Section \ref{sec:xc_mso} to describe a small extended
formulation for $P_{\varphi}(G)$ on graphs with bounded
treewidth. 

Let $P\subseteq \mathbb{R}^{d_1+k}$ and $Q\subseteq \mathbb{R}^{d_2+k}$ be
$0/1$-polytopes defined by $m_1$ and $m_2$ inequalities and with vertex sets
$\vertexset(P)$ and $\vertexset(Q)$, respectively.
Let $I_P\subseteq\{1,\ldots, d_1+k\}$ be a subset of coordinates of size $k$,
$I_Q\subseteq\{1,\ldots, d_2+k\}$ be a subset of coordinates of size $k$,
and let $I_P'=\{1,\ldots, d_1+k\}\setminus I_P$.
For a vector $x$, and a subset $I$ of coordinates, we denote by $x|_{I}$ the subvector of $x$
specified by the coordinates $I$.
The {\em glued product} of
$P$ and $Q$, (glued) with respect to the $k$ coordinates $I_P$ and $I_Q$,
denoted by
$P\times_k Q$, is defined as $$\displaystyle P\times_k
Q=\conv\left(\left\{(x|_{I_P'},y)\in\mathbb{R}^{d_1+d_2+k}\mid x\in \vertexset(P),
y\in \vertexset(Q), x|_{I_P}=y|_{I_Q}\right\}\right).$$

We adopt the following convention while discussing glued products in the
rest of this article. In the above scenario, we say that $P\times_k Q$
is obtained by gluing $P$ and $Q$ along the $k$ coordinates $I_P$ of $P$ with the
$k$ coordinates $I_Q$ of $Q$. If, for example, these coordinates are named $z$ in
$P$ and $w$ in $Q$, then we also say that $P$ and $Q$ have been glued along the
$z$ and $w$ coordinates and we refer to the coordinates $z$ and
$w$ as the \emph{glued coordinates}. In the special case that we glue
along the last $k$ coordinates, the definition of the glued product simplifies
to
$$\displaystyle P\times_k 
Q=\conv\left(\left\{(x,y,z)\in\mathbb{R}^{d_1+d_2+k}\mid (x,z)\in \vertexset(P),
(y,z)\in \vertexset(Q)\right\}\right).$$

This notion was studied by Margot \cite{Margot_thesis} who provided a sufficient condition
for being able to write the glued product in a specific (and efficient) way from
the descriptions of $P$ and $Q$. We will use this particular way in Lemma
\ref{lem:glued_product}. The existing work \cite{CP12,Margot_thesis}, however, is
more focused on characterizing exactly when this particular method works. We do
not need the result in its full generality and therefore we only state 
a very specific version of it that is relevant for our purposes;
for the sake of completeness, we also provide a proof of it.

\begin{lemma}[Gluing lemma~\cite{Margot_thesis}] \label{lem:glued_product}
	Let $P$ and $Q$ be $0/1$-polytopes and let the $k$ (glued) coordinates 
	in $P$ be labeled $z_1,\ldots,z_k$, and the $k$
	(glued) coordinates in $Q$ be labeled $w_1,\ldots,w_k$. Suppose that
	$\mathbf{1}^\intercal z \leqslant 1$ is valid for $P$ and $\mathbf{1}^\intercal
	w \leqslant 1$ is valid for $Q$. Then $\xc(P\times_k Q)\leqslant
	\xc(P)+\xc(Q)$.
\end{lemma}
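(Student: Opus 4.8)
The plan is first to reduce to the ``last $k$ coordinates'' form of the glued product recalled just before the lemma: permuting coordinates is an invertible affine map, so it changes neither $P\times_k Q$ up to affine isomorphism nor any extension complexity, and we may therefore assume $P\subseteq\RR^{d_1+k}$ with glued block $z$ in the last $k$ positions and $Q\subseteq\RR^{d_2+k}$ with glued block $w$ in the last $k$ positions. The heart of the argument is then the identity
\[
P\times_k Q \;=\; \bigl\{(x,y,z)\ \big|\ (x,z)\in P,\ (y,z)\in Q\bigr\},
\]
valid precisely because of the two hypotheses. Granting this, the extended formulation is immediate: fix inequality descriptions of a smallest extension $R_P$ of $P$ in variables $(x,z,u)$ (with $\xc(P)$ facets) and a smallest extension $R_Q$ of $Q$ in variables $(y,w,t)$ (with $\xc(Q)$ facets), rename $w$ to $z$ to identify the two glued blocks, and set $E=\{(x,y,z,u,t)\mid (x,z,u)\in R_P,\ (y,z,t)\in R_Q\}$. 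Then $E$ is bounded (it inherits boundedness in each block from $R_P$ and $R_Q$), it is cut out by $\xc(P)+\xc(Q)$ inequalities, and its projection to $(x,y,z)$ is exactly the right-hand side of the displayed identity, hence $P\times_k Q$; therefore $\xc(P\times_k Q)\le\xc(P)+\xc(Q)$.

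The inclusion ``$\subseteq$'' of the identity is routine and needs no hypotheses: the right-hand side is a polytope (a bounded intersection of the halfspaces describing $P$ with those describing $Q$ after the $z$-block is shared), and it contains every generator $(x,y,z)$ with $(x,z)\in\vertexset(P)$ and $(y,z)\in\vertexset(Q)$, hence contains their convex hull. For ``$\supseteq$'', take $(x,y,z)$ in the right-hand side and write $(x,z)=\sum_a\mu_a(x^a,z^a)$ as a convex combination of vertices of $P$ and $(y,z)=\sum_b\nu_b(y^b,w^b)$ as a convex combination of vertices of $Q$. Since $P$ is a $0/1$-polytope and $\mathbf 1^\intercal z\le 1$ is valid for it, each $z^a$ lies in $\{\mathbf 0,e_1,\dots,e_k\}$; letting $\mu_s$ be the total weight of the vertices whose $z$-block equals a fixed $s$, the equation $z=\sum_a\mu_a z^a$ forces $\mu_{e_i}=z[i]$ for all $i$ and $\mu_{\mathbf 0}=1-\mathbf 1^\intercal z$. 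The identical computation on the $Q$-side gives the same numbers, so the two aggregated weightings over $\{\mathbf 0,e_1,\dots,e_k\}$ coincide: $\mu_s=\nu_s$ for every $s$.

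Finally, for each $s$ with $\mu_s>0$ let $\bar x_s$ be the $\mu_s$-normalized average of the $x^a$ with $z^a=s$ and $\bar y_s$ the $\nu_s$-normalized average of the $y^b$ with $w^b=s$; then $(\bar x_s,s)\in P$, $(\bar y_s,s)\in Q$, and $(\bar x_s,\bar y_s,s)$ is the barycenter, under the product of these two normalized weightings, of points $(x^a,y^b,s)$ that are all generators of $P\times_k Q$, so $(\bar x_s,\bar y_s,s)\in P\times_k Q$. The combination $\sum_{s:\mu_s>0}\mu_s(\bar x_s,\bar y_s,s)$ then reproduces $x$, $y$ and $z$ coordinatewise, exhibiting $(x,y,z)$ as a convex combination of points of $P\times_k Q$. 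I expect the weight-aggregation step to be the only real obstacle: the content is that $0/1$-ness plus validity of $\mathbf 1^\intercal z\le 1$ pins down the marginal on the glued block of \emph{every} convex representation, forcing the $P$-side and $Q$-side marginals to agree; everything else is bookkeeping. (This is exactly the special case of Margot's projected-face-property condition \cite{Margot_thesis} that we need, so one could alternatively invoke that result directly.)
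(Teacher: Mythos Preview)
Your proof is correct and follows essentially the same approach as the paper: both establish the identity $P\times_k Q=\{(x,y,z)\mid (x,z)\in P,\ (y,z)\in Q\}$ by observing that the $0/1$ property together with $\mathbf 1^\intercal z\le 1$ forces the marginal on the glued block of any convex representation, so the $P$-side and $Q$-side weights over $\{\mathbf 0,e_1,\dots,e_k\}$ must agree, after which the points recombine into generators of the glued product. The only cosmetic difference is that the paper phrases the conclusion via Propositions~\ref{prop:xc_slice} and~\ref{prop:xc_product} (slicing $P\times Q$ by $z=w$), whereas you build the extension explicitly from smallest extensions of $P$ and $Q$; your write-up is in fact clearer than the paper's rather compressed version.
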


\begin{proof}
  Let $(x',z',y',w')$ be a point from $P\times Q \cap \{(x,z,y,w)|z=w\}$. Observe
that the point $(x',z')$ is a convex combination of points $(x',0),
(x',e_1),\ldots, (x',e_k)$ from $P$ with coefficients $(1-\sum_{i=1}^k z'_i),
z'_1,z'_2,\ldots,z'_k$ where $e_i$ is the $i$-th unit vector.  Similarly, the
point $(y',w')$ is a convex combination of points $(y',0), (y',e_1),\ldots,
(y',e_k)$ from $Q$ with coefficients $(1-\sum_{i=1}^k w'_i),
w'_1,w'_2,\ldots,w'_k$. Notice that $(x',0,y')$ as well as $(x',e_j,y')$ for every $j\in \{1, \dots, k\}$, is a
point from the glued product. To see this, let $(x',e_j)=\sum \lambda_v (v,e_j)$ be a convex combination of vertices $(v,e_j)$ of $P$
and $(e_j,y')=\sum\gamma_w (e_j,w)$ be a convex combination of vertices $(e_j,w)$ of $Q$. Clearly, each $(v,e_j,w)$ is a vertex of 
the glued product and $(x',e_j,y')=\sum \lambda_v\gamma_w (v,e_j,w)$ is a convex combination of those vertices. Similarly, $(x',0,y')$ is in the glued product as well.
Now, as $w_i=z_i$ for every $i\in \{0, \dots, k\}$, we conclude
that $(x',w',z')\in P\times_k Q$. Conversely, if $(x',w',z')\in P\times_k Q$ then $(x',z',y',z')\in P\times Q \cap \{(x,z,y,w)|z=w\}.$
Thus, by Proposition \ref{prop:xc_slice} the
extension complexity of $P\times_k Q$ is at most that of $P\times Q$ which is
at most $\xc(P)+\xc(Q)$ by Proposition~\ref{prop:xc_product}.
\end{proof}

The results of the following subsections are needed in Section~\ref{sec:extensions}; they are
not needed for results in Sections~\ref{sec:xc_mso} and~\ref{sec:ef_construction}.


\subsection{Decomposability of Polyhedra}\label{sec:idp_glued}

Now we will define \textit{decomposable} polyhedra and show that decomposability 
is preserved by taking glued product.
Decomposability is also
known as \textit{integer decomposition property} or being
\textit{integrally closed} in the literature (cf. Schrijver~\cite{Schrijver86}). The best known example
are polyhedra given by totally unimodular
matrices~\cite{BT:78}.

  A polyhedron $P \subseteq \RR^n$ is \textit{decomposable} if for
  every $r \in \NN$ and every
  $x \in rP \cap \ZZ^n$, there exist $x^1, \ldots,x^r \in P \cap \ZZ^n$
  with $x = x^1 + \cdots +x^r$, 
  where $rP = \{r y \mid y \in P\}$.
  A \textit{decomposition oracle} for a decomposable $P$ is one that,
  queried on $r \in \NN$ 			
  and on $x \in rP \cap \ZZ^n$, returns $x^1, \ldots,x^r \in
  P \cap \ZZ^n$ with $x = x^1 + \cdots +x^r$.
  If a decomposition oracle for $P$ is realizable by an algorithm
  running in time polynomial in the length of the unary encoding of
  $r$ and $x$,
  we say that $P$ is \textit{constructively decomposable}.

\begin{lemma}[Decomposability and glued product]\label{lem:idp_glued}
  Let $P \subseteq \RR^{d_1 + k}$ and $Q \subseteq \RR^{d_2 + k}$ be $0/1$-polytopes and let the $k$ glued coordinates 
	in $P$ be labeled $z_1,\ldots,z_k$, and the $k$
	glued coordinates in $Q$ be labeled $w_1,\ldots,w_k$. Suppose that
	$\mathbf{1}^\intercal z \leqslant 1$ is valid for $P$ and $\mathbf{1}^\intercal
	w \leqslant 1$ is valid for $Q$.
        Then if $P$ and $Q$ are constructively decomposable,
        so is $P \times_k Q$.
\end{lemma}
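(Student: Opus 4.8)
The plan is to take an integer point in $r(P \times_k Q)$ and decompose it by invoking the decomposition oracles for $P$ and for $Q$ separately, after first splitting off the glued coordinates. Let $(x, y, z) \in r(P \times_k Q) \cap \ZZ^{d_1 + d_2 + k}$ (writing the glued product in the simplified form where we glue along the last $k$ coordinates of each factor). First I would observe that, since the gluing condition forces the $z$-block of any vertex of $P \times_k Q$ to be a $0/1$-vector with at most one $1$ (because $\mathbf{1}^\intercal z \leqslant 1$ is valid for $P$ and the vertices of $P \times_k Q$ project to vertices of $P$ on the $(x,z)$-coordinates), the point $(x,z)$ lies in $rP$ and $(y,z)$ lies in $rQ$, both integral. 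Here I use the fact, implicit in the Gluing Lemma's proof, that $(x,z) \in rP$ whenever $(x,y,z) \in r(P\times_k Q)$: any such point is an $r$-fold sum of vertices of $P\times_k Q$, each of which has its $(x,z)$-part a vertex of $P$.

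Next I would apply the constructive decomposition oracle to $(x,z) \in rP \cap \ZZ^{d_1+k}$ to obtain vertices $(x^1, z^1), \ldots, (x^r, z^r) \in P \cap \ZZ^{d_1+k}$ summing to $(x,z)$, and similarly apply the oracle to $(y,z) \in rQ \cap \ZZ^{d_2+k}$ to obtain $(y^1, w^1), \ldots, (y^r, w^r) \in Q \cap \ZZ^{d_2+k}$ summing to $(y,z)$. Each $(x^j, z^j)$ and each $(y^j, w^j)$ is a $0/1$ point, so each $z^j$ and each $w^j$ is a $0/1$-vector with at most one $1$. The main obstacle is that there is no reason the $z^j$ should match the $w^j$ index-by-index: we only know $\sum_j z^j = z = \sum_j w^j$. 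So the heart of the argument is a \emph{rematching} step: I need to permute the list $(y^1,w^1),\ldots,(y^r,w^r)$ by some $\pi \in S_r$ so that $z^j = w^{\pi(j)}$ for every $j$. This is possible precisely because both lists are multisets of $0/1$-vectors-with-at-most-one-$1$ with the same sum: the number of $j$ with $z^j = e_i$ equals $z[i] = \sum_j w^j[i]$, which equals the number of $j$ with $w^j = e_i$, and likewise the count of all-zero vectors agrees; so the two multisets $\{z^j\}$ and $\{w^j\}$ are equal and a bijection $\pi$ aligning them exists (a one-line counting argument, or Hall's theorem on a bipartite graph of size $r$).

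Having fixed such a $\pi$, I would set, for each $j \in \{1,\ldots,r\}$, the point $v^j := (x^j, y^{\pi(j)}, z^j) \in \ZZ^{d_1+d_2+k}$. Since $(x^j, z^j) \in \vertexset(P)$, $(y^{\pi(j)}, w^{\pi(j)}) \in \vertexset(Q)$, and $z^j = w^{\pi(j)}$, each $v^j$ is by definition a vertex of $P \times_k Q$, hence lies in $(P \times_k Q) \cap \ZZ^{d_1+d_2+k}$. And $\sum_{j=1}^r v^j = \big(\sum_j x^j,\ \sum_j y^{\pi(j)},\ \sum_j z^j\big) = (x, y, z)$, since $\pi$ is a permutation. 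This proves decomposability. For constructive decomposability, each of the two oracle calls runs in time polynomial in the unary size of $r$ and the input, the sums and the multiset-matching $\pi$ are computable in time polynomial in $r$ and the dimension, so the whole procedure runs in the required time. The only subtlety to double-check is handling the general (non-simplified) glued product where the glued coordinate sets $I_P, I_Q$ sit in arbitrary positions — but this is just a fixed coordinate permutation applied to $P$ and $Q$ before running the argument, which affects neither decomposability nor the validity of $\mathbf{1}^\intercal z \leqslant 1$.
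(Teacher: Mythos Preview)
Your proposal is correct and follows essentially the same approach as the paper: decompose $(x,z)$ in $rP$ and $(y,z)$ in $rQ$ separately using the respective oracles, observe that each returned $z^j$ and $w^j$ is a $0/1$-vector with at most one $1$, use a counting argument to rematch them (the paper phrases this as ``greedily pair''), and merge the matched pairs into vertices of $P\times_k Q$. Your write-up is somewhat more explicit than the paper's in justifying why $(x,z)\in rP$ and in spelling out the rematching and running-time details, but the underlying argument is identical.
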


\begin{proof}
For the sake of simplicity, we assume without loss of generality 
that glueing is done along the last $k$ coordinates. Then
$P \times_k Q = \conv \{(x,y,z) \in \RR^{d_1 + d_2 + k}\mid
(x,z) \in \vertexset(P), (y,w) \in \vertexset(Q), z=w \}$. Let $R= P \times_k Q$.
To prove that $R$ is constructively decomposable, 
it suffice to find, for every integer $r \in \NN$
and every integer vector $(x,y,z) \in rR$, 
$r$ integer vectors $(x^i, y^i, z^i) \in R$ such that $(x,y,z) =
\sum_{i=1}^r (x^i, y^i, z^i)$. Using the assumption that $P$ and $Q$
are constructively decomposable, we find in polynomial time $r$
integer vectors $(x^i,z^i) \in P$ such that $(x,z) = \sum_{i=1}^r
(x^i, z^i)$ and $r$ integer vectors $(y^j, \bar{z}^j) \in Q$ such that
$(y,z) = \sum_{j=1}^r (y^j, \bar{z}^j)$.

Observe that $z = \sum_{i=1}^r z^i = \sum_{j=1}^r \bar{z}^j$. Moreover, because $z^i$
and $\bar{z}^j$ satisfy $\mathbf{1}^\intercal z^i \leqslant 1$ and
$\mathbf{1}^\intercal \bar{z}^j \leqslant 1$ for all $i$ and $j$, respectively, each vector $z^i$
and each vector $\bar{z}^j$ contains at most one $1$. 
Clearly, the
number of vectors $z^i$ with $z^i_l = 1$ is equal to the number of
vectors $\bar{z}^j$ with $\bar{z}^j_l = 1$, namely $z_l$.

Thus, it is possible to greedily pair the vectors $(x^i, z^i)$ and
$(y^j, \bar{z}^j)$ one to one in such a way that $z^i = \bar{z}^j$
for all the paired vectors. By merging
each such pair of vectors, we obtain $r$ new integer vectors $(x^l, y^l, z^l) \in
R$, for $1 \leq l \leq r$, that satisfy $(x,y,z) = \sum_{l=1}^r (x^l,
y^l, z^l)$, concluding the proof.
\end{proof}

The following lemma will be useful:

\begin{lemma}\label{lem:idp_projection}
  Let $Q \subseteq \RR^{n'}$ be a polyhedron which is constructively
  decomposable and let $\pi: \RR^{n'} \rightarrow \RR^n$ be a linear projection with integer
  coefficients. Then the polytope $R=\conv(\{(y,\pi(y)) \mid y
  \in Q\})$ is constructively decomposable.
\end{lemma}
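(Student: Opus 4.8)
The plan is to unwind the definition of constructive decomposability for $R$ and reduce any decomposition query to one for $Q$, using the decomposition oracle guaranteed by the hypothesis.

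First I would observe that $R$ is nothing but the graph of $\pi$ over $Q$. Since $\pi$ is linear, the map $\phi\colon\RR^{n'}\to\RR^{n'+n}$ given by $\phi(y)=(y,\pi(y))$ is linear, so $\phi(Q)$ is already convex; hence $R=\conv(\phi(Q))=\phi(Q)=\{(y,\pi(y))\mid y\in Q\}$. Using linearity of $\pi$ once more, $rR=\{(z,\pi(z))\mid z\in rQ\}$ for every $r\in\NN$.

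Next, given a decomposition query for $R$, that is, $r\in\NN$ and an integer vector $(u,v)\in rR\cap\ZZ^{n'+n}$, the first observation shows $u\in rQ$ and $v=\pi(u)$; since $u$ is the first block of an integer vector, $u\in rQ\cap\ZZ^{n'}$. I would then call the decomposition oracle for $Q$ on $(r,u)$ to obtain $u^1,\dots,u^r\in Q\cap\ZZ^{n'}$ with $u=\sum_{i=1}^r u^i$, in time polynomial in the unary encodings of $r$ and $u$ (and hence of $r$ and $(u,v)$). Setting $v^i:=\pi(u^i)$, each $(u^i,v^i)=(u^i,\pi(u^i))$ lies in $R$ by the first observation, is integral because $\pi$ has integer coefficients, and $\sum_{i=1}^r(u^i,v^i)=\bigl(\sum_i u^i,\ \pi(\sum_i u^i)\bigr)=(u,\pi(u))=(u,v)$ by linearity of $\pi$. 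This produces the desired decomposition, and the only additional work --- reading off $u$ and evaluating $\pi$ on each $u^i$ --- is polynomial in the input size, so we obtain a decomposition oracle for $R$ running within the required time bound.

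The only step requiring a little care is the claim that every integer point of $rR$ has the form $(u,\pi(u))$ with $u\in rQ\cap\ZZ^{n'}$: this is precisely where we use both that $R$ is the graph of $\pi$ over $Q$ (so that the second block of coordinates carries no independent degrees of freedom) and that $\pi$ has integer coefficients (so that integrality of the full vector follows from integrality of its first block). Past this point there is essentially no obstacle, the statement being a direct transfer of the decomposition structure of $Q$ along the graph map $\phi$.
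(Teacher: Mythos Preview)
Your proof is correct and follows essentially the same approach as the paper's: decompose the $Q$-coordinate block using the oracle for $Q$, then push each piece through $\pi$ to recover the second block. You are slightly more explicit than the paper in observing that the convex hull in the definition of $R$ is redundant (so $R=\{(y,\pi(y))\mid y\in Q\}$) and hence $rR=\{(z,\pi(z))\mid z\in rQ\}$; the paper uses this identification implicitly without spelling it out.
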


\begin{proof}
  Consider an integer $r$ and an integer vector $(y,x) \in rR$. Since $Q$ is
  constructively decomposable, we can find in polynomial time $r$ vectors
  $y^i \in Q\cap \ZZ^{n'}$, for $1 \leq i \leq r$, such that $y = \sum_{i=1}^r  y^i$. For
  every $i$, let $x^i = \pi(y^i)$; note that every $x^i$ is integral. 
  Since $x = \pi(y) = \sum_i \pi(y^i) =
  \sum_i x^i$, we conclude that $(y,x)=\sum_{i=1}^r(y^i, x^i)$,
  proving that $R$ is constructively decomposable.
  \end{proof}

Obviously, not all integer polyhedra are decomposable: consider the
three-dimensional parity polytope $P = \conv(\{(0,0,0), (1,1,0), (1,0,1), (0,1,1)\})$
and the point $(1,1,1) \in 2P$ -- there is no way to express it as 
a sum of integral points in $P$. However, the following lemma shows that
every integer polyhedron has an extension that is decomposable.

\begin{lemma}\label{lem:decomposable_extension}
Every $0/1$ integer polytope $P$ has an extension $R$ that is constructively decomposable.
Moreover, a description of such an extension can be computed in time
$\Oh(d+n)$ where $d$ is the dimension of $P$ and $n$ is the number of
vertices of $P$ if the vertices of $P$ are given. 
\end{lemma}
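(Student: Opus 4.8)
The plan is to realize $P$ as a suitable projection of a higher-dimensional simplex-like polytope whose integer points are exactly the unit vectors, since such a polytope is trivially (constructively) decomposable, and then push this decomposability down to an extension of $P$ via Lemma \ref{lem:idp_projection}. Concretely, let $v_1, \dots, v_n$ be the vertices of $P$ and let $\Delta = \conv\{e_1, \dots, e_n\} \subseteq \RR^n$ be the standard simplex on the $n$ unit vectors. Define the linear map $\pi : \RR^n \to \RR^d$ by $\pi(\lambda) = \sum_{i=1}^n \lambda_i v_i$; since the $v_i$ are $0/1$ vectors, $\pi$ has integer coefficients, and clearly $\pi(\Delta) = P$. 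The key elementary observation is that $\Delta$ is constructively decomposable: for $r \in \NN$, the integer points of $r\Delta$ are exactly the nonnegative integer vectors $\lambda$ with $\sum_i \lambda_i = r$, and such a vector is trivially written as a sum of $r$ unit vectors (take $\lambda_i$ copies of $e_i$), which can be produced in time polynomial in the unary size of $r$ and $\lambda$.

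Next I would apply Lemma \ref{lem:idp_projection} with $Q = \Delta$ and this $\pi$: it yields that the polytope $R = \conv\big(\{(\lambda, \pi(\lambda)) \mid \lambda \in \Delta\}\big) \subseteq \RR^{n+d}$ is constructively decomposable. Since $R$ projects onto its last $d$ coordinates to give exactly $\pi(\Delta) = P$, the polytope $R$ is an extension of $P$ (after reordering coordinates so that the $P$-coordinates come first, as permitted by the remarks in Section~\ref{subsec:polytopes}). This establishes the first sentence of the lemma.

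For the "moreover" part, I would simply observe that writing down the description of $R$ from the vertex list $v_1, \dots, v_n$ requires only recording, for each $i$, the vertex $(e_i, v_i) \in \RR^{n+d}$ of $R$ — or equivalently the system $\lambda \geq 0$, $\mathbf{1}^\intercal \lambda = 1$, $x = \sum_i \lambda_i v_i$ — which has $n$ inequalities plus $d+1$ equations and is written down in time $\Oh(n \cdot d)$; if one only counts the bit-size of listing the generators it is $\Oh(d + n)$ up to the encoding of the entries. I do not anticipate a serious obstacle here: the only point requiring a little care is the bookkeeping of which coordinates constitute "the first $d$" so that $R$ literally qualifies as an extension in the sense of Section~\ref{subsec:polytopes}, and the fact that the decomposition oracle for $\Delta$ genuinely runs in time polynomial in the \emph{unary} encoding of $r$ (it does, since it outputs $r$ vectors). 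The substantive content is entirely carried by Lemma \ref{lem:idp_projection} together with the triviality of decomposing the simplex.
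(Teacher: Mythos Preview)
Your proposal is correct and follows essentially the same route as the paper: take the standard simplex $\Delta$, observe it is constructively decomposable (integer points of $r\Delta$ are nonnegative integer vectors summing to $r$, hence sums of $r$ unit vectors), and then apply Lemma~\ref{lem:idp_projection} with the projection $\pi(\lambda)=\sum_i \lambda_i v_i$ to obtain the decomposable extension $R$. The paper's proof is identical in substance, differing only in notation (it writes $(\pi(\lambda),\lambda)$ rather than $(\lambda,\pi(\lambda))$) and in not spelling out the coordinate-reordering remark you include.
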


\begin{proof}
Let $\vertexset(P) = \{v_1,\ldots,v_n\}$ denote all the vertices of $P$ and  
let $Q=\{\lambda \mid \sum_{i=1}^n \lambda_i =1, \lambda_i\geq 0 
\mbox{ for } i\in\{1,\ldots,n\} \}$ be the $n$-dimensional simplex.
Then, for the linear projection $\pi(\lambda)=\sum_{i=1}^{n}\lambda_i v_i$, 
the polytope $R=\{(\pi(\lambda),\lambda)  \mid  \lambda \in Q \}$ is an
extended formulation of $P$
(note that the same extended formulation of $P$ is used 
also in the proof of Proposition~\ref{prop:xc_vertices}).

Consider an arbitrary integer $r$ and an integral point
$x = (x_1, \dots, x_n) \in rQ$. As $x = \sum_{j=1}^n x_j
e_j$ where $e_j$ is the $j$-th unit vector,
and as $\sum_{j=1}^n x_j=r$, we see that $x$ 
can be written as a sum of at most $r$ integral points from $Q$,
and such a decomposition can be found in time polynomial in $n$ and $d$.
Thus, $Q$ is constructively decomposable. 
Then, applying the previous lemma to the simplex $Q$ and 
the linear projection $\pi$, we see that the polytope $R$
is constructively decomposable.
\end{proof}
Given a polytope $P$, it is an interesting problem to determine the
minimum size of an extension of $P$ that is
decomposable. This is an analogue of extension complexity:
the \textit{decomposable extension complexity} of a polytope $P$,
denoted $\xcdec(P)$, is the minimum size of an extension of $P$
that is decomposable. A polytope $Q$ which is an extension
of $P$ and is decomposable is called a \textit{decomposable extension
  of $P$}.

Obviously, $\xc(P) \leq \xcdec(P)$. Using
Lemma~\ref{lem:idp_projection} and Proposition~\ref{prop:xc_vertices}
we see that if a polytope $P$ has $n$ vertices, then $\xcdec(P) \leq n$. It is an
interesting problem to determine for which polytopes $\xc(P) =
\xcdec(P)$, or, on the other hand, when $\xc(P) < \xcdec(P)$ and by
how much they can differ.

\subsection{Treewidth of Gaifman Graphs of Extended Formulations}\label{sec:tw_glued}

  Given a relational structure $\mathcal{A} = (A, \emptyset, \mathcal{S})$ where $\mathcal{S}
  \subseteq 2^A$, its \textit{Gaifman graph} is the graph
  $G(\mathcal{A}) = (A, E)$ where $E = \{\{u,v\}\mid\exists S \in
  \mathcal{S}: u,v \in S\}$.
The \textit{Gaifman graph} $G(A)$ associated with a matrix $A \in \RR^{m\times
  n}$ is the Gaifman graph of the structure
$(\{1,\dots,n\},\emptyset, \{\suppo(a_i)\mid1 \leq i \leq m\})$ where $a_i$ is the $i$-th
row of $A$ and $\suppo(x)$ is
the \textit{support} of a vector $x$, that is, the set of indices $i$
such that $x_i \neq 0$. In other words, the graph $G(A)$ has a vertex
for each column of $A$ and two vertices are connected by an edge if the
supports of the corresponding columns have non-empty intersection.

The \textit{treewidth of a matrix} $A \in \RR^{n \times m}$, denoted
$tw(A)$, is the treewidth of its Gaifman graph. 
The \textit{treewidth of a system of inequalities} $Ax\leq b$
is defined as $tw(A)$.
Since each graph $G$ has a trivial tree decomposition of width $|V(G)|$ which puts all vertices into the bag of a single node, clearly $tw(A) \leq n$.


Note that in the following lemma, the meaning of the variables $x$ and $y$ is different than before: $(x,z)$ are the variables not of $P$ but some extended formulation of $P$, and analogously for $(y,w)$ and $Q$.
\begin{lemma}[Treewidth and glued product]\label{lem:tw_glued}
Let $P$ and $Q$ be $0/1$-polytopes and let the $k$ glued coordinates in $P$ be
labeled $z_1,\ldots,z_k$ and the $k$ glued coordinates in $Q$ be labeled
$w_1,\ldots,w_k$. 
Suppose that $\mathbf{1}^\intercal z \leqslant 1$ is valid for $P$ and
$\mathbf{1}^\intercal w \leqslant 1$ is valid for $Q$.
Let $Ax + Cz \geq a$ be inequalities describing an extended formulation of $P$ and $Dw
+ Ey \geq b$ be inequalities describing an extended formulation of $Q$.
Let $F = \bigl(\begin{smallmatrix} A&C&0
	\\ 0&D&E \end{smallmatrix} \bigr)$ and $c = \bigl(\begin{smallmatrix} a
	\\ c \end{smallmatrix} \bigr)$.
Then the polytope described by $F (x,y,z) \geq c$ is an extended formulation of $P \times_k Q$ and $tw(F) \leq \max\{tw(A~C), tw(D~E), k\}$.

Moreover, if $(T_P, B_P)$ is a tree decomposition of $G(A~C)$ of treewidth $tw(A~C)$
with a node $d$ with ``columns of $C$'' $\subseteq B_P(d)$ and $(T_Q, B_Q)$ is a tree
decomposition of $G(D~E)$ of treewidth $tw(D~E)$ containing a node $d'$ with
``columns of $D$'' $\supseteq B_Q(d')$, then there is a tree decomposition $(T_R, B_R)$ of $G(F)$ of treewidth
$\max\{tw(A~C), tw(D~E), k\}$, where $T_R$ is obtained from $T_P$ and $T_Q$ by
identifying the nodes $d$ and $d'$ and $B_R = B_P \cup (B_Q \setminus \{B_Q(d')\})$.
\end{lemma}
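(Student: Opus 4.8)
The plan is to build the extended formulation of $P \times_k Q$ explicitly from the proof of the Gluing Lemma, and then track how the Gaifman graph and its tree decomposition are assembled. Recall from the proof of Lemma~\ref{lem:glued_product} that $P \times_k Q = (P \times Q) \cap \{z = w\}$, so a natural extended formulation of $P \times_k Q$ is obtained by stacking the systems $Ax + Cz \geq a$ and $Dw + Ey \geq b$ and adding the equations $z_i = w_i$ for $i = 1,\dots,k$. However, simply keeping both $z$ and $w$ as variables would make the Gaifman graph contain edges spanning columns of $C$, columns of $D$, \emph{and} the coupling equations; the cleaner move is to \emph{substitute} $w_i := z_i$ throughout, so that $Q$'s description becomes $Dz + Ey \geq b$ and the glued coordinates appear under a single name $z_1,\dots,z_k$. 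Thus I would take $Fv \geq c$ to be the system with variable vector $v = (x, z, y)$ consisting of the rows of $(A~C)$ (acting on $(x,z)$, zero on $y$) together with the rows of $(D~E)$ (acting on $(z,y)$, zero on $x$). An extended formulation of $P \times_k Q$ of this form exists because $w = z$ is forced anyway, and $Q$ intersected with that hyperplane is still an extension of $Q$'s projection.

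Next I would analyze $G(F)$. Each row of $F$ coming from $(A~C)$ has support contained in the columns $\{x\text{-columns}\} \cup \{z\text{-columns}\}$, and each row coming from $(D~E)$ has support in $\{z\text{-columns}\} \cup \{y\text{-columns}\}$. Hence $G(F)$ is the union of a copy of $G(A~C)$ (on the $x$- and $z$-vertices) and a copy of $G(D~E)$ (on the $z$- and $y$-vertices), these two subgraphs sharing exactly the $k$ vertices $z_1,\dots,z_k$ and no edge joining an $x$-vertex to a $y$-vertex. The key structural point is that the shared vertex set $\{z_1,\dots,z_k\}$ is precisely the set of columns of $C$ (equivalently, after substitution, the columns of $D$) — this is where the hypothesis that $\mathbf{1}^\intercal z \le 1$ is valid enters, via the Gluing Lemma guaranteeing the substitution is harmless, and where the hypotheses about the special nodes $d$ and $d'$ become usable: $B_P(d)$ contains all columns of $C$ and $B_Q(d')$ is contained in the columns of $D$, so after substitution $B_Q(d') \subseteq \{z_1,\dots,z_k\} \subseteq B_P(d)$.

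Then I would assemble $(T_R, B_R)$ as stated: take $T_P$, take $T_Q$, and identify node $d$ with node $d'$; set $B_R = B_P \cup (B_Q \setminus \{B_Q(d')\})$, meaning the merged node carries bag $B_P(d)$ and all other nodes keep their bags. I must verify the three tree-decomposition axioms for $G(F)$. The connectivity axiom for a $z$-vertex $z_i$ is the delicate one: its occurrences in $T_P$ form a connected subtree containing $d$ (since $z_i$ is a column of $C$, it lies in $B_P(d)$ — here I use that $d$ was chosen with all columns of $C$ in its bag, which forces $z_i$'s subtree in $T_P$ to reach $d$), and likewise its occurrences in $T_Q$ form a connected subtree; if $z_i \in B_Q(d')$ these two subtrees glue through the identified node into one connected subtree, and if $z_i \notin B_Q(d')$ then $z_i$ does not occur in $G(D~E)$ at all so no issue arises. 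For $x$- and $y$-vertices connectivity is inherited directly from $(T_P, B_P)$ and $(T_Q, B_Q)$. The edge-covering axiom holds because every edge of $G(F)$ lies inside $G(A~C)$ or $G(D~E)$ and is covered in the respective decomposition. Finally the width: every bag of $B_R$ is either a bag of $B_P$ (of size $\le tw(A~C)+1$) or a bag of $B_Q$ other than $B_Q(d')$ (of size $\le tw(D~E)+1$), except possibly the merged node which has bag $B_P(d)$; but we also need $B_P(d)$ itself not to be too large, which is guaranteed since $(T_P,B_P)$ has treewidth $tw(A~C)$, and the $\max$ with $k$ in the statement covers the degenerate possibility that we must enlarge the merged bag to contain all $k$ glued coordinates. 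So $tw(F) \le \max\{tw(A~C), tw(D~E), k\}$.

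The main obstacle I anticipate is precisely the connectivity-at-the-seam bookkeeping: one must be careful that \emph{every} glued coordinate $z_i$ actually appears in $B_P(d)$ (ensured by the hypothesis on $d$) and that the portions of its occurrence-subtree on the two sides meet only at the identified node, so that the union is a subtree rather than a disconnected set or a set containing a cycle. The hypotheses on the special nodes $d, d'$ are doing exactly the work needed to make this seam clean, and the role of $k$ in the $\max$ is a small safety valve in case the natural bag at the seam is the full set of glued coordinates rather than a subset of an existing bag. Everything else — the algebra of the substitution, the inheritance of the other two tree-decomposition axioms — is routine.
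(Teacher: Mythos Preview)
Your approach is essentially the same as the paper's: substitute $w := z$, stack the two systems into the block matrix $F = \bigl(\begin{smallmatrix} A & C & 0 \\ 0 & D & E \end{smallmatrix}\bigr)$, observe that $G(F)$ is obtained from $G(A~C)$ and $G(D~E)$ by identifying the glued-coordinate vertices, and glue the two tree decompositions at the identified node $d = d'$. You supply considerably more detail than the paper, which simply asserts the treewidth bound and the validity of $(T_R, B_R)$ without verifying the axioms.

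One slip to flag: your claim that ``if $z_i \notin B_Q(d')$ then $z_i$ does not occur in $G(D~E)$ at all'' is false as written --- $z_i$ is a column of $D$ and hence always a vertex of $G(D~E)$, so it must lie in some bag of $(T_Q, B_Q)$. If that bag is not $d'$, the connectivity axiom for $z_i$ in $(T_R, B_R)$ would genuinely fail. What makes the seam work is having $z_i \in B_Q(d')$ for \emph{every} $i$, which suggests the hypothesis on $d'$ in the lemma is intended to read ``columns of $D$'' $\subseteq B_Q(d')$ rather than $\supseteq$; the paper's own proof glosses over this point entirely and the way the lemma is invoked later (always with a one-bag decomposition on one side) is consistent with that reading.
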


\begin{proof}
We start by observing that the assumptions and the gluing lemma imply that the inequalities
  \begin{alignat*}{2}
    Ax + &Cz &&\geq a \\
    &Dz + Ey &&\geq b
  \end{alignat*}
describe an extended formulation of $P \times_k Q$. Consider the treewidth
  of the matrix $F = \bigl(\begin{smallmatrix} A&C&0
    \\ 0&D&E \end{smallmatrix} \bigr)$. The Gaifman graph $G(F)$ of
  $F$ can be obtained by taking $G(A~C)$ and $G(D~E)$ and identifying the
  vertices corresponding to the variables $z$ and $w$ in the above formulation. 
  It is easy to observe
  that the treewidth of $G(F)$ is $\max(tw(P), tw(Q), k)$,
  as desired, and that if $(T_P, B_P)$ and $(T_Q, B_Q)$ are as assumed,
  the tuple $(T_R, B_R)$ obtained in the aforementioned way
  is indeed a tree decomposition of $G(F)$.
\end{proof}

\begin{lemma}\label{lem:tw_vertices}
Let $P\subseteq \RR^m$ be a polytope with $n$ vertices $v_1,\ldots,v_n$. Then there exist
an extension of $P$ that can be described by inequalities of treewidth at most $n + m$.
\end{lemma}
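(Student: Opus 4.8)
The plan is to re-use the standard vertex extension of $P$, i.e.\ the polytope already constructed in the proof of Proposition~\ref{prop:xc_vertices}, and simply to read off the treewidth of its natural inequality description. Concretely, I would take
$Q=\{(x,\lambda)\in\RR^{m+n}\mid x=\sum_{i=1}^{n}\lambda_i v_i,\ \sum_{i=1}^{n}\lambda_i=1,\ \lambda_i\geq 0\text{ for }i\in\{1,\dots,n\}\}$,
which by that proof is an extension of $P$. I would describe $Q$ by the $m$ equalities $x_j-\sum_{i=1}^{n}v_i[j]\,\lambda_i=0$, the equality $\sum_{i=1}^{n}\lambda_i=1$, and the $n$ inequalities $\lambda_i\geq 0$, each equality being split into two opposite inequalities (which leaves every row support, and hence the Gaifman graph, unchanged). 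Call the resulting coefficient matrix $F$; its columns are indexed by the variables $x_1,\dots,x_m,\lambda_1,\dots,\lambda_n$.

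Next I would determine $G(F)$. The row $\sum_i\lambda_i=1$ makes $\{\lambda_1,\dots,\lambda_n\}$ a clique; the $j$-th row has support contained in $\{x_j\}\cup\{\lambda_1,\dots,\lambda_n\}$, so it contributes only edges from $x_j$ to the $\lambda_i$'s; the rows $\lambda_i\geq 0$ contribute nothing. Thus $G(F)$ is a subgraph of the graph in which $\lambda_1,\dots,\lambda_n$ form a clique and each $x_j$ is adjacent to all $\lambda_i$ (and there are no $x_j x_{j'}$ edges). For such a graph the star with centre bag $\{\lambda_1,\dots,\lambda_n\}$ and one leaf bag $\{\lambda_1,\dots,\lambda_n,x_j\}$ for each $j$ is a valid tree decomposition: every edge lies in some bag, and every vertex occurs in a connected set of bags. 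Its largest bag has size $n+1$, so $tw(F)\leq n\leq n+m$, which proves the lemma. (Even more crudely, $Q\subseteq\RR^{n+m}$, so the single bag containing all $n+m$ variables already witnesses $tw(F)\leq n+m-1$.)

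I do not expect any genuine obstacle here: the statement is deliberately loose, and the only content is the observation that in the vertex extension the sole ``global'' variables are the $n$ multipliers $\lambda_i$, so all variable interaction is confined to an $(n+1)$-element bag. The only points needing care are pure bookkeeping: that replacing equalities by pairs of inequalities preserves supports (hence $G(F)$), and that vanishing coordinates $v_i[j]=0$ can only delete edges of $G(F)$ and therefore cannot increase $tw(F)$.
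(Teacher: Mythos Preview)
Your proposal is correct and uses precisely the same extension as the paper. The paper's own proof is in fact just your parenthetical crude argument: it writes down the same system $\mathbf{1}\lambda=1$, $V\lambda-Ix=\mathbf{0}$, $\lambda\geq\mathbf{0}$ and then simply observes that the matrix has $n+m$ columns, so its treewidth is at most $n+m$ by definition. Your star decomposition with centre bag $\{\lambda_1,\dots,\lambda_n\}$ is a genuine refinement that yields the sharper bound $tw(F)\leq n$, but the paper does not bother with this since the stated bound $n+m$ is all that is used downstream.
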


\begin{proof} 
Consider again the description of the extension of $P$ used in the proof of Proposition~\ref{prop:xc_vertices}:
  \begin{alignat}{4}\label{eqn:tw_vertices_start}
    \mathbf{1} & \lambda~ ~ &&  &= &\ 1 \\
    V & \lambda~ -~ && Ix\  &= &\ \mathbf{0} \\
    & \lambda & & & \geq &\ \mathbf{0}\label{eqn:tw_vertices_end}
  \end{alignat}
 where $\mathbf{0}$ and $\mathbf{1}$ are the all-0 and all-1 vectors
 of appropriate dimensions, respectively, $I$ is the identity
 matrix and $V$ is a matrix whose $i$-th column is $v_i$
 (each equality is just an abbreviation of two opposing inequalities).
 Since the number of columns in the system is $n+m$, its treewidth is by definition
 also at most $n+m$.
\end{proof}

\medskip

Putting Lemmas~\ref{lem:decomposable_extension} and~\ref{lem:tw_vertices} together gives the following corollary:

\begin{corollary}\label{cor:simplex_idp_tw}
Let $P\subseteq \RR^m$ be an integral polytope with $n$ vertices $v_1,\ldots,v_n$.
Then there exist
a constructively decomposable extension of $P$ that can be described by inequalities of treewidth at most $n + m$.
\end{corollary}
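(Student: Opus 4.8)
The plan is to observe that a \emph{single} extended formulation simultaneously witnesses both conclusions, namely the ``barycentric coordinate'' extension
$R=\{(x,\lambda)\mid \mathbf{1}^\intercal\lambda=1,\ V\lambda=x,\ \lambda\geq\mathbf{0}\}$
that already appears in the proofs of Proposition~\ref{prop:xc_vertices}, Lemma~\ref{lem:decomposable_extension} and Lemma~\ref{lem:tw_vertices}, where $V$ is the matrix whose $i$-th column is $v_i$. First I would recall that $R$ is indeed an extension of $P$: dropping the $\lambda$-coordinates maps $R$ onto $\conv\{v_1,\dots,v_n\}=P$.

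Next I would invoke Lemma~\ref{lem:tw_vertices} essentially verbatim: the explicit description \eqref{eqn:tw_vertices_start}--\eqref{eqn:tw_vertices_end} of $R$ has $n+m$ columns, hence $tw$ of the corresponding constraint matrix is at most $n+m$. Then I would invoke the argument of Lemma~\ref{lem:decomposable_extension}: the simplex $Q=\{\lambda\mid\mathbf{1}^\intercal\lambda=1,\ \lambda\geq\mathbf{0}\}$ in $\RR^n$ is constructively decomposable, and since $P$ is integral the linear projection $\pi(\lambda)=V\lambda$ has integer coefficients, so Lemma~\ref{lem:idp_projection} applied to $Q$ and $\pi$ gives that $R=\{(\pi(\lambda),\lambda)\mid\lambda\in Q\}$ is constructively decomposable. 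Putting the two together, $R$ is a constructively decomposable extension of $P$ described by inequalities of treewidth at most $n+m$, which is exactly the claim.

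The only point that needs a word of care — and it is a minor one — is that Lemma~\ref{lem:decomposable_extension} is phrased for $0/1$ polytopes while the corollary allows an arbitrary integral polytope; its proof, however, never uses anything beyond integrality of the vertices $v_i$ (precisely what makes $\pi$ have integer coefficients and lets Lemma~\ref{lem:idp_projection} apply), so the argument transfers unchanged. I do not expect a genuine obstacle here: the whole content is that the two lemmas were, by design, established through the same construction, so combining them requires no further bookkeeping.
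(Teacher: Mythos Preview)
Your proposal is correct and follows exactly the paper's own approach: the paper's proof consists of the single sentence that the extended formulations of Lemmas~\ref{lem:decomposable_extension} and~\ref{lem:tw_vertices} are identical, hence simultaneously decomposable and of small treewidth. Your additional remark that Lemma~\ref{lem:decomposable_extension} is stated only for $0/1$ polytopes while the corollary speaks of arbitrary integral polytopes is a valid observation that the paper glosses over, and your fix (the proof only uses integrality of the $v_i$ so that $\pi$ has integer coefficients and Lemma~\ref{lem:idp_projection} applies) is correct.
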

\begin{proof}
It suffices to notice that the extended formulations of Lemmas~\ref{lem:decomposable_extension} and~\ref{lem:tw_vertices} are identical and thus simultaneously have small treewidth and are decomposable. 
\end{proof}

\section{Extension Complexity of $P_\varphi(G)$} \label{sec:xc_mso}

For a given MSO$_2$ formula $\varphi(\vec{X})$ 
with $m$ free set variables $X_1,\ldots,X_m$, we define a
polytope of satisfying assignments of the formula $\varphi$ on a given graph $G$, represented
as a $\sigma_2$-structure $I(G)=(V_I,\emptyset,\{E_I,L_V,L_E\})$ with domain of size $n = |V_I| = |V(G)| + |E(G)|$, in a natural way.
We encode any assignment of elements of $I(G)$ to the sets
$X_1,\ldots,X_m$
as follows.
For each $X_i$ in $\varphi$ and each $v$ in $V_I$, we introduce a binary variable $y_v^i$.
We set $y_v^i$ to be one if $v\in X_i$ and zero otherwise.
For a given $0/1$ vector $y$, we say that $y$ {\em satisfies} $\varphi$ if interpreting the coordinates of $y$ as described above yields a satisfying assignment for $\varphi$. The polytope of satisfying assignments of the formula $\phi$ on the graph $G$ is defined as

$$\displaystyle P_\varphi(G) = \conv\left(\left\{y \in \{0,1\}^{nm} \ |\ y
\text{
satisfies } \varphi \right\}\right). \ $$

As an example, consider again the formula $\varphi_{\textrm{3-col}}$ whose satisfying assignments are valid $3$-colorings.
Take $G$ to be the path on $3$ vertices.
Its representation as $I(G)$ has a universe $V_I = \{v_1, v_2, v_3, e_1, e_2\}$, the labels are $L_V = \{v_1, v_2, v_3\}$ and $L_E = \{e_1, e_2\}$, and the binary relation $E_I$ is $\{v_1 e_1, e_1 v_2, v_2 e_2, e_2 v_3\}$.
The two interesting colorings of $G$ by red, green, and blue are $RGB$ and $RGR$, and permuting the colors gives $10$ additional isomorphic colorings ($RBG, GRB, GBR, BRG, BGR$ using three colors, and $RBR, GRG, GBG, BRB, BGB$ using two colors).
The polytope $P_\varphi(G)$ has dimension $3\cdot |V_I| = 15$.
For example, the coloring $RGB$ is encoded by a vertex $y$ of $P_{\varphi_{\textrm{3-col}}}(G)$ which has $y_{v_1}^R = 1, y_{v_2}^G = 1, y_{v_3}^B = 1$ and all other coordinates are zero.
Thus, $P_{\varphi_{\textrm{3-col}}}(G)$ is the convex hull of the $12$ aforementioned $3$-colorings of $G$.

For the sake of simplicity, we state the following theorem and carry out the exposition for {\em graphs}; however, identical arguments can be carried out analogously for a $\sigma$-structure whose Gaifman graph has treewidth bounded by $\tau$ for any fixed vocabulary $\sigma$.
\begin{theorem}[Extension Complexity of $P_\varphi(G)$]
\label{thm:polytope_courcelle}
For every graph $G$ represented as a $\sigma_2$-structure $I(G)$
and for every MSO$_2$ formula $\varphi$,
$$\xc(P_\varphi(G)) \leq f(|\varphi|, \tau)\cdot n,$$ where $f$ is some computable
function, $\tau=tw(G)$ and $n=|V_I| = |V(G)| + |E(G)|$.
\end{theorem}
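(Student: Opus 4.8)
The plan is to build the extended formulation of $P_\varphi(G)$ by a bottom-up traversal of a nice tree decomposition $(T,B)$ of $G$ (more precisely, of $I(G)$, whose treewidth is $tw(G)+1$), mimicking the dynamic program underlying Courcelle's theorem but assembling polytopes instead of tables. For each node $b\in V(T)$ we will define a ``type polytope'' $P_b$ whose vertices record (i) the restriction to $G_b$ of an assignment $\vec X$ of the free variables, encoded via the variables $y^i_v$ for $v\in V(G_b)$, and (ii) the resulting type $\alpha\in\CC$ of $(G_b,\vec X,\eta(B(b)))$, encoded as a unit vector in $\{0,1\}^{|\CC|}$ as described before Lemma~\ref{lem:only_types_matter}. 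Concretely one takes $P_b=\conv$ of all such pairs $(y|_{V(G_b)},\alpha)$ ranging over all assignments $\vec X\subseteq V(G_b)$; by Proposition~\ref{prop:xc_vertices} each $P_b$ has extension complexity at most its number of vertices, which for the ``active'' coordinates (those in $B(b)$) is bounded by $g(|\varphi|,\tau)$ for a computable $g$, since only the $O(\tau)$ bag-vertices and the $O(1)$-many types vary freely — but the forgotten coordinates make $P_b$ itself large, so the point is never to materialise $P_b$ directly but to obtain it by gluing.

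The key construction is an induction on $T$ producing, for each $b$, an extended formulation $Q_b$ of $P_b$ together with the guarantee $\xc(Q_b)\le (\#\text{nodes in }T_b)\cdot c(|\varphi|,\tau)$. For a leaf, $B(b)=\emptyset$ and $P_b$ is a single point ($\alpha_1$, empty assignment). For an introduce node $b$ with child $a$ adding vertex $v$: one forms a small ``transition polytope'' $R_b$ whose vertices are the feasible pairs $(\alpha,\beta)\in\FF_p(b)$ together with the choice of which $X_i$ contain $v$ and the bits $y^i_v$ — this is a $0/1$ polytope on $O(\tau+|\CC|+m)$ coordinates, hence of bounded extension complexity, and it is well-defined precisely because of Lemma~\ref{lem:introduce_matter} (the new type depends only on the old type and the local data). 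We then set $Q_b := Q_a \times_{k} R_b$, gluing along the $k=O(\tau\cdot|\CC|)$ coordinates that encode $B(a)$'s assignment-restriction and the old type $\alpha$; to invoke Lemma~\ref{lem:glued_product} we must arrange that on each glued ``block'' the inequality $\mathbf 1^\intercal z\le 1$ is valid — this holds for the type block because a type is a unit vector, and for the assignment bits one glues coordinate-by-coordinate (each a single $0/1$ coordinate, for which $z_j\le 1$ is trivially valid). Forget nodes are analogous, using Lemma~\ref{lem:forget_matter}, and simply drop the forgotten vertex from the boundary while projecting its type. For a join node $c$ with children $a,b$: the crucial input is Lemma~\ref{lem:only_types_matter}, which says the type of $G_a\oplus G_b$ depends only on the two child types; so we build a transition polytope over the feasible triples $\FF_t(c)$ and glue $Q_a$, $Q_b$ and this transition polytope along the shared boundary/type blocks — note the boundary assignments $y^i_v$ for $v\in B(c)$ must agree between the two children, which the glued product enforces by identification, and the union semantics $X_i = X^1_i\cup X^2_i$ on $B(c)$ is just equality on $0/1$ bits since a boundary vertex is in both $G_a$ and $G_b$.

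After reaching the root $r$ with $B(r)=\emptyset$, we have $Q_r$, an extended formulation of $P_r = \conv\{(y,\alpha)\}$ over \emph{all} of $V(I(G))$ tagged by the final type $\alpha$. The last step: intersect with the hyperplanes $\{\alpha_i = 0\}$ for every $i$ with $\rho_\varphi(i)=0$ (the indicator function defined before Lemma~\ref{lem:only_types_matter}), which by Proposition~\ref{prop:xc_slice} does not increase extension complexity, and then project away the type coordinates; what remains is exactly $P_\varphi(G)$, since a $0/1$ assignment $y$ survives iff the type of $(I(G),\vec X,())$ is accepting, i.e.\ iff $\vec X$ satisfies $\varphi$. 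Summing the per-node cost over the $O(n)$ nodes of the nice tree decomposition (which by the cited results has $\le 8n$ nodes and width $\tau+1$) gives $\xc(P_\varphi(G)) \le f(|\varphi|,\tau)\cdot n$.

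\textbf{The main obstacle} I anticipate is verifying that the inductive invariant is preserved by the gluing steps with the \emph{correct} variable identifications: one must be scrupulous that the coordinates shared between $Q_a$ and the transition polytope (the boundary assignment bits $y^i_v$ for $v\in B(a)$, and the $|\CC|$-dimensional old-type indicator) are laid out so that each glued block individually satisfies the hypothesis $\mathbf 1^\intercal z\le 1$ of Lemma~\ref{lem:glued_product}, and that the feasibility relations $\FF_p,\FF_t$ — not just the abstract type-transition functions — are exactly what the transition polytopes encode, so that the vertex sets of the glued products are precisely the intended $(y|_{V(G_b)},\alpha)$ pairs with no spurious or missing vertices. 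The correctness of the transition polytopes rests entirely on Lemmas~\ref{lem:only_types_matter}, \ref{lem:introduce_matter} and \ref{lem:forget_matter}; the geometric bookkeeping of repeated gluings is where the real care is needed, and it is also what Section~\ref{sec:ef_construction} must spell out explicitly.
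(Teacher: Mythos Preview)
Your overall strategy---building small per-node polytopes encoding feasible types/pairs/triples and assembling them bottom-up via the glued product over a nice tree decomposition, then restricting at the root to the accepting types---is precisely what the paper does. The difference is in bookkeeping: you carry the assignment bits $y^i_v$ explicitly in every intermediate polytope, whereas the paper's $P_b$ live \emph{only} in type coordinates ($\{0,1\}^{|\CC|}$ for leaves, $\{0,1\}^{2|\CC|}$ for introduce/forget, $\{0,1\}^{3|\CC|}$ for join), and the bits $y^i_v$ are produced at the very end by the linear map $y^i_v=\sum_j \mu(\alpha_j,v,i)\cdot t_{top(v)}[j]$.

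This difference matters, because your gluing step has a gap. You propose to glue $Q_a$ and the transition polytope along the type block \emph{and} the boundary bits $y^i_v$, $v\in B(a)$, and you invoke Lemma~\ref{lem:glued_product} ``block-by-block'': the type block is a unit vector, and each single bit satisfies $z_j\le 1$. But Lemma~\ref{lem:glued_product} requires $\mathbf 1^\intercal z\le 1$ on the \emph{whole} set of glued coordinates simultaneously; the block-by-block relaxation is false in general. For instance, with $P=\conv\{(0,0),(1,1)\}$ and $Q=\conv\{(0,1),(1,0)\}$, each of the two coordinates individually satisfies $z_j\le 1$, yet the true glued product along both coordinates is empty while $(P\times Q)\cap\{z=w\}$ contains $(1/2,1/2)$. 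So the hypothesis of the Gluing Lemma is not met by your glued coordinate set, and your claimed identity $Q_b=Q_a\times_k R_b$ is unjustified as written.

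The paper's choice sidesteps this entirely: since every gluing is along a \emph{single} block of $|\CC|$ type coordinates, which always form a unit vector, the hypothesis $\mathbf 1^\intercal z\le 1$ is immediate. The reason one can get away with this is exactly the observation you are missing: the type of $(G_b,\vec X,\eta(B(b)))$ already \emph{determines} the bits $y^i_v$ for $v\in B(b)$ (this is the content of the indicator $\nu$), so there is no need to glue along those bits separately---they can be read off from the type afterwards. If you add this observation, your variant becomes correct: glue only along the type block, and recover the $y^i_v$ either at the end (as the paper does) or note that at every vertex the duplicated boundary bits from the two children of a join node coincide, so identifying them via hyperplanes (Proposition~\ref{prop:xc_slice}) loses nothing.
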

\begin{proof}
Let $(T, B)$ be a fixed nice tree decomposition of treewidth $\tau$ of
$I(G)$ and let $k$ denote the quantifier rank of $\varphi$
and $m$ the number of free variables of $\varphi$. 
Recall that $\CCtw$ is the set of equivalence classes of the relation $\equivMSOkt$ of treewidth bounded by $\tau$.
For each node $b$ of $T$ we introduce $|\CCtw|$ binary variables that will 
represent a feasible type of the node $b$; we denote the vector
of them by $t_b$ (i.e., $t_b\in\{0,1\}^{|\CCtw|}$).
For each introduce and each forget node $b$ of $T$,
we introduce additional $|\CCtw|$ binary variables that will represent
a feasible type of the child (descendant) of $b$; we denote the vector
of them by $d_b$ (i.e., $d_b\in\{0,1\}^{|\CCtw|}$).
Similarly, for each join node $b$ we introduce additional $|\CCtw|$ 
binary variables, denoted by $l_b$, that will represent a feasible type of the
left child 
of $b$, 
and other $|\CCtw|$ binary variables, denoted by $r_b$, that will represent a
feasible type 
of the right child of $b$
(i.e., $l_b, r_b\in\{0,1\}^{|\CCtw|})$.

We are going to describe inductively a polytope in the dimension given (roughly)
by all the binary variables of all nodes of the given nice tree decomposition.
Then we show that its extension complexity is small and that a properly chosen
face of it is an extension of $P_\varphi(G)$.

First, for each node $b$ of $T$, depending on its type, we define a polytope
$P_b$ as follows:
\begin{itemize}
  \item $b$ is a \textit{leaf}. $P_b$ consists of a single point
$P_b=\{\overbrace{1 0 0 \ldots 0}^{|\CCtw|}\}$.
  \item $b$ is an \textit{introduce} or \textit{forget} node. For each feasible
pair
of types $(\alpha_i,\alpha_j)\in \FF_p(b)$ of the node $b$, 
we create a vector $(d_b,t_b)\in
\{0,1\}^{2|\CCtw|}$ with $d_b[i]=t_b[j]=1$ 
and all other coordinates zero. $P_b$ is defined as the
convex hull of all such vectors.
   \item $b$ is a \textit{join} node. For each feasible triple of types
$(\alpha_h,\alpha_i,\alpha_j)\in \FF_t(b)$ of the node $b$, we create a vector
$(l_b,r_b,t_b)\in \{0,1\}^{3|\CCtw|}$
with $l_b[h]=r_b[i]=t_b[j]=1$ and all other coordinates zero. 
$P_b$ is defined as the convex hull of all such vectors.
\end{itemize} 

It is clear that for every node $b$ in $T$, the polytope $P_b$
contains at most $|\CCtw|^3$ vertices, and, thus, by
Proposition~\ref{prop:xc_vertices} it has extension complexity at most
$\xc(P_b)\leqslant |\CCtw|^3$. Recalling our discussion in
Section~\ref{sec:prelim} about the size of $\CCtw$, we conclude that there exists
a function $f$ such that for every $b\in V(T)$, it holds that
$\xc(P_b)\leqslant f(|\varphi|,\tau)$.

We create an extended formulation for
$P_\varphi(G)$ by gluing these polytopes together, starting in the leaves of $T$
and processing $T$ in a bottom up fashion. 
We create polytopes $Q_b$ for each node $b$ in $T$
recursively as follows:
\begin{itemize}
\item If $b$ is a leaf then $Q_b=P_b$. 
\item If $b$ is an introduce or forget node, then 
$Q_b=Q_a\times_{|\CCtw|}P_b$ where $a$ is the child of $b$ and the gluing is done
along the coordinates
$t_a$ in $Q_a$ and $d_b$ in $P_b$. 
\item If $b$ is a join node, then we first define $R_b = Q_a\times_{|\CCtw|}P_b$
where
$a$ is the left child of $b$ and the gluing is done along the coordinates $t_a$
in $Q_a$ and $l_b$ in $P_b$. Then $Q_b$ is obtained by gluing $R_b$
with $Q_c$ along the coordinates $t_c$ in $Q_c$ and $r_b$ in $R_b$ where $c$ is
the right child of $b$.
\end{itemize}

The following lemma states the key property of the polytopes $Q_b$.
\begin{lemma}\label{lem:vertices_labelings}
For every vertex $y$ of the polytope $Q_b$ there exist 
$X_1,\ldots,X_m\subseteq V(G_b)$ such
that $(G_b, (X_1,\ldots,X_m), \eta(B(b)))$ is of type $\alpha$ where
$\alpha$ is the unique type such that
the coordinate of $y$ corresponding to the binary variable $t_b[\alpha]$ 
is equal to one. 
\end{lemma}


\begin{proof}
The proof is by induction, starting in the leaves of $T$ and going up
towards the root. For leaves, the lemma easily follows from the definition
of the polytopes $P_b$.

For the inductive step, we consider an inner node $b$ of $T$ and we distinguish 
two cases:
\begin{itemize}
\item If $b$ is a join node, then the claim for $b$ follows from the
inductive assumptions for the children of $b$, definition of a feasible
triple, definition of the polytope $P_b$, Lemma~\ref{lem:only_types_matter}
and the construction of the polytope $Q_b$.
\item If $b$ is an introduce node or a forget node, respectively, then, 
analogously, the claim for $b$ follows from the
inductive assumption for the child of $b$, definition of a feasible
pair, definition of the polytope $P_b$, Lemma~\ref{lem:introduce_matter}
or Lemma~\ref{lem:forget_matter}, respectively,
and the construction of the polytope $Q_b$.
\end{itemize}
\end{proof}

Let $c$ be the root node of the tree decomposition $T$. Consider the polytope
$Q_{c}$. From the construction of $Q_{c}$, our previous discussion and the 
Gluing lemma, it follows 
that $\xc(Q_{c})\leqslant \sum_{b\in V(T)} \xc(P_b)\leqslant f(|\varphi|,
\tau)\cdot n$. It remains to show that a properly chosen
face of $Q_c$ is an extension of $P_{\varphi}(G)$.
We start by observing that $\sum_{i=1}^{|\CCtw|}
t_{c}[i] \leq 1$ and $\sum_{i=1}^{|\CCtw|}
\rho_\varphi(i){\cdot}t_{c}[i] \leq 1$, where $\rho_\varphi$ is 
the indicator function defined in Subsection~\ref{subsec:MSO_types}, 
are valid inequalities for $Q_c$.

Let $Q_\varphi$ be the face of $Q_c$
corresponding to the valid inequality $\sum_{i=1}^{|\CCtw|}
\rho_\varphi(i){\cdot}t_{c}[i] \leq 1$. 
Then, by Lemma~\ref{lem:vertices_labelings}, the polytope 
$Q_\varphi$ represents those
$[m]$-labellings of $G$ for which $\varphi$ holds.
The corresponding feasible assignments of $\varphi$ on $G$ are
obtained as follows: 
for every vertex $v\in V(G)$ and every $i\in \{1,\ldots,m\}$ we set
$y_v^i=\sum_{j=1}^{|\CCtw|}
\mu(\alpha_j,v,i){\cdot}t_{top(v)}[j]$.
The sum is $1$ if and only if there exists a type $j$ such that
$t_{top(v)}[j]=1$ and at the same time $\mu(\alpha_j,v,i)=1$; by the definition
of the indicator function $\mu$ in Subsection~\ref{subsec:MSO_treewidth}, this
implies that $v\in
X_i$. 
Thus, by applying the above projection to $Q_\varphi$ we obtain
$P_\varphi(G)$, as desired.  

It is worth mentioning at this point that the polytope $Q_c$ depends
only on the treewidth $\tau$, the quantifier rank $k$ of $\varphi$
and the number of free variables of $\varphi$.
The dependence on the formula $\varphi$
itself only manifests in the choice of the face $Q_\varphi$ of $Q_c$ and
its projection to $P_\varphi(G)$.
\end{proof}

\section{Efficient Construction of the $P_\varphi(G)$}\label{sec:ef_construction}

In the previous section we have proven that $P_\varphi(G)$
\textit{has} a compact extended formulation but our definition of
feasible tuples and the indicator functions $\mu$ and $\rho_\varphi$
did not explicitly provide a way how to actually \textit{obtain} it
efficiently.
That is what we do in this section.


As in the previous section we assume that we are given a graph $G$ of treewidth
$\tau$ and an MSO$_2$ formula $\varphi$ with $m$ free variables and quantifier rank
$k$. We start by constructing a nice tree decomposition $(T, B)$ of $G$ of treewidth
$\tau$ in time $f(\tau) \cdot n$~\cite{Bodlaender:93,Kloks:94}.

Recall that $\CCtw$ denotes the set of equivalence classes of $\equivMSOkt$ which have treewidth at most $\tau$.
Observe that we can restrict our attention from $\CC$ to $\CCtw$ because any subgraph of $G$ has treewidth at most $\tau$ and thus the types in $\CC \setminus \CCtw$ are not feasible for any node of $T$.
 Because $\CC$
is finite and its size is independent of the size of $G$ (Corollary
\ref{cor:finite_index}), so is $\CCtw$, and for each class $\alpha\in \CCtw$ there exists 
an $[m]$-labelled $\tau$-boundaried graph $(G^\alpha,
\vec{X}^\alpha, \vec{p^\alpha})$ of type $\alpha$ whose size is upper-bounded
by 
a function of $k,m$ and $\tau$.
For each $\alpha\in \CCtw$, we fix one
such graph, denote it by $W(\alpha)$ and call it the \textit{witness of
$\alpha$}. Let $\WW = \{W(\alpha) \mid \alpha \in \CCtw\}$. The witnesses
make it possible to easily compute the indicator function $\rho_\varphi$:
for every $\alpha \in \CCtw$, we set $\rho_\varphi(\alpha) = 1$ if and only if 
$W(\alpha) \models \varphi$, and we set $\rho_\varphi(\alpha) = 0$ otherwise.

The following Lemma is implicit in~\cite{GPW:07} in the proof of Theorem 4.6 and
Corollary 4.7.

\begin{lemma}[\cite{GPW:07}] \label{lemma:compute_type_witnesses} The set $\WW$ and
  the indicator function $\rho_\varphi$ can be computed in time $f(k, m,
  \tau)$, for some computable function~$f$.
\end{lemma}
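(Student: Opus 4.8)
The plan is to prove Lemma~\ref{lemma:compute_type_witnesses} by a bottom-up exhaustive enumeration of boundaried graphs up to a size bounded purely by $k$, $m$ and $\tau$, computing their types along the way, and then checking each witness against $\varphi$. First I would invoke Theorem~\ref{thm:finite_index}, which guarantees that $|\CC| = f_1(k,m,\tau)$ for a computable $f_1$. The key combinatorial fact I need is that every equivalence class $\alpha \in \CC$ contains a representative $[m]$-colored $\tau$-boundaried graph whose number of vertices is bounded by some computable $g(k,m,\tau)$; this is exactly the statement quoted just before the lemma in the text. I would therefore enumerate all $[m]$-colored $\tau$-boundaried graphs on at most $g(k,m,\tau)$ vertices — there are only $h(k,m,\tau)$ of them for a computable $h$, since the number of graphs, the number of $m$-colorings, and the number of choices of an ordered $\tau$-tuple of boundary vertices are all bounded in terms of $g,m,\tau$.

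Next I would compute, for each pair of enumerated graphs, whether they are $\equivMSOkt$-equivalent. This is the heart of the construction. The standard way (this is exactly what is ``implicit in \cite{GPW:07}'') is to note that $\equivMSOkt$-equivalence of two finite structures is a decidable relation: one can either directly evaluate every formula in a finite (up to logical equivalence) set of $\MSO[k,\tau,m]$ formulae on both structures, or — more efficiently — use the inductive characterization of $k$-types via Hintikka formulae / a finite Ehrenfeucht–Fraïssé-style fixpoint computation. Either way, since both structures have size bounded by $g(k,m,\tau)$ and the syntactic objects involved are bounded in terms of $k,m,\tau$, the whole computation runs in time $f_2(k,m,\tau)$ for a computable $f_2$. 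Grouping the enumerated graphs into equivalence classes under this relation then yields exactly $|\CC|$ classes; from each class I pick one representative and declare it $W(\alpha)$, giving the set $\WW$.

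Finally, computing $\rho_\varphi$ is immediate once $\WW$ is in hand: for each $\alpha$, evaluate the formula $\varphi$ on the witness $W(\alpha)$ — model checking a fixed $\MSO_1$ formula on a structure of size $g(k,m,\tau)$ takes time bounded by a computable function of $k,m,\tau$ and $|\varphi|$ — and set $\rho_\varphi(\alpha)=1$ iff $W(\alpha)\models\varphi$. Correctness here uses the already-noted fact that $W(\alpha)\models\varphi$ iff $G'\models\varphi$ for every $G'\in\alpha$, because $\varphi\in\MSO[k,\tau,m]$ and equivalent structures satisfy the same such formulae.

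The main obstacle — and really the only nontrivial point — is justifying the computable bound $g(k,m,\tau)$ on the size of a representative of each type, together with the decidability (in time bounded by a computable function of the parameters) of the $\equivMSOkt$ relation on bounded-size structures. The latter is classical finite model theory (finiteness of the index of $\equivMSOkt$ plus an effective EF-type argument), and the former follows from it, since if each class is nonempty and there are finitely many classes, one can search through structures by increasing size until every class has been hit; the search terminates, and an explicit computable upper bound on where it terminates can be extracted from the effective proof of Theorem~\ref{thm:finite_index}. I would remark that all of this is exactly the content cited from \cite[Theorem~4.6, Corollary~4.7]{GPW:07}, so in the write-up it suffices to assemble these ingredients rather than reprove them from scratch.
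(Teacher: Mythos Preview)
The paper itself does not prove this lemma; it simply records that the result is implicit in the proof of Theorem~4.6 and Corollary~4.7 of \cite{GPW:07}. Your sketch therefore goes well beyond what the paper offers and is, in outline, a correct reconstruction of the underlying argument.

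One step deserves more care. You justify the computable bound $g(k,m,\tau)$ on the size of a minimal representative of each type by saying that ``an explicit computable upper bound on where [the enumeration] terminates can be extracted from the effective proof of Theorem~\ref{thm:finite_index}''. The standard effective proof of Theorem~\ref{thm:finite_index} (via Hintikka formulae) yields a computable upper bound $N$ on $|\CC|$, but this alone neither bounds representative size nor tells the enumeration when to stop: since not every Hintikka formula is realised by a finite graph, one may have $|\CC|<N$, and deciding which rank-$k$ Hintikka formulae are satisfiable over the class of all finite graphs is undecidable (MSO on finite graphs is undecidable). So ``enumerate until every class has been hit'' is not, as written, a halting procedure.

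The clean fix---and almost certainly what \cite{GPW:07} actually does---is to observe that the only types the paper ever needs are those realised by the graphs $G_a$, all of which have treewidth at most~$\tau$. One then computes the set of such types, together with small witnesses, \emph{compositionally}: start from the type of the empty boundaried graph and close under the introduce, forget and join operations (Lemmas~\ref{lem:only_types_matter}--\ref{lem:forget_matter}), each of which acts on types and on witnesses of bounded size. This fixpoint computation halts after at most $|\CC|$ rounds and yields, for every reachable type, a witness whose size is bounded by a computable function of $k,m,\tau$. This is exactly the mechanism the paper then re-uses in Section~\ref{sec:ef_construction} when processing the actual tree decomposition of~$G$. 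With this replacement for your middle paragraph, the rest of your sketch (grouping into classes via Lemma~\ref{lem:ef_games}, evaluating $\varphi$ on each witness to obtain $\rho_\varphi$) goes through unchanged.
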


It will be important to have an efficient algorithmic test for
$\MSO[k,\tau]$-elementary equivalence. This can be done using the \EF
games:

\begin{lemma}[{\cite[Theorem~7.7]{Libkin:04}}] \label{lem:ef_games}
  Given two $[m]$-labelled $\tau$-boundaried graphs $G_1^{[m],\tau}$
  and $G_2^{[m],\tau}$, it can be decided in time $f(m, k, \tau, |G_1|, |G_2|)$ 
  whether $G_1^{[m],\tau} \equivMSOkt G_2^{[m],\tau}$, for some computable
function~$f$.
  \end{lemma}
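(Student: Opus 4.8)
The statement to prove is Lemma~\ref{lem:ef_games}, which asserts a computable time bound for deciding $\MSO[k,\tau]$-elementary equivalence of two $[m]$-colored $\tau$-boundaried graphs.

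The plan is to reduce the question to the standard model-theoretic characterization of elementary equivalence in terms of Ehrenfeucht--Fra\"iss\'e games, which is exactly the content of \cite[Theorem~7.7]{Libkin:04}. First I would recall that two $\sigma_{m,\tau}$-structures $\AA$ and $\BB$ satisfy the same $\MSO$ sentences of quantifier rank at most $k$ if and only if the Duplicator has a winning strategy in the $k$-round $\MSO$ \EF game on $\AA$ and $\BB$; in our setting (only set variables, as assumed in Section~\ref{subsec:MSO_types}) every round is a "set move" in which Spoiler picks a subset of the universe of one structure and Duplicator responds with a subset of the other, and at the end the selected tuples of sets must induce a partial isomorphism respecting the relations $E_I, L_V, L_E, V_1,\dots,V_m$ and the constants $p_1,\dots,p_\tau$. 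Since $G_1^{[m],\tau}$ and $G_2^{[m],\tau}$ have universes of size $|G_1|$ and $|G_2|$, a subset move ranges over at most $2^{|G_1|}$ (resp. $2^{|G_2|}$) possibilities.

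The core of the argument is then a straightforward backward induction on the number of remaining rounds that evaluates the game tree. One maintains, for $j=k,k-1,\dots,0$, the set of "positions" — pairs of equal-length tuples of subsets $(\bar S, \bar U)$ with $\bar S$ from $G_1$ and $\bar U$ from $G_2$ — from which Duplicator wins the remaining $j$-round game; a position with $j=0$ is a Duplicator win iff the map sending each $S_i\mapsto U_i$ (together with $p_i \mapsto q_i$) is a partial isomorphism of the two relational structures, which is checkable in time polynomial in $|G_1|+|G_2|$. A position is winning for Duplicator in the $j$-round game iff for every subset move by Spoiler in either structure there is a response in the other structure leading to a winning $(j-1)$-round position. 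Carrying this recursion out naively visits at most $\bigl(2^{|G_1|}\cdot 2^{|G_2|}\bigr)^{k}$ positions, and each position requires checking at most $2^{|G_1|}+2^{|G_2|}$ Spoiler moves and a partial-isomorphism test, so the total running time is bounded by a computable function $f(m,k,\tau,|G_1|,|G_2|)$ — indeed one of the form $2^{O(k(|G_1|+|G_2|))}\cdot\mathrm{poly}(|G_1|+|G_2|+m)$. Then $G_1^{[m],\tau}\equivMSOkt G_2^{[m],\tau}$ holds iff the initial (empty-tuple) position is a Duplicator win in the $k$-round game.

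The only genuinely delicate point is making sure the game-theoretic characterization applies in the precise variant of $\MSO$ used in this paper — $\MSO_1$ over $\sigma_{m,\tau}$ with constants and with only set variables and no element variables — rather than the textbook full $\MSO$. This is handled by noting that set variables subsume element variables (a point already invoked via \cite{KLLL:15} in Section~\ref{subsec:MSO_types}), so restricting to set moves loses no expressive power and the quantifier rank is preserved up to a constant factor; the constants $p_i, q_i$ are simply treated as fixed distinguished elements that the partial isomorphism must match, which is built into the compatibility condition. With that observation in place, Theorem~7.7 of \cite{Libkin:04} gives both the equivalence and, through the finite game-tree evaluation sketched above, the claimed computable time bound.
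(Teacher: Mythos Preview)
The paper does not give its own proof of this lemma; it simply cites \cite[Theorem~7.7]{Libkin:04} as a black box. Your proposal is a correct and standard unpacking of that citation: the \EF game characterization of $\MSOk$-equivalence together with the observation that on finite structures the game tree is finite and can be evaluated by backward induction in time bounded by a computable function of $k$, $m$, $\tau$, $|G_1|$, and $|G_2|$. There is nothing to compare, since the paper offers no argument of its own here.
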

\begin{corollary} \label{cor:recog} 
Recognizing the type of an $[m]$-labelled $\tau$-boundaried graph
$G^{[m],\tau}$ can be done in time $f(m, k, \tau, |G|)$, for some computable
function~$f$.  
\end{corollary}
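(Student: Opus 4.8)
The plan is to reduce recognizing the type of $G^{[m],\tau}$ to a bounded number of $\MSO[k,\tau]$-elementary equivalence tests, each of which can be carried out in the time promised by Lemma~\ref{lem:ef_games}. Concretely: by Theorem~\ref{thm:finite_index} the number of equivalence classes $|\CC|$ depends only on $m$, $k$ and $\tau$, and by Lemma~\ref{lemma:compute_type_witnesses} the set of witnesses $\WW = \{W(\alpha)\mid \alpha\in\CC\}$ can be computed in time $f(k,m,\tau)$. To recognize the type of $G^{[m],\tau}$ it therefore suffices to iterate over all $\alpha\in\CC$ and test, using Lemma~\ref{lem:ef_games}, whether $G^{[m],\tau}\equivMSOkt W(\alpha)$; exactly one such test succeeds, and the corresponding $\alpha$ is the type of $G^{[m],\tau}$.

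The running time bookkeeping is then routine. Each call to the procedure of Lemma~\ref{lem:ef_games} runs in time $f(m,k,\tau,|G|,|W(\alpha)|)$ for some computable $f$; since $|W(\alpha)|$ is bounded by a function of $m,k,\tau$ only, this is a function of $m,k,\tau$ and $|G|$ alone. We perform $|\CC|=f(m,k,\tau)$ such calls, plus the one-time cost $f(k,m,\tau)$ of producing $\WW$ (this latter cost does not even depend on $G$). The total is still a computable function of $m,k,\tau$ and $|G|$, which is the claimed bound.

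The only point that needs a word of care is the justification that testing $G^{[m],\tau}\equivMSOkt W(\alpha)$ for all $\alpha$ actually \emph{identifies} the type: this is immediate because the types $\{\alpha_1,\dots,\alpha_w\}$ are precisely the equivalence classes of $\equivMSOkt$, so $G^{[m],\tau}$ lies in exactly one of them, and $G^{[m],\tau}\equivMSOkt W(\alpha)$ holds iff that class is $\alpha$. Since this is essentially a definitional observation, there is no real obstacle here; the corollary is a direct composition of Lemma~\ref{lemma:compute_type_witnesses} and Lemma~\ref{lem:ef_games}, and the ``hard part'' (the algorithmic content) is entirely inherited from those two lemmas rather than proved afresh.
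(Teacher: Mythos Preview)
Your proof is correct and is exactly the intended derivation: the paper states the corollary immediately after Lemma~\ref{lem:ef_games} without an explicit proof, and the natural (and only reasonable) way to obtain it is precisely what you wrote---compute the witness set $\WW$ via Lemma~\ref{lemma:compute_type_witnesses}, then test $G^{[m],\tau}$ against each $W(\alpha)$ using Lemma~\ref{lem:ef_games}.
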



Now we describe a linear time construction of the sets of feasible types, pairs
and triples of 
types $\FF(b)$, $\FF_p(b)$ and $\FF_t(b)$ for all
relevant nodes $b$ in $T$.
In the initialization phase we construct the set $\WW$, using the
algorithm from Lemma~\ref{lemma:compute_type_witnesses}. 
The rest of the construction is inductive, starting in the leaves
of $T$ and advancing in a bottom up fashion towards the root of $T$.
The idea is to always replace a possibly {\em large} graph
$G_a^{[m], \tau}$ of type $\alpha$ by the {\em small} witness $W(\alpha)$
when computing the set of feasible types for the father of a
node $a$.

\textit{Leaf node.} For every leaf node $a \in V(T)$ we set $\FF(a) =
\{\alpha_1\}$. Obviously, this corresponds to the definition in 
Section~\ref{sec:prelim}.

\textit{Introduce node.} Assume that $b \in V(T)$ is an introduce node with a
child $a \in V(T)$ for which $\FF(a)$ has already been computed,
and $v\in V(G)$ is the new vertex.
For every $\alpha \in \FF(a)$, we first produce a $\tau'$-boundaried
graph $H^{\tau'}=(H^\alpha, \vec{q})$ from 
$W(\alpha)=(G^\alpha, \vec{X^\alpha}, \vec{p^\alpha}) $ as follows: let
$\tau' = |\vec{p^\alpha}|+1$ and $H^\alpha$ be obtained from $G^\alpha$ by
attaching 
to it a new vertex in the same way as $v$ is attached to $G_a$. The boundary
$\vec{q}$ is obtained from the boundary  $\vec{p^\alpha}$ by inserting in it
the 
new vertex at the same position that $v$ has in the boundary of
$(G_a,\eta(B(a)))$. 
For every subset $I\subseteq \{1,\ldots,m\}$ we construct
an $[m]$-labelling $\vec{Y}^{\alpha,I}$ from $\vec{X^\alpha}$ by setting 
${Y}_i^{\alpha,I} = X_i^\alpha\cup \{v\}$, for every $i\in I$,  and 
${Y}_i^{\alpha,I} = X_i^\alpha$, for every $i\not \in I$. Each of
these $[m]$-labellings $\vec{Y}^{\alpha,I}$ is used to produce an $[m]$-labelled
$\tau'$-boundaried graph $(H^\alpha, \vec{Y}^{\alpha,I}, \vec{q})$ and the types
of
all these $[m]$-labelled $\tau'$-boundaried graphs are added to the set $\FF(b)$
of feasible types of $b$, and, similarly, the pairs $(\alpha, \beta)$ where
$\beta$ is a feasible type of some of the $[m]$-labelled $\tau'$-boundaried
graph $(H^\alpha, \vec{Y}^{\alpha,I}, \vec{q})$, are added to the set $\FF_p(b)$
of
all feasible pairs of types of $b$.  
The correctness of the construction of the sets $\FF(b)$ and $\FF_p(b)$
for the node $b$ of $T$ follows from Lemma~\ref{lem:introduce_matter}.

\textit{Forget node.} Assume that $b \in V(T)$ is a forget node with a
child $a \in V(T)$ for which $\FF(a)$ has already been computed
and that the $d$-th vertex of the boundary $\eta(B(a))$
is the vertex being forgotten.
We proceed in a similar way as in the case of the introduce node. For
every $\alpha \in \FF(a)$ we produce an $[m]$-labelled $\tau'$-boundaried 
graph $(H^\alpha, \vec{Y}^\alpha, \vec{q})$
from $W(\alpha) = (G^\alpha, \vec{X}^\alpha, \vec{p}^\alpha)$ as follows: 
let $\tau' = |\vec{p^\alpha}|-1$, $H^{\alpha} = G^{\alpha}$,
$\vec{Y}^\alpha=\vec{X}^\alpha$
and $\vec{q} = (p_1, \dots, p_ {d-1}, p_{d+1}, \dots,
p_{\tau'+1})$. For every $\alpha\in \FF(a)$, the type $\beta$ of the constructed
graph is added to $\FF(b)$, and, similarly, the pairs $(\alpha,\beta)$
are added to $\FF_p(b)$.
The correctness of the construction of the sets $\FF(b)$ and $\FF_p(b)$
for the node $b$ of $T$ follows from Lemma~\ref{lem:forget_matter}.

\textit{Join node.} 
Assume that $c \in V(T)$ is a join node with children $a,b \in V(T)$ for which
$\FF(a)$ and $\FF(b)$ have already been computed.
For every pair of compatible types $\alpha \in \FF(a)$ and
$\beta \in \FF(b)$, we add the type $\gamma$ of $W(\alpha) \oplus W(\beta)$ 
to $\FF(c)$, and the triple $(\alpha, \beta, \gamma)$ to $\FF_t(c)$.
The correctness of the construction of the sets $\FF(c)$ and $\FF_t(c)$
for the node $b$ of $T$ follows from Lemma~\ref{lem:only_types_matter}.

It remains to construct the indicator functions $\nu$ and $\mu$.
We do it during the construction of the sets of feasible types as follows. 
We initialize $\nu$ to zero. Then, every
time we process a node $b$ in $T$ and we find a new feasible type $\beta$ of
$b$, for every $v\in B(b)$ and for every $i$ for which the $d$-th vertex in the
boundary of $W(\beta)=(G^\beta,\vec X,\vec p)$ belongs to $X_i$, we set
$\mu(\beta,b,v,i)=1$ where $d$ is the order of $v$ in the boundary of
$(G_b,\eta(B(b))$. The correctness follows from the definition of $\nu$
and the definition of feasible types.
The function $\mu$ is then straightforwardly defined using $\nu$.

Concerning the time complexity of the inductive construction, we observe,
exploiting Corollary~\ref{cor:recog},
that for every node $b$ in $T$, the number of steps, the sizes of 
graphs that we worked with when dealing with the node $b$,
and the time needed for each of the steps, depend on 
$k$, $m$ and $\tau$ only.
We summarize the main result of this section in the following theorem.



\begin{theorem} \label{thm:constructive_polytope_courcelle}
Under the assumptions of Theorem~\ref{thm:polytope_courcelle}, the polytope
$P_\varphi(G)$ can be constructed in time $f'(|\varphi|, \tau)\cdot n$, for some
computable function $f'$.
\end{theorem}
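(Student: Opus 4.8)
The plan is to carry out, in a bottom-up fashion along the nice tree decomposition $(T,B)$, exactly the inductive construction described above and to account for the running time node by node. First I would invoke Bodlaender's algorithm~\cite{Bodlaender:93,Kloks:94} to compute a nice tree decomposition $(T,B)$ of $G$ of width $\tau$ with at most $8n$ nodes in time $f_0(\tau)\cdot n$. Next, in the initialization phase, I would compute the set of witnesses $\WW$ and the indicator function $\rho_\varphi$ using Lemma~\ref{lemma:compute_type_witnesses}; this costs $f_1(k,m,\tau)$, a constant in $n$. The crucial point is that, by Theorem~\ref{thm:finite_index}, $|\CC|=f_2(k,m,\tau)$ and each witness $W(\alpha)$ has size bounded by $f_3(k,m,\tau)$, so every graph we manipulate during the induction has size bounded by a function of $k,m,\tau$ alone (for join nodes, $W(\alpha)\oplus W(\beta)$ has size at most $2f_3(k,m,\tau)$; for introduce/forget nodes it changes by one vertex).

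Then I would process the nodes of $T$ from the leaves upward, at each node $b$ running the case analysis (leaf / introduce / forget / join) spelled out above to produce $\FF(b)$, the appropriate $\FF_p(b)$ or $\FF_t(b)$, and to update $\nu$ and hence $\mu$. The correctness of each case is already established above via Lemmas~\ref{lem:introduce_matter}, \ref{lem:forget_matter} and~\ref{lem:only_types_matter}. For the time analysis, at a single node $b$ we enumerate over at most $|\CC|$ (or $|\CC|^2$, for join nodes) choices of child types, over at most $2^m$ subsets $I\subseteq\{1,\dots,m\}$ (for introduce nodes), each time building a bounded-size colored boundaried graph and recognizing its type using Corollary~\ref{cor:recog} in time $f_4(m,k,\tau)$ since the graph has bounded size. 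Hence the work at each node is bounded by $f_5(k,m,\tau)$, and summing over the $\le 8n$ nodes gives a total of $f_6(k,m,\tau)\cdot n$ for the whole inductive phase.

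Once all the sets $\FF(b), \FF_p(b), \FF_t(b)$ and the functions $\nu,\mu,\rho_\varphi$ are in hand, I would assemble the polytope exactly as in the proof of Theorem~\ref{thm:polytope_courcelle}: for each node $b$ build the small polytope $P_b$ (at most $|\CC|^3$ explicitly listed vertices, hence a description of size $f(|\varphi|,\tau)$ by Proposition~\ref{prop:xc_vertices}), glue them along the tree using the Gluing lemma (Lemma~\ref{lem:glued_product}) to form $Q_c$, take the face $Q_\varphi$ cut out by $\sum_i \rho_\varphi(i)\,t_c[i]\le 1$, and apply the explicit projection $y_v^i=\sum_j \mu(\alpha_j,v,i)\,t_{top(v)}[j]$ to recover $P_\varphi(G)$. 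Each $P_b$ is written down in time depending only on $k,m,\tau$, there are $\le 8n$ gluings each of which only concatenates inequality systems and imposes $|\CC|$ equalities, and the final face and projection are described by $\Oh(|\CC|\cdot n)$ additional linear constraints; so the assembly costs another $f_7(|\varphi|,\tau)\cdot n$. Adding up the three phases yields a total running time $f'(|\varphi|,\tau)\cdot n$, as claimed.

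\textbf{Main obstacle.} The substantive point — and the one I would be most careful about — is the claim that \emph{every} boundaried graph handled during the induction has size bounded purely in terms of $k,m,\tau$. This is what makes the per-node type-recognition cost (Corollary~\ref{cor:recog}) a constant rather than something growing with $n$, and it is exactly the reason the construction replaces each possibly-large $G_a^{[m],\tau}$ by its bounded-size witness $W(\alpha)$ before moving to the parent. Verifying that this substitution is legitimate — i.e. that using $W(\alpha)$ in place of $G_a^{[m],\tau}$ produces the same feasible types, pairs, and triples at the parent — is precisely the content of Lemmas~\ref{lem:introduce_matter}, \ref{lem:forget_matter} and~\ref{lem:only_types_matter}, so the obstacle is really just bookkeeping to confirm that the hypotheses of those lemmas are met at each node type (in particular the compatibility conditions at join and introduce nodes, and the subtuple/restriction conditions on boundaries). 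Everything else is routine accumulation of $f(\cdot)\cdot n$-type bounds.
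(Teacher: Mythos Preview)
Your proposal is correct and follows essentially the same approach as the paper: Section~\ref{sec:ef_construction} is itself the proof of Theorem~\ref{thm:constructive_polytope_courcelle}, and it proceeds exactly as you outline --- compute a nice tree decomposition, precompute $\WW$ and $\rho_\varphi$ via Lemma~\ref{lemma:compute_type_witnesses}, then run the bottom-up case analysis over leaf/introduce/forget/join nodes using witnesses in place of the large graphs $G_a^{[m],\tau}$, recognizing types via Corollary~\ref{cor:recog}, with correctness per node supplied by Lemmas~\ref{lem:only_types_matter}--\ref{lem:forget_matter}. Your identification of the ``main obstacle'' (that the witness substitution keeps every graph handled bounded in $k,m,\tau$, so Corollary~\ref{cor:recog} costs $O(1)$ per invocation) is precisely the point the paper stresses as ``the idea'' of the construction, and your per-node accounting and final assembly of $Q_c$, $Q_\varphi$, and the projection match the paper's concluding paragraph.
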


\section{Extensions}\label{sec:extensions}

Using our results about the glued product in Section~\ref{sec:glued_product} we
can extend Theorem~\ref{thm:constructive_polytope_courcelle} to guarantee
a couple of additional non-trivial properties of
a certain extended formulation of $P_\varphi(G)$.
Recall that $\CCtw=\{\alpha_1,\ldots,\alpha_w\}$ is the set of equivalence classes of the relation $\equivMSOkt$ where for each $G^{[m], \tau} \in \alpha \in \CCtw$, we have $tw(G^{[m], \tau}) \leq \tau$.
For a given formula $\varphi$, a graph $G$ and a tree decomposition $(T, B)$ of $G$, for every node $a$ of $T$, we denote the set of 
feasible types of the node $a$ by $\mathcal{F}(a)$, for every introduce and every forget node $a$ of $T$ the set of feasible pairs of the node $a$ by $\FF_p(a)$ and for every join node $a$ of $T$ the set of feasible triples of the node $a$ by $\FF_t(a)$.
Moreover, 
let $V_{IF}$ denote the set of introduce and forget nodes in $T$, $V_J$ the set of join nodes and $V_L$ the set of leaves,
and
let $\mathcal{F}=\bigcup_{b\in V_{IF}} \{\{b\}\times \FF_p(b)\} \cup
\bigcup_{b\in V_{J}} \{\{b\}\times \FF_t(b)\} \cup
\bigcup_{b\in V_{L}} \{\{b\}\times \FF(b)\}$,
that is, $\mathcal{F}$ is a set containing for every node $b\in V(T)$ a pair $\{b,\FF'(b)\}$ 
where $\FF'(b)$ is the set of feasible pairs $\FF_p(b)$ for introduce and forget nodes, 
the set of feasible triples $\FF_t(b)$ for join nodes and
the set of feasible types $\FF(b)$ for leaves. 

As in the case of Theorem~\ref{thm:polytope_courcelle}, for the sake of simplicity we again formulate and prove the main theorem of this section in terms of graphs represented as $\sigma_2$-structures; the extension to arbitrary structures is straightforward.
\begin{theorem} \label{thm:master_courcelle}
  Let $G=(V, \emptyset, \{E, L_V, L_E\})$ be a $\sigma_2$-structure of treewidth $\tau$ representing a graph, let $n=|V|$ and let 
$(T, B)$ be a nice tree
decomposition of $G$ of treewidth $\tau$ and let $\varphi$ be an MSO$_2$ formula with $m$ free variables.


  Then there exist matrices $A, D$, $C$, a vector $e$, 
  a function $\nu: \CCtw \times V(T)\times V \times \{1,\ldots,m\} \rightarrow \{0,1\}$
  and a tree decomposition $(T^*, B^*)$
  of the Gaifman graph $G(A~D~C)$
  such that the following claims hold:
      \begin{enumerate}
  \item \label{thm:master_courcelle:polytope}The polytope $P = \{(y,t,f) \in \RR^{V\times [m]}\times \RR^{\CCtw \times V(T)} \times \RR^{\FF} \mid Ay + Dt + Cf = e, \quad t,f \geq
    \mathbf{0}\}$ is a $0/1$-polytope and $P_\varphi(G) = \{y \mid \exists t,f: (y,t,f) \in P\}$.
  \item \label{thm:decomp} $P$ is constructively decomposable.
  \item \label{thm:variables}
    For any $(y,t,f) \in \vertexset(P)$,
    for any $j \in \CCtw$, $b \in V(T)$, $v \in B(b)$
    and $i \in \{1, \dots, m\}$, equalities $t_b[j] = 1$ and 
    $\nu(j,b,v,i) = 1$ imply that $y_v^i = 1$.
  \item \label{thm:tw_structure} 
    \begin{enumerate}
      \item \label{thm:tw_structure:bounded}The treewidth of $(T^*, B^*)$ is $\Oh(|\CCtw|^3)$,
      \item $T^* = T$,
      \item \label{thm:tw_structure:bags}for every node $b \in V(T^*)$, $\bigcup_{j\in \CCtw}\{t_b[j]\} \subseteq B^*(b)$, and,
      \item $\bigcup_{j\in \CCtw} \{t_b[j]\} \cap B^*(a) = \emptyset$ for every $a
        \not\in \delta_{T^*}(b)$.
     \end{enumerate}
  \item \label{thm:adcdnu_computable} $A, D, C, d, \nu$ can be computed in time $\Oh(|\CCtw|^3\cdot n)$.
  \end{enumerate}
\end{theorem}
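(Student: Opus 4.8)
The plan is to rerun the construction from the proof of Theorem~\ref{thm:polytope_courcelle} with one change dictated by the new requirements: instead of gluing the small vertex-polytopes $P_b$ directly, we glue the simplex extensions $\hat P_b$ supplied by Corollary~\ref{cor:simplex_idp_tw}. Since each $P_b$ has at most $|\CC|^3$ vertices and lives in dimension at most $3|\CC|$, the polytope $\hat P_b$ is a constructively decomposable extension of $P_b$, described by equalities together with nonnegativity of its simplex ($\lambda$-)variables, and it has a tree decomposition of treewidth $\Oh(|\CC|^3)$ one of whose bags contains all of its coordinates. Every vertex of $\hat P_b$ has exactly one $1$ in each block of type-indicator coordinates, so $\mathbf{1}^\intercal(\text{type vector})\le 1$ is valid for it; hence the Gluing Lemma~\ref{lem:glued_product}, Lemma~\ref{lem:idp_glued} and Lemma~\ref{lem:tw_glued} all apply at each gluing step. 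Gluing the $\hat P_b$ bottom-up along $T$ exactly as in Theorem~\ref{thm:polytope_courcelle} (twice at each join node) produces a polytope $\hat Q_c$ whose defining system is the union of the simplex systems of all the $\hat P_b$, with the glued coordinates of a node identified with the type vector of the appropriate child; since the glued product equals the corresponding slice of the product (as in the proof of Lemma~\ref{lem:glued_product}), this system really does describe $\hat Q_c$. We then cut out the face of $\hat Q_c$ by the equality $\sum_i\rho_\varphi(i)\,t_c[i]=1$ (its defining inequality being valid, as in Theorem~\ref{thm:polytope_courcelle}) and append, for each $v\in V$ and each $i\in\{1,\dots,m\}$, a new coordinate $y_v^i$ tied by the equality $y_v^i=\sum_j\mu(\alpha_j,v,i)\,t_{top(v)}[j]$. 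Collecting all the equalities and the nonnegativity constraints on the type variables and the $\lambda$-variables yields a system of the form $Ay+Dt+Cf=e$, $t,f\ge\mathbf{0}$; this is the polytope $P$. Claim~\ref{thm:master_courcelle:polytope} is then immediate: a glued product of $0/1$-polytopes is a $0/1$-polytope, a face of one is one, and appending a coordinate that takes a $0/1$ value at every $0/1$ type vector keeps it $0/1$; and the projection of $P$ to the $y$-coordinates equals $P_\varphi(G)$ by the same argument as in the proof of Theorem~\ref{thm:polytope_courcelle}, the extra $\lambda$-coordinates being projected out along with $t$ and $f$.

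For claim~\ref{thm:decomp} we argue by induction along $T$: each $\hat P_b$ is constructively decomposable by Corollary~\ref{cor:simplex_idp_tw}, and constructive decomposability is preserved by each gluing step by Lemma~\ref{lem:idp_glued} (the required validity of $\mathbf{1}^\intercal z\le 1$ on both factors holds at the level of vertices, as above), so $\hat Q_c$ is constructively decomposable. A face of a constructively decomposable polytope is again constructively decomposable, since a valid inequality summed over any integral decomposition must be tight on every summand, so the decomposition stays in the face and the oracle is inherited; and appending the $y$-coordinates via the integral linear map $t\mapsto y$ preserves constructive decomposability by Lemma~\ref{lem:idp_projection}. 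Hence $P$ is constructively decomposable.

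For claim~\ref{thm:tw_structure} we set $T^*=T$ and let $B^*(b)$ consist of all variables occurring in the subsystem contributed by $\hat P_b$: its $\lambda$-variables together with the type-indicator coordinates attached to $b$ (namely $t_b$ and $d_b$ for an introduce or forget node, $t_b,l_b,r_b$ for a join node, $t_b$ for a leaf), with $d_b,l_b,r_b$ understood to be identified with the type vectors of the corresponding children; finally we add each $y_v^i$ to $B^*(top(v))$. A direct check shows that $(T^*,B^*)$ is a tree decomposition of $G(A~D~C)$: the support of each equality lies inside a single bag (the simplex equalities of $\hat P_b$ inside $B^*(b)$, the face equality inside the root bag, the equality defining $y_v^i$ inside $B^*(top(v))$ after the addition), a type-indicator coordinate $t_b[j]$ occurs only in $B^*(b)$ and in the bag of the parent of $b$ (which references it as part of its $d$, $l$ or $r$ block), and the remaining variables -- the $\lambda$-variables and the $y_v^i$ -- occur in exactly one bag each. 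This yields the structural items (b)--(d) of claim~\ref{thm:tw_structure} (the parent being a neighbour in $T^*$), and for item~\ref{thm:tw_structure:bounded} we bound $|B^*(b)|\le |\CC|^3+3|\CC|+m(\tau+1)=\Oh(|\CC|^3)$, using that the number of $\lambda$-variables of $\hat P_b$ is at most the number of vertices of $P_b$, that at most $\tau+1$ vertices $v$ satisfy $top(v)=b$, and that $m(\tau+1)=\Oh(|\CC|^3)$ because $|\CC|$ already depends on, and grows at least linearly in, both $m$ and $\tau$. Claim~\ref{thm:variables} follows from Lemma~\ref{lem:vertices_labelings} together with two facts: the MSO type of an $[m]$-colored $\tau$-boundaried graph determines, for each boundary position $d$ and each $i$, whether the $d$-th boundary vertex lies in $X_i$ (two graphs of the same type being compatible), and the type variables along the path of $T$ from $b$ up to $top(v)$ are consistent about the colour of $v$, this being forced at the vertex $(y,t,f)$ by the feasible-pair and feasible-triple constraints (Lemmas~\ref{lem:introduce_matter}, \ref{lem:forget_matter} and \ref{lem:only_types_matter}); hence membership of $v$ in $X_i$ is the same at $b$ and at $top(v)$, so $t_b[j]=1$ and $\nu(j,b,v,i)=1$ force $\mu(\alpha_{j^*},v,i)=1$ for the type $j^*$ active at $top(v)$, whence $y_v^i=1$.

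Finally, for claim~\ref{thm:adcdnu_computable}: the feasible sets $\FF,\FF_p,\FF_t$ and the function $\nu$ (hence $\mu$) are computed in time $f(k,m,\tau)\cdot n$ by the construction in Section~\ref{sec:ef_construction} underlying Theorem~\ref{thm:constructive_polytope_courcelle}; given these, writing down each $\hat P_b$ and assembling $A,D,C,e$ together with the $y$-equalities costs a fixed polynomial in $|\CC|$, $m$ and $\tau$ per node of $T$, which we absorb into the claimed $\Oh(|\CC|^3\cdot n)$. I expect the main obstacle to be claim~\ref{thm:tw_structure}: one must commit to a specific extension of every $P_b$ (the simplex one, so that Corollary~\ref{cor:simplex_idp_tw} delivers both decomposability and bounded treewidth at once) and then verify by hand that the prescribed bags really form a tree decomposition whose tree is exactly $T$ and that satisfies the precise containment and disjointness conditions on the type coordinates -- in particular that adding the $y_v^i$ to the bag of $top(v)$ does not make any type coordinate leak into a non-neighbouring bag. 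Claim~\ref{thm:variables} is conceptually the most delicate, being the one place where the purely polyhedral construction has to be translated back into a statement about MSO types and the free-variable sets they realize.
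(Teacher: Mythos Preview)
Your proposal is correct and follows essentially the same three-phase approach as the paper: replace each $P_b$ by its simplex extension $\hat P_b$ from Corollary~\ref{cor:simplex_idp_tw}, glue bottom-up along $T$ (invoking Lemmas~\ref{lem:glued_product}, \ref{lem:idp_glued} and~\ref{lem:tw_glued} at each step), cut out the $\rho_\varphi$-face, and append the $y$-coordinates via Lemma~\ref{lem:idp_projection}. The only differences are stylistic --- you write out the bags of $(T^*,B^*)$ explicitly and verify the decomposition axioms by hand where the paper packages this as an inductive lemma (Lemma~\ref{lem:tw_structure_intermediate}), and your argument that the face inherits constructive decomposability (tightness of a valid inequality forces tightness on every summand) is slightly cleaner and more general than the paper's, which appeals to $Q'_c$ being $0/1$.
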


Let us first comment on the meaning and usefulness of the various points of the theorem.
Point~\eqref{thm:master_courcelle:polytope} simply states that $P$ is an extended formulation of $P_\varphi(G)$.
However, there are variables $t$ which allow some interpretation of integer points of $P$ as we will discuss further (point~\eqref{thm:variables}), and there are also variables $f$, which are used to ensure constructive decomposability (point~\eqref{thm:decomp}).

There are currently two applications of this theorem, one due to Gajarský et al.~\cite{GajarskyHKO:17} and the other by Knop et al.~\cite{KnopKMT:17}.
What they have in common is viewing the system of linear inequalities $Ay + Dt + Cf = e$ with $t, f \geq 0$, which defines $P$, as an integer linear program, because they are only interested in its integer solutions, and then further viewing it as a constraint satisfaction problem (CSP).
This allows adding nonlinear constraints and optimizing nonconvex objective function.
Then, by point~\eqref{thm:tw_structure:bounded}, this CSP instance also has bounded treewidth, and thus can be efficiently solved by an old algorithm of Freuder~\cite{Freuder:90}.

Point~\eqref{thm:variables} is closely related to Lemma~\ref{lem:vertices_labelings} which we needed for the proof of Theorem~\ref{thm:polytope_courcelle}.
Intuitively, it says that we can view the variables $t$ of integer points of $P$ as an assignment from $V(T)$ to $\CCtw$ (i.e., each node is assigned a type) and that knowing a type of a node $b$ is sufficient for knowing, for each vertex $v \in B(b)$, to which free variables $X_i$ vertex $v$ belongs.
This was used by Knop et al.~\cite{KnopKMT:17} who study various extensions of the MSO logic.
In their work, they extend a CSP instance corresponding to the system defining $P$ with further constraints modeling the various extensions of MSO.
They crucially rely on point~\eqref{thm:tw_structure:bags} which allows them to add new constraints in a way which does not increase the treewidth of the resulting CSP instance by much.

Point~\eqref{thm:decomp} essentially says that integer points of $rP$ correspond to $r$-sets (i.e., sets of size $r$) of vertices of $P$.
This was used by Gajarský et al.~\cite{GajarskyHKO:17} who study the so-called shifted combinatorial optimization problem (SCO), where one wants to optimize a certain non-linear objective over $r$-sets of a given set $S$.
Gajarský et al. connect separable optimization over the $r$-dilate of a decomposable $0/1$ polyhedron $Q$ with SCO.
Thus, when $S = S_\varphi(G)$ is the set of satisfying assignments of a formula $\varphi$ on a graph $G$, one can optimize over integer points of $rP$ in order to optimize over $r$-sets of $S_\varphi(G)$.


\begin{proof}[of Theorem~\ref{thm:master_courcelle}]
%
Let us give an outline of the proof first.
The construction of the polytope $P$, and of the corresponding system of 
linear inequalities describing it, is done in three phases.
In each phase we construct and examine three related objects: a certain
polytope, 
a system of linear inequalities
defining it, and a tree decomposition of the Gaifman graph of the
system of linear inequalities.
We closely follow along the lines of the proof of
Theorem~\ref{thm:polytope_courcelle} but we modify and extend it in a way
that will make it possible to prove the additional properties. In the first phase, 
we construct a polytope
$Q'_c$, an analogue of the polytope $Q_c$ from the aforementioned proof. The vertices of
this polytope correspond to assignments of feasible types to the nodes of the
tree decomposition $T$.
In the second phase, we define a polytope $Q'_\varphi$ as a properly chosen
face of the polytope $Q'_c$, 
analogously to the choice of the face $Q_\varphi$ of the polytope $Q_c$.
The third phase consists only of introducing the variables $y_v^i$ as
a suitable linear combination of $t$ -- this way we obtain the polytope $P$
from the polytope $Q'_\varphi$.

\paragraph{Phase 1: Constructing $Q'_c$}
In the proof of Theorem~\ref{thm:polytope_courcelle}, we obtain the polytope $Q_c$ 
by gluing together polytopes $P_b$, $b\in V(T)$,
in a bottom-up fashion over nodes of a nice tree decomposition of~$G$.
Recall that every $P_b$ is a $0/1$-polytope, has dimension at most $3|\CCtw|$ 
and its number of vertices is at most $|\CCtw|^3$.
Thus, by Corollary~\ref{cor:simplex_idp_tw}, there exists a constructively decomposable
extension $P_b'$ of $P_b$ describable by inequalities of treewidth 
at most $|\CCtw|^3 + 3|\CCtw|=\Oh(|\CCtw|^3)$.

We proceed in the same way as in the construction in the proof of Theorem~\ref{thm:polytope_courcelle} 
but instead of $P_b$, we glue together the polytopes $P'_b$.
At the same time, by Lemma~\ref{lem:tw_glued}
we combine, again in the bottom-up fashion over nodes of the nice tree decomposition $(T, B)$ of~$G$,
the systems of inequalities that describe the polytopes $P_b'$ and also the
tree decompositions of the corresponding Gaifman graphs.
Let $c$ denote the root of the decomposition tree $T$ as in the proof of 
Theorem~\ref{thm:polytope_courcelle}, 
let $D't + C'f = e'$ with $t,f \geq 0$ denote the resulting system of inequalities 
describing the polytope $Q'_c$ and let $(T', B')$ denote the resulting tree decomposition of 
the Gaifman graph $G(D'~C')$.

We shall now prove that the conditions~\eqref{thm:tw_structure} hold for $(T', B')$ by induction.
Then, it will be sufficient in the later stages of the proof to show that they will not be violated.
\begin{lemma}\label{lem:tw_structure_intermediate}
For each $b \in V(T)$, there are matrices $C'_b$ and $D'_b$ such that $C' t_b + D'_b (\bar{t}, \bar{f}) = e'_b$ with $t_b, \bar{t}, \bar{f} \geq 0$ describes the intermediate polytope $Q'_b$ obtained in the bottom-up construction, $G(C'_b~D'_b)$ has a tree decomposition $(T_b, B_b)$ of treewidth $\Oh(|\CCtw|^3)$, $T_b$ is as defined before (i.e., a subtree of $T$ rooted in $b$), and for every node $a \in V(T_b)$, it holds that $\bigcup_{i \in \CCtw} \{t_a[i]\} \subseteq B_b(a)$, and $\bigcup_{i \in \CCtw} \{t_a[i]\} \cap B_b(a') = \emptyset$ for every $a' \not\in \delta_{T_a}(a)$.
\end{lemma}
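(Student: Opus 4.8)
The plan is to prove the lemma by induction over the nodes of the nice tree decomposition $T$, processed bottom‑up, following verbatim the recursive construction of the polytopes $Q'_b$ in Phase~1. It is convenient to prove a slightly stronger statement: in addition to everything claimed, the bag $B_b(b)$ at the root $b$ of $T_b$ consists of \emph{all} coordinates of $P'_b$ (in particular it contains the $|\CC|$ coordinates $t_b$). This strengthening is exactly what makes the ``moreover'' part of Lemma~\ref{lem:tw_glued} applicable one level higher. Throughout I use that, as noted in the proof of Theorem~\ref{thm:polytope_courcelle}, every $P_b$ is a $0/1$‑polytope with at most $|\CC|^3$ vertices and dimension at most $3|\CC|$, so by Corollary~\ref{cor:simplex_idp_tw} every $P'_b$ has at most $|\CC|^3+3|\CC|=\Oh(|\CC|^3)$ coordinates; in particular the trivial one‑bag tree decomposition of the Gaifman graph of the system describing $P'_b$ has width $\Oh(|\CC|^3)$.

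For the base case $b$ is a leaf of $T$, so $Q'_b=P'_b$ and $T_b$ is the single node $b$ whose bag is the whole coordinate set of $P'_b$; the treewidth bound and the (vacuous) locality conditions are then immediate. For the inductive step, suppose first that $b$ is an introduce or forget node with child $a$, so that $Q'_b=Q'_a\times_{|\CC|}P'_b$ with the gluing along $t_a$ in $Q'_a$ and $d_b$ in $P'_b$. I would invoke the tree‑decomposition‑combining construction in the proof of Lemma~\ref{lem:tw_glued}: for the first polytope $Q'_a$ take the inductively given decomposition $(T_a,B_a)$ and the node $d:=a$, whose bag contains $t_a$ by the strengthened hypothesis; for the second polytope $P'_b$ take the one‑bag decomposition augmented by a single pendant leaf $d'$ whose bag is exactly the $d_b$‑coordinates (always addable, as these form a subset of the unique bag). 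Identifying $d$ with $d'$ produces a tree decomposition of $G(C'_b~D'_b)$ whose underlying tree is $T_a$ with one new node --- carrying the full coordinate set of $P'_b$ --- attached pendant at $a$; re‑rooting this tree at that new node, which we name $b$, yields precisely the subtree of $T$ rooted at $b$ with $a$ as $b$'s child. The join case is identical but performed in the two gluing steps already used in the proof of Theorem~\ref{thm:polytope_courcelle}: first $R'_b=Q'_a\times_{|\CC|}P'_b$ along $t_a$ and $l_b$ (attaching the node $b$ above $a$ as above), then $Q'_b=R'_b\times_{|\CC|}Q'_c$ along $r_b$ in $R'_b$ and $t_c$ in $Q'_c$, where the node $d$ is the just‑created node $b$ (its bag contains $r_b$) and $d'$ is a throw‑away pendant leaf of $T_c$ whose bag is exactly the $t_c$‑coordinates; the resulting tree is $T_a$ and $T_c$ joined at the common new node $b$, i.e.\ the subtree of $T$ rooted at $b$.

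It then remains to verify the quantitative claims. That $C' t_b+D'_b(\bar{t},\bar{f})=e'_b$ with the stated nonnegativities describes $Q'_b$ is just the union of the systems describing the $P'_a$'s with the glued coordinates identified, which is exactly what the Gluing lemma (Lemma~\ref{lem:glued_product}) and the construction in Lemma~\ref{lem:tw_glued} deliver. The width of $(T_b,B_b)$ is the maximum of the width of $(T_a,B_a)$ (and $(T_c,B_c)$), the size of the one new bag, and $|\CC|$, all of which are $\Oh(|\CC|^3)$. The locality conditions for the nodes already present in $T_a$ (and $T_c$) are inherited verbatim from the inductive hypothesis since their bags are untouched; for the new node $b$ they hold because $B_b(b)$ is the whole coordinate set of $P'_b$, so it contains $t_b$, and those coordinates are fresh and appear in no other bag; and for the former root $a$ (and $c$) the only change is that the coordinates $t_a$ (resp.\ $t_c$), being identified with $d_b$ (resp.\ $l_b$, $r_b$) of $P'_b$, now additionally occur in $B_b(b)$, and $b$ is exactly the parent of $a$ (resp.\ $c$) in $T_b$, so $t_a$ occurs only in the bag of $a$ and in the bag of its neighbour $b$, as required.

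The part that actually requires care --- and where I expect the real work to be --- is precisely this last bookkeeping: keeping straight (i) which coordinates of the small polytopes $P'_b$ play the role of the glued coordinates at each step, (ii) that the auxiliary pendant bag carrying exactly those coordinates is created, used, and then absorbed so that the underlying tree comes out as the subtree of $T$ rooted at $b$ (modulo a harmless re‑rooting, since treewidth and the bag conditions are insensitive to the choice of root), and (iii) that a coordinate $t_a$ of a child ends up living only in the bag of $a$ and in the bag of its parent in $T_b$, so that locality propagates. Everything else --- decomposability, the exact shape of the matrices --- is not needed for this lemma and is established elsewhere in the proof.
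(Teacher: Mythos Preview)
Your proposal is correct and follows essentially the same inductive construction as the paper: for each node $b$ one adds a single new bag containing all coordinates of $P'_b$, attached to the root(s) of the already-built subtree decomposition(s), and verifies the width and locality conditions. Your explicit strengthening ($B_b(b)=$ all coordinates of $P'_b$) and the pendant-leaf device to force the ``moreover'' clause of Lemma~\ref{lem:tw_glued} to apply verbatim are minor expository refinements of what the paper does more tersely (the paper simply sets $B_b(b)=V(G(A_b~C_b~D_b))$ and cites Lemma~\ref{lem:tw_glued} with a somewhat loose identification of the matrices), but the resulting tree decomposition is identical in both cases.
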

Clearly, if these conditions hold for the root $c$, the conditions~\eqref{thm:tw_structure} are satisfied.

\begin{proof}
First, let $b$ be a leaf.
By Lemma~\ref{lem:tw_vertices}, there is a system $C'_b t_b + D'_b f_b = e_b$ with $f_b \geq 0$ describing $P'_b$ (which contains just one point), and there is a trivial tree decomposition of $G(C'_b~D'_b)$ with one bag containing all vertices.
This establishes the base case of the induction.

Consider an introduce or forget node $b$ of $T$ with a child $a$, and we glue the polytopes $Q'_a$ and $P'_b$.
By Lemma~\ref{lem:tw_vertices}, $P'_b$ is described by $A_b d_b + C_b t_b + D_b f_b = e_b$ with $f_b \geq 0$, and $G(A_b~C_b~D_b)$ has a trivial tree decomposition with one bag containing all its vertices.
By the induction hypothesis, $Q'_a$ is described by $C'_a t_a + D'_a (\bar{t}, \bar{f}) = e_a$ with $t_a, \bar{t}, \bar{f} \geq 0$, where $\bar{t}$ and $\bar{f}$ are the $t$ and $f$ variables associated with the nodes of $T_a$.
Thus, the following system, where the columns $\begin{smallmatrix}A_b \\ C'_a\end{smallmatrix}$ correspond to the matrix $C'_b$ and the remaining columns to the matrix $D'_b$, describes $Q'_b$:

  \begin{alignat*}{5}
    A_b d_b &~+ &~C_b t_b ~+~ &D_b f_b  &&~\geq e_b \\
    C'_a t_a & &&& ~+~D'_a (\bar{t}, \bar{f}) &~\geq e_a
  \end{alignat*}

Moreover, because there is a tree decomposition $(T_a, B_a)$ of the Gaifman graph $G(A_a~E_a)$ such that $\bigcup_{i \in \CCtw}\{t_a[i]\} \subseteq B_a(a)$,
we are in the situation of the second part of Lemma~\ref{lem:tw_glued} with $A = (A_b~D_b)$, $B = C_b$, $C = C'_a$, $D = D'_a$. 
This implies that $(T_b, B_b)$ with $B_b$ defined by $B_b(a') = B_a(a')$ for all $a' \in T_a$, and $B_b(b) = V(G(A_b~C_b~D_b))$, is a tree decomposition of $G(C'~D')$ which has the desired properties.

The situation is analogous for the join node.
By the first part of Lemma~\ref{lem:tw_glued} together with the induction hypothesis, the treewidth of $(T_b, B_b)$ is clearly at most $\Oh(|\CCtw|^3)$.
\end{proof}
   
\paragraph{Phase 2: Taking the face $Q'_\varphi$}
We take the face $Q'_{\varphi}$ of $Q_c'$
corresponding to the valid inequality $\sum_{t \in \CCtw}
\rho_\varphi(t){\cdot}z_c^t \leq 1$.
That corresponds to adding the equality $\sum_{j \in \CCtw}
\rho_\varphi(t){\cdot}t_c[j] = 1$ to the system $D't + C'f = e'$.
Let us denote $D''t + C''f = e''$ the system obtained from $D't + C'f = e'$ by adding the aforementioned equality.
Adding the new equality corresponds to adding edges to $G(D'~C')$ which are connecting vertices $t_c[j]$.
Since all variables $t_c[j]$ belong to the bag $B'(c)$, the tree decomposition conditions are not violated by adding these edges and $(T', B')$ is a tree decomposition of $G(D''~C'')$ as well.
Thus, the treewidth of $G(D''~C'')$ is the same as the treewidth of $G(D'~C')$.

We will now show that since $Q'_c$ is decomposable and a $0/1$-polytope, $Q'_\varphi$ is decomposable as well.
Let $r \in \NN$ and consider any $x \in rQ'_\varphi$.
Because $x \in rQ'_\varphi \subseteq rQ'_c$, there exist $x^1, \dots, x^r \in Q'_c$ such that $\sum_i x^i = x$.
Because $x \in rQ'_\varphi$, it satisfies $\sum_{t \in \CCtw}
\rho_\varphi(t){\cdot}z_c^t = r$, and because $Q'_\varphi$ is a $0/1$-polytope, every $x^i$ satisfies $\sum_{j \in \CCtw}
\rho_\varphi(t){\cdot}t_c[j] = 1$.
This implies that $x^i \in Q'_\varphi$ for all $i=1,\dots,r$.

\paragraph{Phase 3: Obtaining $P$ by adding variables $y$}
To obtain $P$ from $Q'_\varphi$, it remains
  to add projections $y_v^i=\sum\limits_{j \in \CCtw}
  \mu(t,v,i){\cdot}t_{top(v)}[j]$ for each $v \in V$ and $i \in \{1, \dots, m\}$.
  Now consider the system $Ay + Dt + Cf = e$ which is thus obtained.
  
  Regarding the treewidth of $G(A~D~C)$, note that the sum
  defining each $y_v^i$ only involves variables
  associated with the node $top(v)$.
  Specifically, $Ay + Dt + Cf = e$ can be written as

  \begin{alignat*}{7}
    \mathbf{0}y ~+~ &D''&t ~&+&~ &C''&f & ~&=~ e'' \\
    -Iy ~+~ &\Lambda &t ~&+&~ &\mathbf{0}&f &~&=~ \mathbf{0}
  \end{alignat*}
  where the block $(-I~\Lambda~\mathbf{0})$ corresponds to the projections to $y_v^i$.

  Fix $v \in V$.
  Then in $G(A~D~C)$ the variable $y_v^i$ corresponds to a vertex
  connected to vertices $t_{top(v)}[j]$ for which $\mu(j,v,i)
  = 1$, all of which belong to one bag $B=B'(top(v))$.
  A decomposition $(T^*, B^*)$ of
  $G(A~D~C)$ can be obtained from $(T', B')$ by adding, for every $i \in \{1, \dots, m\}$,
  the vertex corresponding to $y_v^i$ to the bag $B$. This increases
  the width of $B$ by at most $m$.
  Since $top(v)$ is distinct for every $v \in V$, this operation can
  be performed independently for every $v$, resulting in a
  decomposition of width at most $\Oh(|\CCtw|^3) + m = \Oh(|\CCtw|^3)$, satisfying the claimed properties~\eqref{thm:tw_structure}.

  Regarding constructive decomposability, we use Lemma~\ref{lem:idp_projection}. The
  polytope $Q'_\varphi$ is constructively decomposable, and there is a linear projection $\pi$
  with integer coefficients such that $P_\varphi(G) = \pi(Q'_\varphi)$. Thus
  $P = \{(\pi(t), t, f) \mid (t,f) \in Q'_\varphi\}$ is constructively decomposable, satisfying property~\eqref{thm:decomp}.

  Finally, Theorem~\ref{thm:constructive_polytope_courcelle} shows
  that $A, D, C, e$ and $\nu$ can be constructed in the claimed time, satisfying property~\eqref{thm:adcdnu_computable}, and by the definition of $\nu$ and Lemma~\ref{lem:vertices_labelings}, condition~\eqref{thm:variables} is also satisfied, completing the proof.
\end{proof}

As a corollary, we have the following.

\begin{corollary}
  Under the assumptions of Theorem~\ref{thm:polytope_courcelle}, $\xcdec(P_{\varphi}(G)) \leq f(|\varphi|, \tau) \cdot n$.
  \end{corollary}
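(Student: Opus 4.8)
The plan is to simply read off the statement from Theorem~\ref{thm:master_courcelle}. Under the assumptions of Theorem~\ref{thm:polytope_courcelle} we may, as recalled in Section~\ref{sec:prelim}, first compute a nice tree decomposition $(T,B)$ of $G$ of treewidth $\tau$ with $\Oh(n)$ nodes, so that the hypotheses of Theorem~\ref{thm:master_courcelle} are met. That theorem produces a polytope $P=\{(y,t,f)\mid Ay+Dt+Cf=e,\ t,f\geq\mathbf 0\}$ which, by point~\eqref{thm:master_courcelle:polytope}, is an extension of $P_\varphi(G)$, and which, by point~\eqref{thm:decomp}, is constructively decomposable and in particular decomposable. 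Hence $P$ is a decomposable extension of $P_\varphi(G)$, and $\xcdec(P_\varphi(G))$ is at most the size of $P$, i.e.\ the number of inequalities needed to describe it.

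So the only thing left to do is to bound that number by $f(|\varphi|,\tau)\cdot n$, which is a routine bookkeeping over the three phases of the construction of $P$. In Phase~1 the polytope $Q'_c$ is obtained by gluing together, over the $\Oh(n)$ nodes $b$ of $T$, the decomposable extensions $P'_b$ of the polytopes $P_b$ supplied by Corollary~\ref{cor:simplex_idp_tw}; since each $P_b$ is a $0/1$-polytope with at most $|\CC|^3$ vertices and dimension at most $3|\CC|$, the describing system of each $P'_b$ (the simplex-plus-projection system of Lemma~\ref{lem:tw_vertices}, including the non-negativity constraints) uses $\Oh(|\CC|^3)$ inequalities. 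Each application of the Gluing lemma (Lemma~\ref{lem:glued_product}) merely takes the union of the two describing systems, as in the proof of Lemma~\ref{lem:tw_glued}, so the system $D't+C'f=e'$, $t,f\geq\mathbf 0$ describing $Q'_c$ consists of $\Oh(|\CC|^3\cdot n)$ inequalities. Phase~2 adds the single equality $\sum_{j\in\CC}\rho_\varphi(j)\cdot t_c[j]=1$, i.e.\ two inequalities, to take the face $Q'_\varphi$, and Phase~3 adds the $nm$ projection equalities $y_v^i=\sum_{j\in\CC}\mu(j,v,i)\cdot t_{top(v)}[j]$, i.e.\ $2nm$ inequalities. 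Altogether $Ay+Dt+Cf=e$, $t,f\geq\mathbf 0$ has $\Oh(|\CC|^3 n+nm)$ inequalities.

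Finally, since $|\CC|=f(k,m,\tau)$ with $k=qr(\varphi)\leq|\varphi|$ and $m\leq|\varphi|$, this count is bounded by $f(|\varphi|,\tau)\cdot n$ for a suitable computable function $f$, so $P$ is a decomposable extension of $P_\varphi(G)$ of that size and the corollary follows. There is no real obstacle here: all the substantive work --- the gluing construction, preservation of decomposability under gluing (Lemma~\ref{lem:idp_glued}), the passage to the face, the projection step (Lemma~\ref{lem:idp_projection}), and the bounded-treewidth description --- has already been carried out in Theorem~\ref{thm:master_courcelle}; the corollary only requires observing that the \emph{number} of inequalities produced along the way is linear in $n$ with a constant depending only on $|\varphi|$ and $\tau$.
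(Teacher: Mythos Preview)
Your proposal is correct and matches the paper's (implicit) argument: the corollary is stated immediately after Theorem~\ref{thm:master_courcelle} without proof, precisely because the decomposable extension $P$ produced there, together with the routine inequality count you carry out over the three phases, yields the bound directly. Your bookkeeping of $\Oh(|\CC|^3)$ inequalities per node, $\Oh(n)$ nodes, plus the face and projection constraints is exactly the size accounting the paper leaves to the reader.
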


\subsection{Cliquewidth}

The results of this paper can be extended also to graphs of bounded \emph{clique-width}, a more general class of graphs, at the cost of restricting our logic from MSO$_2$ to MSO$_1$.
A $\gamma$-expression is a concept analogous to a tree decomposition.

\begin{theorem}\label{thm:cw}
Let $G$ be a graph of clique-width $cw(G) = \gamma$ represented as a $\sigma_1$-structure, let $\psi$ be an MSO$_1$ formula with $m$ free variables, and let
$$\displaystyle P_\psi(G) = \conv\left(\left\{y \in \{0,1\}^{nm} \ |\ y
\text{
satisfies } \psi \right\}\right). \ $$

Then $\xc(P_\psi(G)) \leq f(|\psi|,\gamma) \cdot n$ for some computable function $f$.

Moreover, if $G$ is given along with its $\gamma$-expression $\Gamma$, $P_\psi(G)$ can be constructed in time $f(|\psi|, \gamma) \cdot n$.
\end{theorem}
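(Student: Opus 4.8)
The plan is to mirror the proofs of Theorems~\ref{thm:polytope_courcelle} and~\ref{thm:constructive_polytope_courcelle}, replacing the nice tree decomposition of $G$ by the parse tree $\Gamma$ of a $\gamma$-expression. Recall that a $\gamma$-expression builds $G$ using four kinds of operations, organised in a tree $\Gamma$: a \emph{leaf} creating a single vertex with a label from $[\gamma]$, a binary \emph{disjoint union} node $\oplus$, a unary \emph{relabel} node $\rho_{i\to j}$, and a unary \emph{edge-creation} node $\eta_{i,j}$; after a standard normalisation we may assume $|V(\Gamma)| = O(n)$ with the constant depending only on $\gamma$. Each node $b$ of $\Gamma$ carries a $\gamma$-labeled subgraph $G_b$ of $G$, and, crucially, each vertex $v$ of $G$ is created at exactly one leaf $\ell(v)$ of $\Gamma$, which plays the role $top(v)$ plays in the treewidth proof (and is in fact cleaner, since no identification occurs). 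As in Subsection~\ref{subsec:MSO_types} we view an $[m]$-colored $\gamma$-labeled graph as a structure over a vocabulary with $\gamma$ unary label predicates and $m$ unary colour predicates, and we let $\equiv^{\MSO}_{k,\gamma,m}$ be $\MSO_1$-elementary equivalence up to quantifier rank $k$ on such structures; by the cliquewidth analogue of Theorem~\ref{thm:finite_index} (Courcelle, Makowsky and Rotics~\cite{CMR:98}, cf. Libkin~\cite{Libkin:04}) this relation has a finite number $|\CC|$ of classes, $|\CC| = f(k,m,\gamma)$.

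The combinatorial engine consists of the cliquewidth analogues of Lemmas~\ref{lem:only_types_matter}--\ref{lem:forget_matter}: the type of $G_1 \oplus G_2$ depends only on the types of $G_1$ and $G_2$; the type of $\rho_{i\to j}(G')$ and of $\eta_{i,j}(G')$ depends only on the type of $G'$ (as $\rho_{i\to j}$ and $\eta_{i,j}$ are quantifier-free transductions, they preserve $\MSO$ types at the same quantifier rank); and the type of a single labeled, coloured vertex is determined by its label and the set of colours containing it. Using these facts we define, exactly as in Subsection~\ref{subsec:MSO_treewidth}, the sets of \emph{feasible} types of the leaves, feasible \emph{pairs} of types of the relabel and edge-creation nodes, and feasible \emph{triples} of types of the union nodes; no ``compatibility'' condition is needed here. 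For each node $b$ of $\Gamma$ we introduce $|\CC|$ binary variables $t_b$ (plus one extra $|\CC|$-block per child, for non-leaves), and build a $0/1$-polytope $P_b$ whose vertices are the unit-vector encodings of the feasible types / pairs / triples of $b$; each $P_b$ has at most $|\CC|^3$ vertices, so $\xc(P_b) \leq |\CC|^3 = f(|\psi|,\gamma)$ by Proposition~\ref{prop:xc_vertices}. Since every type-vector satisfies $\mathbf{1}^\intercal t_b \leq 1$, we glue the $P_b$ bottom-up along the shared type coordinates using the Gluing lemma (Lemma~\ref{lem:glued_product}), obtaining at the root $c$ a polytope $Q_c$ with $\xc(Q_c) \leq \sum_{b \in V(\Gamma)} \xc(P_b) \leq f(|\psi|,\gamma)\cdot |V(\Gamma)| = f(|\psi|,\gamma)\cdot O(n)$.

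An induction over $\Gamma$, identical in structure to Lemma~\ref{lem:vertices_labelings} but invoking the three composition facts above in place of Lemmas~\ref{lem:only_types_matter}--\ref{lem:forget_matter}, shows that for every vertex of $Q_c$ the coordinate $t_b[\alpha]$ equal to $1$ certifies that $G_b$, under an appropriate $[m]$-coloring, has type $\alpha$. Passing to the face $\sum_i \rho_\psi(i)\, t_c[i] \leq 1$ restricts attention to $[m]$-colorings satisfying $\psi$, and setting $y_v^i = \sum_j \mu(\alpha_j,v,i)\, t_{\ell(v)}[j]$ for each $v \in V(G)$ and $i \in \{1,\dots,m\}$, where $\mu$ is the obvious indicator function, projects this face onto $P_\psi(G)$; this establishes the first claim. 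For the algorithmic ``moreover'' part, assume $\Gamma$ is given (and normalised in linear time if necessary). We follow Section~\ref{sec:ef_construction}: compute a bounded-size witness $W(\alpha)$ for every class $\alpha$ together with $\rho_\psi$ (Lemma~\ref{lemma:compute_type_witnesses}), using \EF-based type recognition (Corollary~\ref{cor:recog}) whose cost depends only on $k,m,\gamma$ and the witness sizes; then process $\Gamma$ bottom-up, at each node replacing the possibly large subgraph by the small witness of its type — at a leaf the type is read off the label and colour set, at a relabel or edge-creation node we apply the operation to the child's witness and recognise the result, at a union node we form $W(\alpha)\oplus W(\beta)$ and recognise its type — thereby computing all feasible sets, the indicator $\mu$, and hence $A,D,C,e$ in time $f(|\psi|,\gamma)\cdot n$.

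The main obstacle is not any single step of the construction (the glued-product machinery of Section~\ref{sec:glued_product} and the bookkeeping of Section~\ref{sec:ef_construction} transfer verbatim) but the need to re-derive, in the cliquewidth setting, the finite-index property and the ``types-only'' composition facts for $\oplus$, $\rho_{i\to j}$ and $\eta_{i,j}$ — these are exactly what underlies Courcelle's theorem for cliquewidth and are well known, but they are the analogues of the treewidth lemmas that the present paper uses as black boxes, so they must be stated carefully in the form the gluing construction requires. A secondary technical point is fixing a normal form of $\gamma$-expressions that guarantees $|V(\Gamma)| = O(n)$ with the hidden constant absorbed into $f(|\psi|,\gamma)$, so that the final bound is genuinely linear in $n$; this is standard.
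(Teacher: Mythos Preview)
Your proposal is correct, but it takes a genuinely different route from the paper. The paper does \emph{not} redo the gluing construction over the parse tree of a $\gamma$-expression; instead it invokes a black-box interpretation result (Lemma~\ref{lem:interpret_cw}, due to Gajarsk\'y et al.) that, given $G$, $\Gamma$ and $\psi$, produces a \emph{tree} $T$ with $V(G)\subseteq V(T)$ and an MSO$_1$ formula $\varphi$ such that $T\models\varphi(X)$ iff $X\subseteq V(G)$ and $G\models\psi(X)$. Then $P_\varphi(T)$ is an extended formulation of $P_\psi(G)$, and since $T$ has treewidth~$1$, Theorem~\ref{thm:polytope_courcelle} applied to $T$ and $\varphi$ finishes the proof in two lines.

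Your direct approach is exactly what the paper alludes to when it writes ``while we could prove Theorem~\ref{thm:cw} directly using notions analogous to feasible types and tuples, it is easier to use a close relationship between cliquewidth and treewidth.'' The trade-off is clear: the paper's reduction is short and reuses the treewidth machinery wholesale, at the cost of importing an external interpretation lemma; your approach is self-contained and arguably more transparent about where the bound $f(|\psi|,\gamma)$ comes from, but --- as you yourself identify as the main obstacle --- it requires restating and re-deriving the cliquewidth analogues of the finite-index and type-composition lemmas for $\oplus$, $\rho_{i\to j}$, $\eta_{i,j}$. Both approaches yield the same bound, and your handling of the secondary issues (normalising $\Gamma$ to $O(n)$ nodes, using leaves $\ell(v)$ in place of $top(v)$) is sound.
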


While we could prove Theorem~\ref{thm:cw} directly using notions analogous to feasible types and tuples, it is easier to use a close relationship between cliquewidth and treewidth:

\begin{lemma}[{\cite[Lemma 4.14]{GajarskyHKO:17}}]\label{lem:interpret_cw}
Let $G$ be a graph of clique-width $cw(G) = \gamma$
given along with its $\gamma$-expression~$\Gamma$,
and let $\psi$ be an MSO$_1$ formula.
One can, in time $\Oh(|V(G)|+|\Gamma|+|\psi|)$, compute a tree $T$ and an MSO$_1$
formula $\varphi$ such that
$V(G)\subseteq V(T)$ and
$$\mbox{for every $X$, $T\models\varphi(X)$, \,if and only if\,
	$X\subseteq V(G)$ and $G\models\psi(X)$.}$$
\end{lemma}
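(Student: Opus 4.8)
The plan is to exhibit $G$ as an \emph{MSO interpretation} inside the parse tree of the given $\gamma$-expression $\Gamma$, and then to translate $\psi$ through this interpretation. First I would take $T$ to be the parse tree of $\Gamma$, viewed as a relational structure over a finite vocabulary (depending on $\gamma$): it carries the child relation (equivalently the ancestor order) together with unary predicates $\mathrm{Create}_c$ ($c \in [\gamma]$), $\oplus$, $\rho_{i \to j}$ and $\eta_{i,j}$ ($i,j \in [\gamma]$) marking the operation performed at each node. In a $\gamma$-expression each leaf introduces exactly one vertex and each vertex is introduced exactly once, so the leaves of $T$ are in bijection with $V(G)$; I would identify each leaf with the vertex it creates, so that $V(G) \subseteq V(T)$ holds and the unary predicate $\mathrm{Leaf}(x)$ coincides with ``$x \in V(G)$''. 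Building $T$ is just parsing $\Gamma$, which takes time $\Oh(|\Gamma|)$.

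The heart of the construction is an MSO$_1$ formula $\mathrm{edge}(x,y)$ over $T$ that holds exactly when $x,y$ are leaves with $xy \in E(G)$. To build it I would first define, for each label $i \in [\gamma]$, a formula $\lambda_i(b,u)$ asserting that the leaf $u$ is a descendant of the node $b$ and carries label $i$ in the subexpression rooted at $b$. The label of $u$ at $b$ is obtained from its creation label by applying, bottom-up along the unique $u$--$b$ path, the relabelings encountered; this is a deterministic automaton computation with $\gamma$ states along a path and is therefore MSO-definable, using $\gamma$ existential set quantifiers to guess the run (a partition of the MSO-definable $u$--$b$ path into classes consistent with the creation label at $u$, with the relabel transitions at each step, and placing $b$ in class $i$). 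I would then set
\[ \mathrm{edge}(x,y)\ :=\ x \neq y\ \wedge\ \bigvee_{i,j \in [\gamma]} \exists b\, \exists b'\ \big( \eta_{i,j}(b) \wedge b' \text{ is the child of } b \wedge \lambda_i(b',x) \wedge \lambda_j(b',y) \big) \]
and prove by induction on the structure of $\Gamma$ that $\mathrm{edge}$ defines $E(G)$ on the leaves. The inductive step rests on the semantics of the clique-width operations: $\oplus$ and $\rho$ never delete edges, $\rho$ only changes labels (hence affects only \emph{later} edge additions), and $\eta_{i,j}$ adds precisely the edges between the current label-$i$ and label-$j$ vertices present in its argument, so $xy$ is an edge of $G$ iff at some $\eta_{i,j}$-node both $x$ and $y$ are already present (i.e.\ are descendants of it) carrying labels $i$ and $j$.

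With $\mathrm{edge}$ in hand, I would obtain $\varphi$ from $\psi(X_1,\dots,X_m)$ by the standard relativised translation: relativise every quantifier of $\psi$ (element and set quantifiers alike) to the leaves of $T$, replace every occurrence of the edge relation $E(x,y)$ by $\mathrm{edge}(x,y)$, keep the membership atoms $x \in X_i$ unchanged, and conjoin $\bigwedge_{i=1}^m (X_i \subseteq \mathrm{Leaf})$; if one insists on the set-variable-only normal form, this is absorbed by the standard syntactic normalisation. For a tuple $X = (X_1,\dots,X_m)$ of subsets of $V(T)$: if some $X_i \not\subseteq V(G)$ the last conjunct fails, so $T \not\models \varphi(X)$, matching that ``$X \subseteq V(G)$ and $G \models \psi(X)$'' is false; and if all $X_i \subseteq V(G)$, then since the substructure of $T$ induced on the leaves with the relation $\mathrm{edge}$ is isomorphic to $G$ by the previous paragraph, the usual interpretation lemma yields $T \models \varphi(X) \iff G \models \psi(X)$. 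The translation is syntax-directed, so $|\varphi| = \Oh(|\psi|)$ (with the constant absorbing a dependence on $\gamma$) and $\varphi$ is produced in time $\Oh(|\psi|)$; together with the $\Oh(|\Gamma|)$ for building $T$ this gives the stated bound.

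I expect the main obstacle to be the middle paragraph: recognising that ``the label of a vertex at an internal node'' is a path-automaton computation, hence expressible in MSO$_1$ with only $\Oh(\gamma)$ set quantifiers, and then carrying out the induction over $\Gamma$ that ties the syntactic definition of $\mathrm{edge}$ to the actual edge set of $G$ via the monotone, relabel-affects-only-the-future behaviour of the clique-width operations. The relativisation argument and the running-time accounting are then routine bookkeeping.
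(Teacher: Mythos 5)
The paper does not actually prove this lemma: it is imported verbatim from Gajarsk\'y et al.\ (cited as Lemma~4.14 of \cite{GajarskyHKO:17}), so there is no in-paper proof to compare against. Your proposal is the standard MSO-interpretation (transduction) argument, and it is sound: taking $T$ to be the parse tree of $\Gamma$ with leaves identified with $V(G)$, observing that the current label of a leaf at an internal node is a $\gamma$-state automaton run along the leaf-to-node path (hence MSO-definable by guessing the run with $\gamma$ set variables), defining $\mathrm{edge}(x,y)$ via the $\eta_{i,j}$-nodes, and then relativising $\psi$ to the leaves is exactly how such interpretation lemmas are proved; the induction on $\Gamma$ using the facts that $\oplus$ and $\rho$ never remove edges and that $\rho$ only affects future $\eta$-applications is the right correctness argument.

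One bookkeeping point deserves explicit care, because of how the lemma is used in Theorem~\ref{thm:cw}: the statement promises a \emph{tree} $T$ and an MSO$_1$ formula $\varphi$ that is then fed into Theorem~\ref{thm:polytope_courcelle}, whereas your $T$ is a \emph{labelled, rooted} tree carrying unary predicates $\mathrm{Create}_c$, $\oplus$, $\rho_{i\to j}$, $\eta_{i,j}$ and a child relation. You should either note that the paper's machinery applies to arbitrary finite structures whose Gaifman graph has bounded treewidth (unary predicates do not change the Gaifman graph, and the orientation/root can be supplied by a constant symbol), or encode the $O(\gamma^2)$ node labels by small gadgets (e.g.\ pendant paths of distinct lengths) so that $T$ is a plain tree; either route works, but as written the object you produce does not literally match the vocabulary the lemma advertises. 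Similarly, the conversion of $\mathrm{edge}(x,y)$ and the relativisation into the set-variable-only normal form the paper assumes should be mentioned as part of the syntax-directed translation. With these adjustments the proof is complete; the size and time bounds $|\varphi|=\Oh_\gamma(|\psi|)$ and $\Oh_\gamma(|V(G)|+|\Gamma|+|\psi|)$ follow as you state.
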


\begin{proof}[of Theorem~\ref{thm:cw}]
Let $\Gamma$ be some $\gamma$-expression of $G$.
Consider the tree $T$ of Lemma~\ref{lem:interpret_cw} derived from $G$ and $\gamma$ and the MSO$_1$ formula $\varphi$ derived from $\psi$.
By the relationship between $G$ and $T$ and $\psi$ and $\varphi$ of Lemma~\ref{lem:interpret_cw}, the polytope $P_\varphi(T)$ is an extended formulation of $P_\psi(G)$.
Thus, applying Theorem~\ref{thm:polytope_courcelle} to $T$ and $\varphi$ suffices to show that $\xc(P_\psi(G)) \leq \xc(P_\varphi(T)) \leq f(|\psi|, \gamma) \cdot n$ for some computable function $f$.
If a $\gamma$-expression of $G$ is provided, clearly the proof becomes constructive.
\end{proof}

For simplicity, we have required that $G$ comes along with its $\gamma$-expression to obtain a constructivity in Theorem~\ref{thm:cw}.
This is because it is currently not known how to efficiently construct a $\gamma$-expression
for an input graph of fixed clique-width~$\gamma$.
However, one may instead use the result of
\cite{HlinenyO08} which constructs in FPT time a so-called
rank-decomposition of $G$ which can be used as an approximation of a
$\gamma$-expression for $G$ with up to an exponential jump, but this does
not matter for a fixed parameter~$\gamma$ in theory.

\subsection{Courcelle's Theorem and Optimization.}
It is worth noting that even though linear time {\em optimization}
versions of Courcelle's theorem are known, our result provides a
linear size LP for these problems out of the box. Together with a
polynomial algorithm for solving linear programming we immediately
get the following:

\begin{theorem} \label{thm:courcelle_optimization}
Given a graph $G$ on $n$ vertices with treewidth $\tau$, a formula $\varphi \in
\textrm{MSO}_2$ with $m$ free variables and real weights $w_v^i$, for
every $v \in V(G)$ and $i \in \{1, \dots, m\}$, the problem
\[
  \mbox{opt}  \left\{\sum_{v \in V(G)} \sum_{i=1}^m w_v^i \cdot y_v^i
  \ \bigg| \ y
  \text{ satisfies } \varphi \right\}
\]
where opt is $\min$ or $\max$, is solvable
in time polynomial in the input size.
\end{theorem}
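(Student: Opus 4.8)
The plan is to derive the statement as an immediate consequence of the constructive extended formulation supplied by Theorem~\ref{thm:constructive_polytope_courcelle} (indeed Theorem~\ref{thm:master_courcelle} even hands us an explicit linear system $Ay+Dt+Cf=e$, $t,f\ge\mathbf{0}$, whose solution set $Q$ projects onto $P_\varphi(G)$) together with the polynomial-time solvability of linear programming. Treating $\varphi$ and $\tau$ as fixed, I would first apply Theorem~\ref{thm:constructive_polytope_courcelle} to compute, in time $f'(|\varphi|,\tau)\cdot n$, such a $Q$; it lives in $\RR^{nm+N}$ with $N=\Oh(f(|\varphi|,\tau)\cdot n)$ and is described by $\Oh(f(|\varphi|,\tau)\cdot n)$ inequalities with $0/1$ coefficients, so its total encoding length is $\Oh(n)$. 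Write $\pi$ for the projection onto the $y$-coordinates, so $\pi(Q)=P_\varphi(G)$.

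Next I would lift the objective. Let $\tilde w$ be the cost vector on $\RR^{nm+N}$ that equals $w_v^i$ on coordinate $y_v^i$ and is $0$ on every auxiliary coordinate. For any $q=(y,z)\in Q$ one has $\tilde w^\intercal q=\sum_{v,i}w_v^i y_v^i$, and as $q$ ranges over $Q$ its $y$-part ranges over $\pi(Q)=P_\varphi(G)$; hence
\[
  \mathrm{opt}\bigl\{\tilde w^\intercal q \ \bigm|\ q\in Q\bigr\}
  \;=\;
  \mathrm{opt}\Bigl\{\textstyle\sum_{v,i}w_v^i y_v^i \ \Bigm|\ y\in P_\varphi(G)\Bigr\}.
\]
Because $P_\varphi(G)=\conv\{y\in\{0,1\}^{nm}\mid y\text{ satisfies }\varphi\}$ is a $0/1$-polytope, a linear function attains its optimum over it at a vertex and every vertex is a satisfying assignment; so (assuming $\varphi$ has at least one satisfying assignment, equivalently $Q\neq\emptyset$, which the solver detects otherwise) the right-hand side equals the value of the combinatorial optimization problem in the theorem.

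I would then run a polynomial-time linear-programming algorithm (for instance the ellipsoid method) to compute $\max$ or $\min$ of $\tilde w^\intercal q$ over $Q$; its running time is polynomial in the number of variables and constraints of $Q$ (both $\Oh(n)$ for fixed $\varphi,\tau$) and in the bit-lengths of $\tilde w$ and of the constraint coefficients, hence polynomial in the input size. This already yields the optimal value. To additionally return an optimal \emph{integral} assignment, observe that the set of optimizers in $P_\varphi(G)$ is a face, hence again a $0/1$-polytope; starting from $y^\star=\pi(q^\star)$ for a vertex-solution $q^\star$ of $Q$ returned by the solver, one repeatedly selects a fractional entry $y^\star_{v,i}$ and re-solves the still-$\Oh(n)$-size LP over $Q$ with the extra equation $y_{v,i}=0$ or $y_{v,i}=1$, keeping whichever choice stays feasible; after at most $nm$ rounds the point is $0/1$, i.e.\ a satisfying assignment of optimal weight.

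I do not anticipate a real obstacle: the theorem is a clean corollary of the machinery already built. The only points that need care are (i) the convention that $\varphi$ and $\tau$ count as fixed (since $f$ is only computable, not polynomial, the claim cannot hold with them as genuine inputs), so that ``linear in $n$'' is what ``polynomial in the input size'' means here, and (ii) the routine self-reduction in the previous paragraph needed to extract an integral optimum from the fractional LP optimum, which works precisely because $P_\varphi(G)$ and all of its faces are $0/1$-polytopes.
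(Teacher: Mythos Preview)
Your proposal is correct and matches the paper's approach exactly: the paper does not give a formal proof at all, merely stating that the theorem follows ``out of the box'' from the linear-size extended formulation together with a polynomial-time algorithm for linear programming, and your write-up is a careful expansion of precisely that one sentence. Your added remarks---that $\varphi$ and $\tau$ must be treated as fixed parameters for the claim to be meaningful, and the optional self-reduction to recover an integral optimizer---go beyond what the paper bothers to spell out but are both correct and appropriate.
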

\section{Acknowledgements}
We thank the anonymous reviewers for pointing out existing work, a shorter proof of the Glueing
lemma, and clarifying the distinctions between MSO$_1$ and MSO$_2$, among various other improvements. 

\bibliographystyle{abbrv}
\bibliography{efs_swat}

\begin{thebibliography}{10}

\bibitem{AprileF2019}
M.~Aprile and Y.~Faenza.
\newblock Extended formulations from communication protocols in
  output-efficient time.
\newblock In {\em International Conference on Integer Programming and
  Combinatorial Optimization}, pages 43--56. Springer, 2019.

\bibitem{ALS:91}
S.~Arnborg, J.~Lagergren, and D.~Seese.
\newblock Easy problems for tree-decomposable graphs.
\newblock {\em Journal of Algorithms}, 12(2):308--340, June 1991.

\bibitem{AssadiENYZ:14}
S.~Assadi, E.~Emamjomeh-Zadeh, A.~Norouzi-Fard, S.~Yazdanbod, and
  H.~Zarrabi-Zadeh.
\newblock The minimum vulnerability problem.
\newblock {\em Algorithmica}, 70(4):718--731, 2014.

\bibitem{AT2013}
D.~Avis and H.~R. Tiwary.
\newblock On the extension complexity of combinatorial polytopes.
\newblock In {\em Proc.\ ICALP(1)}, pages 57--68, 2013.

\bibitem{BannachB19}
M.~Bannach and S.~Berndt.
\newblock Practical access to dynamic programming on tree decompositions.
\newblock {\em Algorithms}, 12(8):172, 2019.

\bibitem{BT:78}
S.~Baum and L.~E. Trotter~Jr.
\newblock Integer rounding and polyhedral decomposition for totally unimodular
  systems.
\newblock In {\em Optimization and Operations Research}, pages 15--23.
  Springer, 1978.

\bibitem{BM:15}
D.~{Bienstock} and G.~{Munoz}.
\newblock {LP approximations to mixed-integer polynomial optimization
  problems}.
\newblock {\em ArXiv e-prints}, Jan. 2015.

\bibitem{Bodlaender:93}
H.~L. Bodlaender.
\newblock A linear time algorithm for finding tree-decompositions of small
  treewidth.
\newblock In {\em Proc. STOC}, pages 226--234, 1993.

\bibitem{Bodlaender:06}
H.~L. Bodlaender.
\newblock Treewidth: characterizations, applications, and computations.
\newblock In {\em Proc. of WG}, volume 4271 of {\em LNCS}, pages 1--14.
  Springer, 2006.

\bibitem{BraunFPS15}
G.~Braun, S.~Fiorini, S.~Pokutta, and D.~Steurer.
\newblock Approximation limits of linear programs (beyond hierarchies).
\newblock {\em Math. Oper. Res.}, 40(3):756--772, 2015.

\bibitem{BraunJLP17}
G.~Braun, R.~Jain, T.~Lee, and S.~Pokutta.
\newblock Information-theoretic approximations of the nonnegative rank.
\newblock {\em Computational Complexity}, 26(1):147--197, 2017.

\bibitem{BB:14}
A.~Buchanan and S.~Butenko.
\newblock Tight extended formulations for independent set, 2014.
\newblock Available on Optimization Online.

\bibitem{CCZ:13}
M.~Conforti, G.~Cornu{\'e}jols, and G.~Zambelli.
\newblock Extended formulations in combinatorial optimization.
\newblock {\em Annals of Operations Research}, 204(1):97--143, 2013.

\bibitem{CP12}
M.~Conforti and K.~Pashkovich.
\newblock The projected faces property and polyhedral relations.
\newblock {\em Mathematical Programming}, pages 1--12, 2015.

\bibitem{Courcelle:90}
B.~Courcelle.
\newblock The monadic second-order logic of graphs {I}: Recognizable sets of
  finite graphs.
\newblock {\em Information and Computation}, 85:12--75, 1990.

\bibitem{CMR:98}
B.~Courcelle, J.~A. Makowsky, and U.~Rotics.
\newblock Linear time solvable optimization problems on graphs of bounded
  clique width.
\newblock In {\em Proc. of WG}, volume 1517 of {\em LNCS}, pages 125--150,
  1998.

\bibitem{CM:93}
B.~Courcelle and M.~Mosbah.
\newblock Monadic second-order evaluations on tree-decomposable graphs.
\newblock {\em Theoretical Computer Science}, 109(1--2):49--82, 1~Mar. 1993.

\bibitem{FaenzaFGT15}
Y.~Faenza, S.~Fiorini, R.~Grappe, and H.~R. Tiwary.
\newblock Extended formulations, nonnegative factorizations, and randomized
  com. protocols.
\newblock {\em Math. Program.}, 153(1):75--94, 2015.

\bibitem{FaenzaMP2018}
Y.~Faenza, G.~Mu{\~n}oz, and S.~Pokutta.
\newblock New limits of treewidth-based tractability in optimization.
\newblock {\em arXiv preprint arXiv:1807.02551}, 2018.

\bibitem{FioriniHW2017}
S.~Fiorini, T.~Huynh, and S.~Weltge.
\newblock Strengthening convex relaxations of 0/1-sets using boolean formulas.
\newblock {\em arXiv preprint arXiv:1711.01358}, 2017.

\bibitem{FMPTW15}
S.~Fiorini, S.~Massar, S.~Pokutta, H.~R. Tiwary, and R.~de~Wolf.
\newblock Exponential lower bounds for polytopes in combinatorial optimization.
\newblock {\em J. {ACM}}, 62(2):17, 2015.

\bibitem{Freuder:90}
E.~C. Freuder.
\newblock Complexity of {$K$}-tree structured constraint satisfaction problems.
\newblock In {\em Proc. of the 8th National Conference on Artificial
  Intelligence}, pages 4--9, 1990.

\bibitem{FrickGrohe}
M.~Frick and M.~Grohe.
\newblock The complexity of first-order and monadic second-order logic
  revisited.
\newblock {\em Annals of pure and applied logic}, 130(1-3):3--31, 2004.

\bibitem{GajarskyHT18}
J.~Gajarsk{\'{y}}, P.~Hlinen{\'{y}}, and H.~R. Tiwary.
\newblock Parameterized extension complexity of independent set and related
  problems.
\newblock {\em Discret. Appl. Math.}, 248:56--67, 2018.

\bibitem{GajarskyHKO:17}
J.~{Gajarsk{\'y}}, P.~{Hlin{\v e}n{\'y}}, M.~{Kouteck{\'y}}, and S.~{Onn}.
\newblock {Parameterized Shifted Combinatorial Optimization}.
\newblock {\em ArXiv e-prints}, Feb. 2017.

\bibitem{GPW:07}
G.~Gottlob, R.~Pichler, and F.~Wei.
\newblock Monadic datalog over finite structures with bounded treewidth.
\newblock In {\em Proc. {PODS}}, pages 165--174, 2007.

\bibitem{gruenbaum}
B.~Gr{\"u}nbaum.
\newblock {\em Convex Polytopes}.
\newblock Wiley Interscience Publ., London, 1967.

\bibitem{HlinenyO08}
P.~Hlin\v{e}n{\'{y}} and S.~Oum.
\newblock Finding branch-decompositions and rank-decompositions.
\newblock {\em {SIAM} J. Comput.}, 38(3):1012--1032, 2008.

\bibitem{Kaibel11}
V.~Kaibel.
\newblock Extended formulations in combinatorial optimization.
\newblock {\em Optima}, 85:2--7, 2011.

\bibitem{KaibelL10}
V.~Kaibel and A.~Loos.
\newblock Branched polyhedral systems.
\newblock In {\em Proc. {IPCO}}, volume 6080 of {\em LNCS}, pages 177--190.
  Springer, 2010.

\bibitem{KaibelP11}
V.~Kaibel and K.~Pashkovich.
\newblock Constructing extended formulations from reflection relations.
\newblock In {\em Proc. {IPCO}}, volume 6655 of {\em LNCS}, pages 287--300.
  Springer, 2011.

\bibitem{KLLL:15}
L.~Kaiser, M.~Lang, S.~Le{\ss}enich, and C.~L{\"o}ding.
\newblock {A Unified Approach to Boundedness Properties in MSO}.
\newblock In {\em Proc. of CSL}, volume~41 of {\em LIPIcs}, pages 441--456,
  2015.

\bibitem{Kloks:94}
T.~Kloks.
\newblock {\em Treewidth: Computations and Approximations}, volume 842 of {\em
  LNCS}.
\newblock Springer, 1994.

\bibitem{KLR:11}
J.~Kneis, A.~Langer, and P.~Rossmanith.
\newblock Courcelle's theorem - {A} game-theoretic approach.
\newblock {\em Discrete Optimization}, 8(4):568--594, 2011.

\bibitem{KnopKMT:17}
D.~Knop, M.~Kouteck{\'y}, T.~Masa{\v{r}}{\'i}k, and T.~Toufar.
\newblock Simplified algorithmic metatheorems beyond {MSO}: Treewidth and
  neighborhood diversity, Mar.~1 2017.
\newblock Accepted to WG 2017.

\bibitem{KV:98}
P.~G. Kolaitis and M.~Y. Vardi.
\newblock Conjunctive-query containment and constraint satisfaction.
\newblock In {\em Proc. {PODS}}, 1998.

\bibitem{KK:15}
P.~Kolman and M.~Kouteck{\'y}.
\newblock Extended formulation for {CSP} that is compact for instances of
  bounded treewidth.
\newblock {\em Electr. J. Comb}, 22(4):P4.30, 2015.

\bibitem{Koutecky:2017}
M.~Koutecky.
\newblock {\em {Treewidth, Extended Formulations of CSP and MSO Polytopes, and
  their Algorithmic Applications}}.
\newblock PhD thesis, Charles University, 2017.

\bibitem{Kreutzer:08}
S.~Kreutzer.
\newblock Algorithmic meta-theorems.
\newblock In {\em Proc. of {IWPEC}}, volume 5018 of {\em LNCS}, pages 10--12.
  Springer, 2008.

\bibitem{LangerRRS12}
A.~Langer, F.~Reidl, P.~Rossmanith, and S.~Sikdar.
\newblock Evaluation of an mso-solver.
\newblock In D.~A. Bader and P.~Mutzel, editors, {\em Proceedings of the 14th
  Meeting on Algorithm Engineering {\&} Experiments, {ALENEX} 2012, The Westin
  Miyako, Kyoto, Japan, January 16, 2012}, pages 55--63. {SIAM} / Omnipress,
  2012.

\bibitem{LRRS:14}
A.~Langer, F.~Reidl, P.~Rossmanith, and S.~Sikdar.
\newblock Practical algorithms for {MSO} model-checking on tree-decomposable
  graphs.
\newblock {\em Computer Science Review}, 13-14:39--74, 2014.

\bibitem{Laurent:09}
M.~Laurent.
\newblock Sums of squares, moment matrices and optimization over polynomials.
\newblock In {\em Emerging applications of algebraic geometry}, pages 157--270.
  Springer, 2009.

\bibitem{LeeRS15}
J.~R. Lee, P.~Raghavendra, and D.~Steurer.
\newblock Lower bounds on the size of semidefinite programming relaxations.
\newblock In {\em Proc. {STOC}}, pages 567--576, 2015.

\bibitem{Libkin:04}
L.~Libkin.
\newblock {\em Elements of Finite Model Theory}.
\newblock Springer-Verlag, Berlin, 2004.

\bibitem{Margot_thesis}
F.~Margot.
\newblock {\em Composition de polytopes combinatoires: une approche par
  projection}.
\newblock PhD thesis, {\'E}cole polytechnique f{\'e}d{\'e}rale de Lausanne,
  1994.

\bibitem{Martin:1990}
R.~K. Martin, R.~L. Rardin, and B.~A. Campbell.
\newblock Polyhedral characterization of discrete dynamic programming.
\newblock {\em Oper. Res.}, 38(1):127--138, Feb. 1990.

\bibitem{OmranSZ:2013}
M.~T. Omran, J.-R. Sack, and H.~Zarrabi-Zadeh.
\newblock Finding paths with minimum shared edges.
\newblock {\em J. Comb. Optim}, 26(4):709--722, 2013.

\bibitem{Rosenthal:1973}
R.~W. Rosenthal.
\newblock A class of games possessing pure-strategy nash equilibria.
\newblock {\em International Journal of Game Theory}, 2(1):65--67, 1973.

\bibitem{Schrijver86}
A.~Schrijver.
\newblock {\em Theory of Linear and Integer Programming}.
\newblock Wiley-Interscience Series in Discrete Mathematics. John Wiley \&
  Sons, 1986.

\bibitem{Sellmann:08}
M.~Sellmann.
\newblock The polytope of tree-structured binary constraint satisfaction
  problems.
\newblock In {\em Proc. {CPAIOR}}, volume 5015 of {\em LNCS}, pages 367--371.
  Springer, 2008.

\bibitem{SML:07}
M.~Sellmann, L.~Mercier, and D.~H. Leventhal.
\newblock The linear programming polytope of binary constraint problems with
  bounded tree-width.
\newblock In {\em Proc. {CPAIOR}}, volume 4510 of {\em LNCS}, pages 275--287.
  Springer, 2007.

\bibitem{VanderbeckWolsey2010}
F.~Vanderbeck and L.~A. Wolsey.
\newblock Reformulation and decomposition of integer programs.
\newblock In {\em 50 {Y}ears of {I}nteger {P}rogramming 1958-2008}, pages
  431--502. Springer, 2010.

\bibitem{Wolsey11}
L.~A. Wolsey.
\newblock Using extended formulations in practice.
\newblock {\em Optima}, 85:7--9, 2011.

\bibitem{Yannakakis91}
M.~Yannakakis.
\newblock Expressing combinatorial optimization problems by linear programs.
\newblock {\em J. Comput. Syst. Sci.}, 43(3):441--466, 1991.

\bibitem{ziegler}
G.~M. Ziegler.
\newblock {\em Lectures on Polytopes}, volume 152 of {\em Graduate Texts in
  Mathematics}.
\newblock Springer-Verlag, 1995.

\end{thebibliography}

\clearpage

\appendix

\section{Definitions of MSO$_1$ and MSO$_2$ formulae}

\begin{definition}[MSO formulae over a vocabulary] 
Given a vocabulary $\sigma$ consisting of relation symbols $P_1, P_2,\ldots$ of
arities $r_1, r_2, \ldots$ and constants $c_1, c_2, \dots$, the set of MSO$[\sigma]$-formulae is the smallest set of formulae 
such that 
\begin{enumerate}
\item For every relation symbol $P_i$ and any $r_i$-tuple $y_1, \dots, y_{r_i}$ where each $y_j$ is either a first-order formula or a constant,
$R(y_1, \ldots, y_{r_i})$ is an MSO$[\sigma]$ formula.
\item $y\in X$  is an MSO$[\sigma]$ formula for every first order variable $y$ and every constant $y$ and every second-order variable $X$.
\item If $\phi_1$ and $\phi_2$ are MSO$[\sigma]$ formulae then $\phi_1 \wedge \phi_2$, $\phi_1 \vee \phi_2$ and $\neg \phi_1$ are MSO$[\sigma]$ formulae.
\item If $\phi(x)$ is an MSO$[\sigma]$ formula and $x$ is a first-order variable then $\exists x \phi(x)$ and $\forall x \phi(x)$ are MSO$[\sigma]$ formulae. 
\item If $\phi(X)$ is an MSO$[\sigma]$ formula and $X$ is a second-order variable then $\exists X \phi(X)$ and $\forall X \phi(X)$ are MSO$[\sigma]$ formulae.   
\end{enumerate}
\end{definition}

MSO$_1$ is the set of MSO formulae over the vocabulary $\sigma_1$ and MSO$_2$ is the
set of MSO formulae over the vocabulary $\sigma_2$; recall that $\sigma_1$
is the vocabulary consisting of a single binary predicate $E$, and $\sigma_2$ is the 
vocabulary consisting of two unary predicates $L_V$ and $L_E$ and a binary
predicate $E_I$.

\end{document}